\documentclass[12pt,a4paper]{article}
\usepackage{graphicx,xcolor,textpos}

\graphicspath{{./img/}{./Images/}}

\usepackage{}
\usepackage{helvet}
\usepackage[english]{babel}
\usepackage[normalem]{ulem}
\usepackage{amsmath}
\usepackage{amsthm}
\usepackage{thmtools,thm-restate}
\usepackage{bbm}
\usepackage{amssymb}
\usepackage{hyperref}
\usepackage{relsize}
\usepackage[left=1in, right=1in, top=1in, bottom=1in]{geometry}
\usepackage{physics}
\usepackage[shortlabels]{enumitem}
\usepackage{mathtools}
\usepackage{changepage}
\usepackage{caption}
\usepackage{subcaption}
\usepackage{verbatim}
\usepackage{url}
\usepackage{standalone}
\usepackage{tikz}
\usetikzlibrary{calc,patterns,angles,quotes}
\usepackage{dsfont}
\usepackage{mathrsfs}
\usepackage[affil-it]{authblk}
\usepackage{titlesec}
\usepackage{subfloat}

\makeatletter
\newcommand\Label[1]{&\refstepcounter{equation}(\theequation)\ltx@label{#1}&}
\makeatother

\newcommand{\stkout}[1]{\ifmmode\text{\sout{\ensuremath{#1}}}\else\sout{#1}\fi}

\DeclareMathOperator{\C}{\mathbb{C}}
\DeclareMathOperator{\End}{End}

\newcommand{\id}{\mathrm{id}}

\let\originalleft\left
\let\originalright\right
\renewcommand{\left}{\mathopen{}\mathclose\bgroup\originalleft}
\renewcommand{\right}{\aftergroup\egroup\originalright}
\let\al\alpha

\newcommand{\KK}{\mathcal K}

\newcommand{\HH}{\mathcal H}
\newcommand{\ZZ}{\mathbb Z}

\renewcommand{\AA}{\mathcal A}

\newcommand{\NN}{\mathbb{N}}

\newcommand{\caA}{\mathcal{A}}
\newcommand{\caB}{\mathcal{B}}
\newcommand{\caC}{\mathcal{C}}
\newcommand{\caD}{\mathcal{D}}

\newcommand{\caP}{\mathcal{P}}
\newcommand{\caU}{\mathcal{U}}
\newcommand{\bbZ}{\mathbb{Z}}

\newcommand{\ind}{\mathrm{ind}}

\renewcommand{\AA}{\mathcal A}

\newcommand{\spacing}{\vspace{.5cm}}

\DeclareMathOperator{\I}{\mathbf{1}}
\DeclareMathOperator{\Z}{\mathbb{Z}}

\newcommand{\Ad}[1]{\textrm{Ad}\left(#1\right)}
\newcommand{\Aut}[1]{\textrm{Aut}\left(#1\right)}

\newtheoremstyle{exampstyle}
{15pt} 
{15pt} 
{\itshape} 
{} 
{\bfseries} 
{.} 
{.5em} 
{} 

\theoremstyle{exampstyle}

\newtheorem{theorem}{Theorem}[section]

\newtheorem{lemma}[theorem]{Lemma}
\newtheorem{remark}[theorem]{Remark}

\newtheorem{proposition}[theorem]{Proposition}

\DeclareMathOperator{\dist}{\mathrm{dist}}

\DeclareMathOperator{\QCA}{\mathsf{QCA}}
\DeclareMathOperator{\Sym}{\mathsf{Sym}}
\DeclareMathOperator{\sEnt}{\mathsf{sEnt}}
\DeclareMathOperator{\SE}{\mathsf{SE}}
\DeclareMathOperator{\sSPT}{\mathsf{SPT}^{\mathsf{SE}}}
\DeclareMathOperator{\sSta}{\mathsf{sSta}}

\newcommand{\J}{y}
\newcommand{\Left}[1]{L_{#1}}

\usepackage{tikz-cd}
\usetikzlibrary{positioning,fit,calc}
\usepackage{graphicx} 
\usepackage{color}
\usepackage{tikz}
\usepackage{tikz-cd}
\usetikzlibrary{decorations.pathreplacing, calligraphy}

\usepackage{tcolorbox}

\usepackage{subcaption} 
\usetikzlibrary{positioning}
\usetikzlibrary{arrows,decorations.markings}
\usetikzlibrary{shapes.geometric}

\title{A complete classification of 2d symmetry protected states with symmetric entanglers}

\author[1]{Alex Bols \thanks{email: \href{abols01@phys.ethz.ch}{abols01@phys.ethz.ch}}}
\author[2]{Wojciech De Roeck \thanks{email: \href{wojciech.deroeck@kuleuven.be} {wojciech.deroeck@kuleuven.be}}}
\author[3]{Michiel De Wilde \thanks{email: \href{michiel.dewilde@ist.ac.at}{michiel.dewilde@ist.ac.at}}}
\author[2]{Bruno de O. Carvalho \thanks{email: \href{bruno.oliveira@kuleuven.be}{bruno.oliveira@kuleuven.be}}}

\affil[1]{Institute for Theoretical Physics, ETH Z{\"u}rich}
\affil[2]{Instituut voor Theoretische Fysica, KU Leuven}
\affil[3]{Institute of Science and Technology Austria}

\date{\today}

\begin{document}

\maketitle

\begin{abstract}
We consider symmetry protected topological states of $2d$ quantum spin systems, with a finite symmetry group $G$.
It has been conjectured that such states are classified by the cohomology group $H^3(G,U(1))$, but the completeness of this classification is an open problem. We restrict ourselves to symmetry protected topological states that can be prepared from a product state by a symmetric entangler. For this class of states, we prove that the classification by $H^3(G,U(1))$ is complete. 
\end{abstract}




\tableofcontents

\section{Introduction}

The topology of gapped ground states of quantum lattice systems has grown into a well-developed subject.   The present paper focuses on states that are intrinsically trivial, i.e.\ deformable to product states, but that can fail to be deformable to product states if an on-site global symmetry is required to hold.  These are so-called symmetry protected topological (SPT) states \cite{guwen2009, Chen_2013,chenguwen2010, chen_gu_wen_2011, schuch2011MatrixProduct, pollman2012symmetry}. Moreover, since we are dealing with spins systems, the states are called \emph{bosonic}.   As is often done in the mathematical literature, we will not literally consider ground states of a local Hamiltonian, but rather short-range entangled states, as this is a more direct approach that avoids irrelevant complications. 
More precisely,  we consider states $|\psi\rangle$ of a spin system on a $d$-dimensional lattice that are of the form
\begin{equation}\label{eq: spt state first}
   |\psi\rangle = C|\phi\rangle,
\end{equation}
where $\ket\phi$ is a product state and $C$ is a unitary implementing a finite-depth quantum circuit (FDQC), see Figure \ref{fig:intro-fdqc}.

    \begin{figure} [h]
    \centering
    \includegraphics[width=0.7\linewidth]{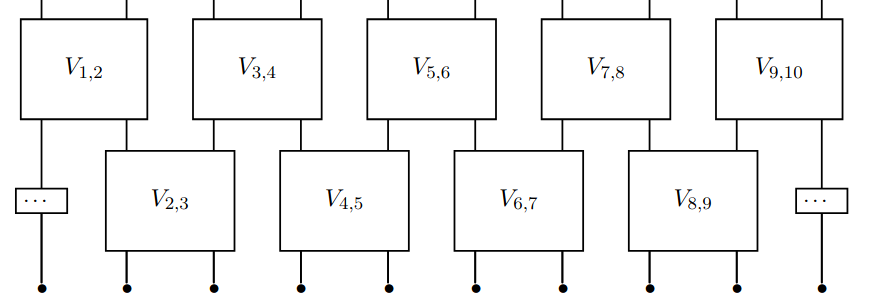}
    \caption{A finite-depth quantum circuit (FDQC) on a spin chain. Each $V_{i,i+1}$ is a unitary gate acting non-trivially on sites $i,i+1$. The dots represent a pure product state $\ket \phi = \otimes_{i \in \ZZ} \ket{\phi_i}$, where each $\ket{\phi_i}$ is a unit vector in a finite-dimensional on-site Hilbert space $\HH_i$.}
    \label{fig:intro-fdqc}
    \end{figure}

To bring symmetry into the setup, we assume that a unitary on-site representation $g \mapsto U_g$ of a finite group $G$ is given, and that $|\psi\rangle$ is invariant under this symmetry
\begin{equation}\label{eq: spt symmetry first}
|\psi\rangle \propto U_g |\psi\rangle, \qquad \forall g \in G.
\end{equation}

A state $|\psi\rangle$ satisfying \eqref{eq: spt state first} and \eqref{eq: spt symmetry first} is called a ($d$-dimensional) SPT state.
In reality, the above formulae featuring state vectors and unitary operators are not-well defined in infinite volume \footnote{One can save the day by considering a family of large but finite underlying lattices which increase towards $\bbZ^d$, while the range of the FDQC is not allowed to grow.}, and one should rather talk about states on observable $C^*$-algebras and $C^*$-algebra automorphisms, as we do in the present paper.  However, for the sake of recognisability, we ignore this complication in this introduction. 

We now proceed to the problem of classification of these SPT states. To do this, we introduce an equivalence relation on SPT states: $|\psi_1\rangle,|\psi_2\rangle$ are equivalent if and only if there exists a FDQC $S$ \emph{with symmetric gates}  such that $|\psi_2\rangle=S|\psi_1\rangle$.  
This approximates the physically relevant notion of \emph{adiabatic equivalence}, which is the same thing as asking equivalence under time evolutions generated by local and symmetric interactions \cite{hastings2005quasiadiabatic, hastings2004lieb, bachmann2012automorphic}. Such evolutions are well approximated by FDQCs with symmetric gates.

Given a pair of states $|\psi_1\rangle,|\psi_2\rangle$ on $d$-dimensional spin systems, we will also need the  \emph{stacked} state, which corresponds mathematically to taking the tensor product 
$$|\psi_1\rangle \otimes |\psi_2\rangle.$$
This also allows us to introduce the concept of \emph{stabilization}; we say two states $|\psi_1\rangle,|\psi_2\rangle$ are stably equivalent if and only if there are $G$-invariant product states $$|\phi_1\rangle,|\phi_2\rangle$$ such that the stacked states $|\psi_1\rangle \otimes |\phi_1\rangle$ and $|\psi_2\rangle \otimes |\phi_2\rangle$ are equivalent.  Note that stable equivalence allows to compare states with different on-site Hilbert spaces. 
The set of $d$-dimensional  SPT states modulo stable equivalence, equipped with the stacking operation, is a \emph{monoid} whose identity element is the class of $G$-invariant product states. The problem of classification of SPTs is the problem of computing this monoid.

Since \cite{ogatatimereversal,schuch2011MatrixProduct, chen_gu_wen_2011, Chen_2013, else2014classifying, quella}, we know that this monoid contains an isomorphic copy of the cohomology group $H^{d+1}(G,U(1))$, with $d$ the spatial dimension of the underlying lattice (see also \cite{ogata2021,Freed2014,Freed2016,kapustin_symmetry_2014,kitaevtalk,wen.anomalies,Xiong_2018,Ragone:2024wie,tasaki_topological_2018,sopenko2021index}).  
It is know that for $d\geq 3$, the classification by group cohomology is not complete, see  \cite{burnell2014,vishwanath}, but one does believe that it is complete for $d < 3$. By completeness of the classification, we mean that the full monoid is isomorphic to  the cohomology group $H^{d+1}(G,U(1))$.
For $d=0$, i.e.\ for a system without any spatial structure, the completeness of this classification is easily verified, and for $d=1$, it has been proven in \cite{ogata2019classification,schuch2011MatrixProduct,kapustin2021classification,de_o_carvalho_classification_2025}. This leaves $d=2$ as the outstanding problem that motivates the present paper.

We  prove that the classification is complete if one restricts the class of SPT states to a certain subclass: SPT states with \emph{symmetric entanglers}.  We say that an SPT state $|\psi\rangle$ admits a symmetric entangler if, in the representation of \eqref{eq: spt state first}, one can choose the unitary circuit $C$ and the product states $\phi$ to all be symmetric. That is, they  satisfy
\begin{equation}\label{eq: symmetric entangler first}
    U_g |\phi\rangle \propto |\phi\rangle, \qquad    [C,U_g]=0,\qquad \forall g\in G,
\end{equation}
in which case \eqref{eq: spt symmetry first} is automatically satisfied. To avoid confusion, note that \eqref{eq: symmetric entangler first} does \emph{not} imply that the FDQC consists of symmetric gates. Indeed, if the FDQC consisted of symmetric gates, the resulting SPT would simply be equivalent to a product state, hence trivial with respect to the classification.

At the time of writing, is not clear to us whether the class of SPT states is  strictly larger than the class of SPT states with symmetric entanglers. For $d=0$ and $d=1$, these classes coincide, and for $d=2$, most authors seem   to believe that they coincide.  The present paper does not address this question.  Instead, and as already mentioned above, we prove (Theorem \ref{thm.main}) 

\begin{center}
\begin{tcolorbox}[
  width=0.9\linewidth,
  colback=white,
  colframe=gray,
  title=Classification of SPTs with symmetric entanglers,
  fonttitle=\bfseries,
]
 The monoid $\sSPT$ of stable equivalence classes of SPT states with symmetric entanglers in 2d is isomorphic to $H^3(G,U(1))$.
\end{tcolorbox}
\end{center}

The proof proceeds by first classifying the symmetric entanglers themselves.   Two symmetric entanglers $C_1,C_2$ are called equivalent if and only if there is a FDQC $S$ with symmetric gates such that 
$$
C_1 = C_2 S.
$$
Since $S^{-1}$ is also an FDQC with symmetric gates, this relation is symmetric.    Two symmetric entanglers $C_1,C_2$ (i.e.\ FDQCs satisfying $[C_i,U_g]=0$) are called stably equivalent if and only if there are auxiliary spin systems, with respective identity operators $1_1,1_2$, such that $C_1\otimes 1_1$ is equivalent to $C_2\otimes 1_2$.  Note that, just as for SPT states, the symmetric entanglers form a monoid with respect to the stacking operation.
We prove (Proposition \ref{prop.classification-of-symmetric-entanglers})

\begin{center}
\begin{tcolorbox}[
  width=0.9\linewidth,
  colback=white,
  colframe=gray,
  title=Classification of symmetric entanglers,
  fonttitle=\bfseries,
]
The monoid $\SE$ of stable equivalence classes of symmetric entanglers in 2d is isomorphic to $H^3(G,U(1))$.
\end{tcolorbox}
\end{center}

The corresponding result in $0d$ is again straightforward. The  result in $1d$ was proven in \cite{bols2021classification,gong2020classification} and we review it in Section \ref{sec.symmetric-entanglers-1d}.


Let us now explain how this result implies the classification of SPTs with symmetric entanglers. It is rather intuitive that the monoid of SPT states with symmetric entanglers cannot be larger than the 
monoid of symmetric entanglers. Indeed, fixing a $G$-invariant product state $\phi$, the map
$$
C\mapsto  |\psi\rangle= C|\phi\rangle,
$$
from symmetric entanglers to SPT states restricts to a surjective homomorphism of monoids. The short argument is essentially given in Lemma \ref{lem.injective-spt}.  On the other hand, a homomorphism from $\sSPT$ to the group $H^3(G, U(1))$ has already been constructed in \cite{else2014classifying,sopenko2021index,ogata2021h3gmathbb,zhang2023topological}. By constructing explicit examples, we moreover show that this homomorphism is also surjective (Lemma \ref{lem.surjectivity-2d}). Therefore, having proved the above classification of  symmetric entanglers, we have the following surjective homomorphisms (one-sided arrows) and isomorphism (two-sided arrow)
\begin{center}
\begin{tcolorbox}[
  width=0.9\linewidth,
  colback=white,
  colframe=gray,
  title=Relation $\sSPT$ and $\mathrm{SE}$ (symmetric entanglers),
    fonttitle=\bfseries
]
$$
H^3(G, U(1)) \Longleftrightarrow \mathrm{SE}   \Longrightarrow \sSPT \Longrightarrow H^3(G, U(1)).
$$
\end{tcolorbox}
\end{center}
 These relations 
 of course imply  $H^3(G, U(1)) \simeq \sSPT$, the complete classification of SPT states with symmetric entanglers by $H^3(G, U(1))$.





The proof of the classification of symmetric entanglers takes up the bulk of the present paper.  Also here, it is rather well-understood that there is a surjective homomorphism  $\SE \longrightarrow H^3(G, U(1))$ and the remaining challenge is to show its injectivity. 
In order to to this, we consider a symmetric entangler $E$ with trivial index, i.e.\ mapping to the trivial group cohomology class under this homomorphism, and show that $E$ is stably equivalent to the identity. To that end, we build a \emph{symmetric blend} of $E$ with the identity; a symmetric entangler that reduces to $E$ far to the left, and to identity far to the right of some vertical axis. This is done in Section \ref{sec: 2d classification}. The key step relies on the technique introduced in \cite{Else_2014}, whereby a symmetric entangler in $d$ dimensions gives rise to an anomalous symmetry action on a $d-1$-dimensional spin system that is the boundary of a $d$-dimensional spin system. In a recent paper \cite{bols2025classificationlocalitypreservingsymmetries}, we classified $1$-dimensional anomalous symmetry actions and this makes the construction of our symmetric blend tick. 



Finally, here is an outline of our paper.
The set-up and the main theorem on the classification of 2d SPTs with symmetric entanglers are introduced in Section \ref{sec: setup}. In the same section, we reduce the main theorem to the classification of 2d symmetric entanglers. 
In Section \ref{sec.symmetric-entanglers-1d}, we review the techniques used for the classification of 1d symmetric entanglers. These techniques and results will be needed for the classification of 2d symmetric entanglers. The classification of 2d symmetric entanglers is presented in Section \ref{sec: 2d classification}. 
In Appendix \ref{sec.cohomology}, we give a minimal introduction to group cohomology. In Appendix \ref{appendix.LPS}, we connect the classification results of LPS in the literature to our set-up, and, in Appendix \ref{appendix.swindle}, we include the well-known Eilenberg-Mazur swindle argument in our setting for completeness.

\section{Setup and Main Result}\label{sec: setup}
\subsection{Spin systems} \label{sec.spin-systems}

A $d$-dimensional spin system assigns to all sites $j \in \Z^d$ an on-site $d_j$-dimensional Hilbert space $\HH_j$ with associated matrix algebra $\caA_j \simeq \End(\C^{d_j})$. We assume that there is a $d_{\max}$ such that $d_j \leq d_{\max}$ for all $j \in \Z^d$. For finite $S \subset_f \Z^d$ we can form the algebra $\caA_S = \otimes_{j \in S} \caA_j$ of observables supported on $S$. If $S_1 \subset S_2$ are finite subsets of $\Z^d$ then there is a natural inclusion $\caA_{S_1} \hookrightarrow \caA_{S_1}$ given by tensoring with the identity of $\caA_{S_2 \setminus S_1}$. This makes the $\{ \caA_S \}_{S \subset_f \Z^d}$ into a directed system of $\rm C^*$-algebras whose limit $\caA$ is the quasi-local algebra of the spin system. For any $X \subset \Z^2$ (possibly infinite), the subalgebra $\caA_X \subset \caA$ of observables supported on $X$ is the direct limit of $\{ \caA_S \}_{S \subset_f X}$. We refer to standard references \cite{bratteliII,simon2014statistical,naaijkens2017quantum,brunoamandaI,nachtergaele.sims.ogata.2006} for more background and details.

We emphasize that the assignment $\caA_j$ of on-site algebras is part of the data of what we mean by `spin system'. Later, we will encounter cases where a quasi-local algebra $\caA$ is cast as a spin system in two different ways, formally $\caA = \otimes_j \caA_j  = \otimes_j \caA'_j$ where the subalgebras $\caA_j, \caA_j'$ are not identical. Nevertheless, we will often denote a spin system by its quasi-local algebra when the assignment of on-site algebras is clear from context. If $d=1$, we will usually refer to spin systems as \textit{spin chains}.

For any $X \subset \Z^d$ we write $X^{(r)} := \{ j \in \Z^d \, : \, \dist(j, X) \leq r \}$
 for the \textit{$r$-fattening} of $X$.
 For simplicity, we consistently use the terminology \emph{automorphism, isomorphism, etc.} for $^*$-automorphisms, $^*$-isomorphisms, etc. between $C^*$-algebras. We denote by $\Aut{\AA}$ the automorphism group of $\AA$.

\subsubsection{Stacking of spin systems} \label{eq.stacking-spin-systems}

Given a pair of spin systems $\caA,
\caA'$, we consider the \emph{stacked} spin system 
$$\caA \otimes \caA',$$ 
whose on-site algebras are $\AA_j \otimes \AA_j'$, corresponding to the on-site Hilbert spaces  $\HH_j \otimes \HH'_j$.







\subsection{Quantum cellular automata and finite depth quantum circuits} \label{sec.qca}

A \emph{quantum cellular automaton} (QCA) on a spin system $\caA$ is an automorphism $\al \in \Aut{\AA}$ for which there exists $r \geq 0$ such that $\alpha(\caA_{X}) \subset  \caA_{X^{(r)}} $ for any $X\subset \Z^d$. The \emph{range} of a QCA is the smallest $r$ for which this holds. The inverse of a QCA of range $r$ is also a QCA of range $r$ (\cite[Lemma 3.1]{freedman2020classification}), a fact which we will use without comment throughout the paper. The quantum cellular automata on $\caA$ therefore form a subgroup of $\Aut{\caA}$ which we denote by $\QCA(\caA)$. The notation $(\alpha, \AA)$ is used to denote a QCA $\alpha$ acting on the spin system $\AA$. We denote by $(\id,\AA)$ the identity automorphism on $\caA$.

A \emph{block} is a subset $I = [m_1,n_1] \times \dots \times [m_d,n_d] \subset \ZZ^d$,  for integers $m_i \le n_i$, $i=1,\ldots,d$.  We write $\abs{I}$ for the cardinality of $I$. Let $\{ I_a \}_{a \in \Z^d}$ be a partition of $\Z^d$ into blocks $I_a \subset \Z^d$ with $\max_a|I_a| <\infty$. Suppose we have for each $a \in \Z^d$ a unitary $V_a \in \caA_{I_a}$. Then we can define a QCA $(\gamma,\AA)$ by the infinite product
\begin{equation} \label{eq.def-block-partitioned-QCA}
\gamma=\bigotimes_{a \in \Z^d} \Ad{V_a}.
\end{equation}
This yields a well-defined automorphism, as one can first define its action on $\caA_X$ for any finite $X \subset\Z^d$ and then extend by density. 
Any QCA of this form is called a \emph{block partitioned} QCA. 
The unitaries $V_a$ are called the \emph{gates} of the block partitioned QCA. A \emph{depth $n$ quantum circuit} is a composition $\gamma = \gamma_n \circ \gamma_{n-1} \circ \dots \circ \gamma_1$ of $n$ block partitioned QCAs $\{ \gamma_i \}_{i = 1}^n$. The block partitioned QCA $\gamma_i$ is called the $i^{\rm{th}}$ \emph{layer} of the circuit. 
A \emph{finite depth quantum circuit} (FDQC) $(\gamma,\AA)$ is a QCA that can be cast as a depth $n$ quantum circuit on $\AA$, for some $n<\infty$.



\subsection{Symmetries} \label{sec.symmetries}

Let $G$ be a finite group that is fixed throughout the paper, and let $e \in G$ denote the identity element. A \emph{locality preserving {$G$-}symmetry} (LPS) $(\beta, \AA)$ on $\caA$ is a group homomorphism $\beta : G \rightarrow \QCA(\caA)$. That is, for each $g \in G$ we have a quantum cellular automaton $\beta_g$ such that $\beta_e = \id$ and $\beta_g\beta_h = \beta_{gh}$ for all $g, h \in G$. The range of a locality preserving symmetry is the largest range of its component QCAs. A particular case of LPS is an \emph{on-site symmetry}, which is an LPS of the form 
$$
\beta_g =\otimes_j \Ad{U_j(g)}
$$
with $g \mapsto U_j(g)$ a unitary representation of $G$ on $\HH_j$ for each $j \in \Z^d$.  

We denote by $\Sym_{G,d}$ the set of LPSs on spin systems of dimension $d$. 

\subsubsection{Stable equivalence of LPSs} \label{subsec:classification-lps}

Two LPSs $(\beta,\AA)$ and $(\beta',\AA)$ are \textit{equivalent} if there is an FDQC $(\gamma,\AA)$ such that 
\begin{equation}
    \beta'_g = \gamma^{-1} \circ \beta_g \circ \gamma
\end{equation}
holds for all $g\in G$. 
Locality preserving symmetries $(\beta,\AA)$ and $(\beta',\AA')$ can be \textit{stacked} into an LPS (recall Subsection \ref{eq.stacking-spin-systems}) $(\beta \otimes \beta', \AA \otimes \AA')$. Two LPSs $(\beta,\AA)$ and $(\beta',\AA')$ are \textit{stably equivalent}, denoted as
$$(\beta,\AA) \sim (\beta', \AA'),$$
if there exist on-site symmetries $(\delta, \tilde \AA)$ and $(\delta',\tilde \AA')$ such that $(\beta \otimes \delta, \AA \otimes \tilde \AA) \text{ and } (\beta' \otimes \delta', \AA' \otimes \tilde \AA')$
are equivalent. In particular, this requires that the local algebras  $\AA_j \otimes \tilde \AA_j$ and $\AA'_j \otimes \tilde \AA'_j$ are isomorphic.  



Stable equivalence is an equivalence relation on $\Sym_{G,d}$. The quotient $(\Sym_{G,d}/\sim)$ is an abelian monoid with multiplication given by stacking. The identity element is the equivalence class of on-site symmetries. 

\subsection{Symmetric entanglers}

A \emph{symmetric entangler} $(\alpha,\beta,\AA)$ is an FDQC $(\alpha, \AA)$ that is \emph{$G$-equivariant} with respect to an on-site symmetry $(\beta, \caA)$, i.e.
$$
\alpha \circ \beta_g=  \beta_g \circ \alpha, \qquad \forall g \in G.
$$
{We denote by $\sEnt_{G,d}$ the set of all symmetric entanglers on $d$-dimensional spin systems}. A more restricted class of symmetric entanglers are \emph{FDQCs with symmetric gates}. More precisely, $(\alpha, \beta, \AA)$ is an FDQC with symmetric gates if $\alpha$ can be represented by a circuit each of whose layers has gates $\{U_a\}$ that satisfy $\beta_g(U_a) = U_a$ for all $g \in G$.



\subsection{Classification of symmetric entanglers}

We say two symmetric entanglers $(\alpha, \beta,\AA),\ (\alpha',\,\beta,\AA)$ are equivalent if there exists an FDQC $(\gamma,\beta,\AA)$ with symmetric gates such that 
\begin{equation} 
\alpha \circ \gamma = \alpha'.
\end{equation}
This equivalence relation is symmetric, because it implies that $\alpha  = \alpha' \circ \gamma^{-1}$ and $\gamma^{-1}$ is also an FDQC with symmetric gates. 

Two symmetric entanglers $(\alpha,\beta,\AA)$ and $(\alpha',\beta',\AA')$ can be \textit{stacked} into a new symmetric entangler (recall Subsection \ref{eq.stacking-spin-systems})
$(\alpha \otimes \alpha', \beta \otimes \beta',\AA \otimes \AA').$
We define also the notion of \emph{stable equivalence}: two symmetric entanglers $(\alpha, \beta,\AA)$ and $(\alpha',\beta',\AA')$ are stably equivalent:
$$(\alpha, \beta,\AA) \sim (\alpha',\beta',\AA')$$
if there are on-site symmetries $(\tilde \beta,\tilde \AA)$ and $(\tilde\beta',\tilde\AA')$ such that $(\alpha \otimes \id, \beta \otimes \tilde \beta, \AA \otimes \tilde \AA)$ and $(\alpha' \otimes \id, \beta' \otimes \tilde \beta', \AA' \otimes \tilde \AA')$ are equivalent.  

The stacking operation makes the sets $\sEnt_{G,d}$ and  $(\sEnt_{G,d} / \sim)$
into monoids. One of the main results of the present paper is the identification of $(\sEnt_{G,d} / \sim)$.
We write  $H^{n}(G,U(1))$ for the $n^{\mathrm{th}}$ cohomology group of $G$ with coefficients in $U(1)$, see  Appendix \ref{sec.cohomology}. Then 
\begin{proposition} [Classification of symmetric entanglers] \label{prop.classification-of-symmetric-entanglers}
For $d = 0, 1, 2$, the monoid $(\sEnt_{G,d} / \sim)$ is a group and there is a group isomorphism
    \begin{equation}
       \ind :  (\sEnt_{G,d} /\sim) \,  \simeq \,  H^{d+1}(G,U(1)).
    \end{equation}
\end{proposition}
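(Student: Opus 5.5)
The plan is to proceed by induction on $d$, using the standard strategy: construct an index map, show it is a well-defined monoid homomorphism, show surjectivity by explicit examples, and show injectivity by a blending argument. The group property of $(\sEnt_{G,d}/\sim)$ then follows from the isomorphism, or directly: stacking $\alpha\otimes\alpha^{-1}$ is stably trivial, via a swindle-type argument as in Appendix \ref{appendix.swindle}.

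\textbf{The case $d=0$.} A symmetric entangler is $\Ad{W}$ with $W$ a unitary on a finite-dimensional space satisfying $U(g)WU(g)^*=\omega(g)W$. Here $\omega:G\to U(1)$ is a character, i.e.\ a class in $H^1(G,U(1))$, and this defines $\ind$. Stacking multiplies characters, and onsite examples realize every character, so the map is surjective. If $\omega$ is trivial, $W$ itself is a symmetric gate, so the map is injective.

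\textbf{The case $d=1$.} I would follow \cite{bols2021classification,gong2020classification}. Restrict $\alpha$ to a half-line: $\alpha$ equals $\alpha_L\otimes\alpha_R$ up to a local unitary $u$ near the cut. Equivariance gives $\beta_g(u)$-type relations, where $\beta_g^{R}$ conjugated by $\alpha_R$ differs from $\beta^R_g$ by $\Ad{V_g}$ with $V_g$ localized near the cut. The $V_g$ compose projectively, $V_gV_h=\omega(g,h)V_{gh}$ up to the appropriate action, and $\omega$ defines a class in $H^2$. I would check independence of the choices made. For injectivity: if $[\omega]=0$, the $V_g$ can be made a genuine representation. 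After stacking an onsite ancilla carrying the dual representation, the edge restriction becomes symmetric. This yields a symmetric blend of $\alpha$ with the identity, and a symmetric-gate FDQC relates $\alpha$ to $\id$ (possibly after the swindle).

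\textbf{The case $d=2$.} Following \cite{Else_2014}, restricting $\alpha$ to the half-plane $x\ge0$ gives $\beta^R_g$ conjugated by $\alpha_R$. This differs from $\beta^R_g$ by a QCA localized near the vertical axis, hence defines an LPS on a 1d boundary chain. Its stable class in $\Sym_{G,1}/\sim$ is well defined, and by \cite{bols2025classificationlocalitypreservingsymmetries} together with Appendix \ref{appendix.LPS} it is classified by $H^3(G,U(1))$; this class defines $\ind$. Homomorphism property and independence of the choices are routine but tedious. Surjectivity comes from explicit cocycle entanglers, of the type $\prod_{\triangle}\nu_3$ on a triangulated lattice with $\C[G]$ per site, as in Lemma \ref{lem.surjectivity-2d}.

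For injectivity, suppose the index is trivial. Then the boundary LPS is stably equivalent to an onsite symmetry via an FDQC $\gamma$ on the strip. Using $\gamma$, I would modify $\alpha_R$ near the axis so that the resulting blend $\tilde\alpha$ (equal to $\alpha$ on the left and $\id$ on the right) commutes with the full onsite symmetry, after stacking ancillas. Applying the same construction along horizontal and vertical cuts, I would cut $\alpha$ into symmetric pieces supported on large blocks. Each piece then reduces to the lower-dimensional results: a strip is a 1d system with trivial 1d index, and corners are 0d. This writes $\alpha$ as an FDQC with symmetric gates.

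\textbf{Main obstacle.} I expect the main difficulty to lie in this last step. One must ensure that the blended pieces are themselves symmetric entanglers with trivial lower-dimensional indices, so that the induction applies. Failing that, I would use an Eilenberg--Mazur swindle on half-planes to absorb the remaining pieces.
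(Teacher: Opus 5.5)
Your outline (an index from the boundary LPS, explicit examples for surjectivity, symmetric blending for injectivity) has the same architecture as the paper's proof. However, the decisive step of $d=2$ injectivity is not established.

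After blending, you assert that each strip piece is a 1d symmetric entangler with trivial $H^2(G,U(1))$ index, and each corner a 0d entangler with trivial charge, so that the lower-dimensional results apply piece by piece. Nothing in the construction forces this, and it is false in general: the pieces $\al_k=\theta_{3rk}^{-1}\circ\theta_{3r(k+1)}$ carry essentially arbitrary strip indices. Two observations show this.
\begin{itemize}
\item Composing a blend $\theta_x$ with any 1d symmetric entangler acting on an ancilla chain at the cut gives another valid blend. It shifts the indices of the two adjacent strips by opposite amounts, so strip indices are not intrinsic to $\al$.
\item Let $\al$ be a stack of copies of one nontrivial 1d entangler of index $[\omega]$, one per column. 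Its 2d index is trivial, yet with the construction of Lemma \ref{lem:symmetric_blending} the pieces $\al_k$ carry index $3r[\omega]$, which need not vanish.
\end{itemize}
The same obstruction blocks your horizontal cuts, since a strip with nonzero index admits no symmetric blend to the identity. It reappears as $H^1(G,U(1))$ charges on the corner blocks.

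You flag this yourself and name a swindle as a fallback, but that fallback is exactly the missing content. The paper cuts only vertically and writes $\al\otimes\id=\al_{\text{odd}}\circ\al_{\text{even}}$, with both factors striped symmetric entanglers. It then proves separately (Lemma \ref{lem.triviality-striped-symmetric-entanglers}) that \emph{every} striped symmetric entangler with uniformly bounded ranges and local dimensions is stably equivalent to $\id$, whatever its strip indices. The argument stacks $\bigotimes_i \al^{[\omega_i]}\otimes\al^{-[\omega_i]}$, with $\omega_i$ the cumulative sums of the strip indices. It then regroups neighbouring strips twice, with shifted pairings, so that sub-stacks of total index zero appear and are removed by the complete 1d classification (Lemma \ref{lem.classification-1d-symmetric-entanglers}). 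Uniformity comes from finiteness of $H^2(G,U(1))$. Equivalently, you could correct each $\theta_x$ by a 1d entangler along its cut whose index is such a cumulative sum, which makes all strip indices vanish. Either way, no horizontal cuts or corners are needed, because the 1d classification already contains the analogous $H^1$ swindle.

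There are also some smaller points.
\begin{itemize}
\item The blend must be an FDQC, not merely a symmetric QCA. Otherwise the strip pieces could have nonzero GNVW index along the strip and fall outside the 1d classification. The paper checks this in Lemma \ref{lem:symmetric_blending} by splitting off $\al'_{[x]}$ and a conjugate of $\gamma_x^{-1}$, which leaves an on-site map.
\item What makes two linear boundary representations of equal dimension isomorphic is stacking the regular representation (Lemma \ref{lem.isomorphism-projective-reps}). A dual representation does not do this in general.
\item The group property needs no swindle: Lemma \ref{lem:composition is equivalent to stacking} gives $\al\otimes\al^{-1}\sim\al^{-1}\circ\al=\id$. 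You need this, or the paper's argument via surjectivity, to pass from trivial kernel to injectivity, including in your $d=0$ case.
\item The cocycle entangler $\prod_\triangle\nu_3$ is a legitimate alternative route to surjectivity, but it is not the construction of Lemma \ref{lem.surjectivity-2d}. You would still have to show that the index $\Omega$ of Proposition \ref{thm.classification-1d-LPS}, evaluated on its edge LPS, equals $[\nu_3]$. The paper avoids this computation by gluing chains carrying $\beta$ and $\bar\beta$ with their disentanglers, so that the boundary algebra is literally a copy of a chain with index $[\omega]$.
\end{itemize}
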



The case $d = 0$ is Lemma \ref{lem.classification of 0d symmetric entanglers} in Appendix \ref{appendix.swindle}. The case $d=1$ been proven by \cite{bols2021classification,gong2020classification}, and we review the proof in Section \ref{sec.symmetric-entanglers-1d}. The proof of the Proposition for $d=2$ is the subject of Section \ref{sec: 2d classification}  and it constitutes the bulk of the present paper. 
The construction of a group homomorphism $ (\sEnt_{G,d} /\sim) \,  \to \,  H^{d+1}(G,U(1))$ can be found in the literature, and its surjectivity was at least suggested. The injectivity is the novel part.


\begin{remark}
    The index $\ind : (\sEnt_{G, d}/\sim) \rightarrow H^{d+1}(G, U(1))$ will be defined in each case $d = 0, 1, 2$ as the quotient of an index map $\ind : \sEnt_{G, d} \rightarrow H^{d+1}(G, U(1))$ defined on individual symmetric entanglers. Lemma \ref{lem:composition is equivalent to stacking} together with Proposition \ref{prop.classification-of-symmetric-entanglers}, shows that for $d = 0, 1, 2$ the index 
    $$ \ind : \sEnt_{G, d}(\caA, \beta) \rightarrow H^{d+1}(G, U(1))$$
    is a group homomorphism, where $\sEnt_{G, d}(\caA, \beta)$ is the group of symmetric entanglers on a fixed spin system $\caA$ with fixed on-site symmetry $\beta$, with group operation given by composition.
\end{remark}

\begin{lemma} \label{lem:composition is equivalent to stacking}
    Let $(\al_1, \beta, \caA)$ and $(\al_2, \beta ,\caA)$ be symmetric entanglers on a $d$-dimensional spin system $\caA$. Then
    $$(\al_2 \circ \al_1, \beta, \caA) \sim (\al_1 \otimes \al_2, \beta \otimes \beta, \caA \otimes \caA).$$
\end{lemma}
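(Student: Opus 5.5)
This is the standard swap trick. Let $\sigma$ be the swap automorphism of $\caA\otimes\caA$ exchanging the two tensor factors site by site. Concretely, $\sigma=\bigotimes_j \Ad{S_j}$, where $S_j$ is the swap unitary on $\HH_j\otimes\HH_j$. Then $\sigma$ is a single-layer block partitioned QCA, hence an FDQC. Its gates are symmetric: for each $g\in G$, $U_j(g)\otimes U_j(g)$ commutes with $S_j$, so $(\beta_g\otimes\beta_g)(S_j)=S_j$. Thus $\sigma$ is an FDQC with symmetric gates for the on-site symmetry $\beta\otimes\beta$.

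First, I reduce to entanglers on the same stacked system. By definition of stable equivalence, $(\al_2\circ\al_1,\beta,\caA)\sim(\al_2\circ\al_1\otimes\id,\beta\otimes\beta,\caA\otimes\caA)$, using the on-site symmetry $(\beta,\caA)$ as the auxiliary system. It therefore suffices to show that $\al_2\al_1\otimes\id$ and $\al_1\otimes\al_2$ are equivalent as symmetric entanglers for $(\beta\otimes\beta,\caA\otimes\caA)$. Both are $G$-equivariant FDQCs, since each $\al_i$ commutes with $\beta_g$.

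Next, I factor and conjugate:
$$\al_1\otimes\al_2=(\id\otimes\al_2)\circ(\al_1\otimes\id),\qquad \id\otimes\al_2=\sigma\circ(\al_2\otimes\id)\circ\sigma.$$
Substituting the second identity into the first gives
$$\al_1\otimes\al_2=\sigma(\al_2\otimes\id)\sigma(\al_1\otimes\id).$$
The aim is to write this as $(\al_2\al_1\otimes\id)\circ\gamma$ with $\gamma$ an FDQC with symmetric gates. Compose on the left with $(\al_2\al_1\otimes\id)^{-1}$ and define
$$\gamma:=(\al_1^{-1}\al_2^{-1}\otimes\id)\,\sigma\,(\al_2\otimes\id)\,\sigma\,(\al_1\otimes\id).$$
Then $\gamma=(\al_1^{-1}\otimes\id)\,\tau\,(\al_1\otimes\id)$, where $\tau:=(\al_2^{-1}\otimes\id)\,\sigma\,(\al_2\otimes\id)\,\sigma$.

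The main step is to show that $\gamma$ is an FDQC with symmetric gates. Conjugating a symmetric-gate circuit by a $G$-equivariant FDQC keeps it symmetric-gate: if a layer has gates $W_a$ with $\beta_g(W_a)=W_a$, then conjugating by $\al\otimes\id$ gives an automorphism that is again $\Ad{}$ of the symmetric unitaries $(\al\otimes\id)^{-1}(W_a)$.

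I would handle $\tau$ directly. We have $(\al_2^{-1}\otimes\id)\,\sigma\,(\al_2\otimes\id)=\Ad{\prod_j(\al_2^{-1}\otimes\id)(S_j)}$. The unitaries $\tilde S_j:=(\al_2^{-1}\otimes\id)(S_j)$ are symmetric, pairwise commuting, and each is supported in a ball of radius $r$ around $j$. A collection of commuting, locally supported, symmetric unitaries can be written as a finite-depth circuit with symmetric gates. To do this, colour $\Z^d$ with finitely many colours so that same-coloured sites are more than $2r$ apart. For each colour $c$, the product $\prod_{j\text{ of colour }c}\tilde S_j$ is then a single layer whose gates are the individual $\tilde S_j$ (or disjoint blocks containing them), and these gates are symmetric. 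Hence $\tau$, and therefore $\gamma$, is an FDQC with symmetric gates.

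The expected obstacle is the technical point that the $\tilde S_j$ have overlapping supports, so they cannot form a single block-partitioned layer. The colouring argument, together with the fact that the $\tilde S_j$ commute, resolves this and lets them be arranged into a bounded number of layers.
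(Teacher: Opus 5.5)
Your proof is correct and uses the same swap trick as the paper's proof, with your $\sigma$ playing the role of the paper's $\gamma=\otimes_j\Ad{\Omega_j}$. It is more careful than the paper's short argument. The paper's identity $\al_2\otimes\id=\gamma\circ(\id\otimes\al_2)$ is missing a factor $\gamma$ on the right, and the paper composes with $\gamma$ on the left, whereas equivalence is defined by composition on the right. So the paper tacitly needs exactly the fact you supply: conjugating a symmetric-gate circuit by an equivariant FDQC gives commuting, symmetric, boundedly supported unitaries, and your colouring argument (applied also to each layer of $\tau$ for the final conjugation by $\al_1\otimes\id$) arranges these into finitely many symmetric layers.
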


\begin{proof}
    For each site $j \in \Z^d$, let $\Omega_j \in \caU(\HH_j \otimes \HH_j)$ be the swap unitary, uniquely determined up to phase by
    the property that $\Omega_j (A \otimes B) \Omega_j^* = B \otimes A $ for all $A, B \in \caA_j$.

    We have $(\beta_g \otimes \beta_g)(\Omega_j) = \Omega_j$ for all $j \in \Z^d$ and all $g \in G$ so
    $$ \big( \gamma := \otimes_{j \in \Z^d} \Ad{\Omega_j}, \beta \otimes \beta, \caA \otimes \caA \big) $$
    is a FDQC with symmetric gates. Since $\al_2 \otimes \id = \gamma \circ (\id \otimes \al_2)$ we find
    $$ \al_2 \circ \al_1 \sim \al_2 \circ \al_1 \otimes \id = \gamma \circ (\id \otimes \al_2) \circ (\al_1 \otimes \id) = \gamma \circ (\al_1 \otimes \al_2)$$
    as required.
\end{proof}



\subsection{Symmetry protected states} \label{sec.SPTs}

A \emph{symmetry protected topological (SPT) phase} is a stable equivalence class of symmetric short-range entangled states, which we define now. 

We say a state $\psi$ on $\AA$ is \emph{$G$-invariant} with respect to an on-site symmetry $(\beta,\AA)$ if $\psi \circ \beta_g = \psi$ for all $g \in G$. For simplicitly, when this condition is met for some fixed on-site symmetry that is clear from the context, we will say that $\psi$ is $G$-invariant.

\subsubsection{Symmetric short-range entangled states}
A \emph{product state} $\phi$ on a spin system $\AA$ is a pure state that satisfies 
$$\phi (AB) = \phi(A) \phi(B)$$
for any $A,B \in \AA$ supported on disjoint subsets of $\ZZ^d$. Any product state $\phi$ on $\caA$ restricts to pure states on on-site algebras $\caA_j$ represented by unit vectors $\ket{\phi_j} \in \HH_j$, so that $\phi(\cdot ) = \otimes_j \langle \phi_j |\cdot| \phi_j\rangle$. If $\phi$ is moreover $G$-invariant under an on-site symmetry $g \mapsto \otimes_{j \in \Z^d} U_j(g)$ then  $U_j(g)\ket{\phi_j} = q_j(g)\ket{\phi_j}$, for on-site \emph{$G$-charges} $q_j \in \hom(G,U(1))$. A \emph{special} $G$-invariant product state is a $G$-invariant product state satisfying $q_j \equiv 1$ for every $j \in \Z^d$. 

We restrict ourselves to short-range entangled symmetric states that can be prepared by a symmetric entangler from an arbitrary $G$-invariant product state $\phi$ on $\AA$. 
That is,
$$\psi_\alpha := \phi \circ \alpha$$
for a symmetric entangler $(\alpha,\beta,\AA)$. Any triple $(\psi_\alpha,\beta,\AA)$ with $\psi_{\al}$ of this form is called a \emph{$G$-state}.  If $\psi_\alpha$ is moreover a (special) product state then we call $(\psi_\alpha,\beta,\AA)$ a (special) product $G$-state. We denote by $\sSta_{G,d}$ the set of $G$-states in dimension $d$.

\subsubsection{Stable equivalence of $G$-states}
Two $G$-states $(\psi_\alpha, \beta, \AA)$ and $(\psi_{\alpha'}, \beta,\AA)$ are \emph{equivalent} if there exists an FDQC $(\gamma,\beta,\AA)$ with symmetric gates such that 
$$\psi_\alpha = \psi_{\alpha'} \circ \gamma.$$
Similarly, two $G$-states $(\psi_\alpha, \beta', \AA')$ and $(\psi_{\alpha'}, \beta,\AA)$ are \emph{stably equivalent},
$$(\psi_\alpha,\beta,\AA) \sim (\psi_{\alpha'}, \beta', \AA'),$$
if there exist special product $G$-states $(\phi, \tilde \beta, \tilde \AA )$ and $(\phi', \tilde \beta', \tilde \AA')$ such that $(\psi_\alpha \otimes \phi, \beta \otimes \tilde \beta, \AA \otimes \tilde \AA)$ and $(\psi_{\alpha'} \otimes \phi', \beta' \otimes \tilde \beta', \AA' \otimes \tilde \AA')$ are equivalent.
The following lemma shows how to leverage knowledge on the symmetric entanglers to  classify the corresponding $G$-states.
\begin{lemma} \label{lem.injective-spt}
If $(\alpha,\beta,\AA) \sim (\id,\beta',\AA')$, then the $G$-state $(\psi_\alpha,\beta,\AA)$ is stably equivalent to a special product $G$-state.
\end{lemma}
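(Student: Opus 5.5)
Unpacking the hypothesis $(\alpha,\beta,\AA) \sim (\id,\beta',\AA')$ gives on-site symmetries $(\tilde\beta,\tilde\AA)$ and $(\tilde\beta',\tilde\AA')$ such that $(\alpha\otimes\id,\beta\otimes\tilde\beta,\AA\otimes\tilde\AA)$ and $(\id,\beta'\otimes\tilde\beta',\AA'\otimes\tilde\AA')$ are equivalent. In particular the two underlying spin systems coincide, $\AA\otimes\tilde\AA=\AA'\otimes\tilde\AA'$, and so do the symmetries, $\beta\otimes\tilde\beta=\beta'\otimes\tilde\beta'$. Equivalence then provides an FDQC $\gamma$ with symmetric gates (with respect to $\beta\otimes\tilde\beta$) such that $(\alpha\otimes\id)\circ\gamma=\id$, i.e.\ $\alpha\otimes\id=\gamma^{-1}$. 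So after stacking, $\alpha$ itself is an FDQC with symmetric gates.

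To turn this into a statement about states, we first need a special product $G$-state $\tilde\phi$ on $(\tilde\AA,\tilde\beta)$ to stack with. The on-site representations $\tilde U_j$ need not contain the trivial representation, so I would enlarge $\tilde\AA$ once more: stack an auxiliary on-site system $\hat\AA$ with on-site Hilbert space $\C^{|G|}$ carrying the regular representation $\hat\beta$. The regular representation contains the trivial one, so a special product $G$-state $\hat\phi$ exists on $\hat\AA$. Stacking the identity on $\hat\AA$ preserves the equivalence, since $\gamma\otimes\id$ still has symmetric gates. We therefore replace $\tilde\AA$ by $\tilde\AA\otimes\hat\AA$ and $\tilde\phi$ by $\hat\phi$, using the identity on the $\tilde\AA$ factor of the product state via any product state. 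Where the charges from the $\tilde\AA$ factor must cancel, I would instead take the product state on $\tilde \AA\otimes\hat\AA$ to be a vector $|\chi_j\rangle\otimes|\hat\chi_j\rangle$ whose total charge is trivial. This is always possible: for a one-dimensional subrepresentation $q$ of $\tilde U_j$, take the vector in the regular representation transforming as $\bar q$.

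Now compute. The stacked state is
$$\psi_\alpha\otimes\phi'=(\phi\otimes\phi')\circ(\alpha\otimes\id)=(\phi\otimes\phi')\circ\gamma^{-1},$$
where $\phi'$ is the special product $G$-state just built. Hence $\psi_\alpha\otimes\phi'$ is equivalent, via the symmetric-gate FDQC $\gamma$, to the $G$-invariant product state $\phi\otimes\phi'$.

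It remains to show that $\phi\otimes\phi'$, which is a product state with possibly nontrivial charges $q_j$ coming from $\phi$, is stably equivalent to a special one. This is the main obstacle, because the charges of $\phi$ are arbitrary. I would stack $\phi\otimes\phi'$ with a further special product $G$-state $\hat\phi$ on copies of the regular representation. On each site, the vector $|\phi_j\rangle\otimes|\phi'_j\rangle\otimes|\hat\phi_j\rangle$ has charge $q_j$. A symmetric on-site unitary (a depth-one symmetric-gate circuit) rotates it to $|\phi_j\rangle\otimes|\phi'_j\rangle\otimes|\hat\phi_j\rangle$ with $\hat\phi_j$ replaced by the $q_j$-vector, and then the charge is fixed where we want it. To reach a special state, the trick is to put the compensating $\bar q_j$ into the stabilizing state from the start; equivalently, apply the same charge-cancelling choice as above. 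Symmetric unitaries can map between any two vectors of equal charge within a representation, because the intertwiner group acts transitively on the unit sphere of each isotypic one-dimensional component. This yields a special product $G$-state and completes the argument.
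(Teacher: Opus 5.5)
Your first half is sound and follows the paper. Unpacking the hypothesis gives an FDQC $\gamma$ with symmetric gates and $\alpha\otimes\id=\gamma^{-1}$, so $\psi_\alpha\otimes\phi'=(\phi\otimes\phi')\circ\gamma^{-1}$ for any special product $G$-state $\phi'$ on the auxiliary layers.

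There is one easily fixed slip in that part. An invariant \emph{product} vector $|\chi_j\rangle\otimes|\hat\chi_j\rangle$ exists only if $\tilde U_j$ has a one-dimensional subrepresentation. This fails, e.g., for a higher-dimensional irreducible representation of a nonabelian $G$. However, a product state on the stacked system only has to be a product over sites, not over layers. The representation $\tilde U_j\otimes\rho$, or $\tilde U_j\otimes\overline{\tilde U_j}$ as in the paper, always contains an invariant vector, possibly entangled across layers.

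The genuine gap is the last step, removing the charges $q_j$ of $\phi$.
\begin{itemize}
\item A single-site unitary commuting with the on-site $G$-action preserves each isotypic subspace, so it cannot change the on-site charge.
\item Stacking with special product $G$-states is the only stacking the definition of stable equivalence allows, and it multiplies the on-site charge by the trivial character.
\end{itemize}
Hence the on-site charge $q_j$ is invariant under every operation you use. Since any special product state stacked with special states has charge $1$ at every site, you can never reach one this way. Putting ``the compensating $\bar q_j$ into the stabilizing state from the start'' is exactly what is not permitted, because that stabilizing state would not be special.

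Nor can this step be bypassed. For $\alpha=\id$ the lemma says precisely that an arbitrary product $G$-state is stably equivalent to a special one. What is needed is to move charge between sites with multi-site symmetric gates and push it off to infinity, i.e.\ an Eilenberg swindle. For instance, stack with special product states whose on-site vectors carry cancelling pairs of accumulated charges, then re-pair neighbouring sites. This is the content of Lemma~\ref{lem.equivalence-product-g-states}, which the paper invokes at exactly this point.
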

\begin{proof}
By assumption, there are on-site symmetries $(\tilde \beta,\tilde \AA)$ and $(\tilde \beta', \tilde \AA')$ such that 
$(\alpha\otimes \id, \beta \otimes\tilde \beta, \AA \otimes \tilde \AA)$ and $(\id, \beta'\otimes \tilde \beta', \AA' \otimes \tilde \AA')$ are equivalent. Let $\tilde \beta_g$ be given as a tensor product $\tilde \beta_g := \otimes_i \Ad{U_i(g)}$, and define an on-site symmetry $(\overline \beta, \overline \AA)$ as follows : $\overline \AA$ is a copy of $\tilde \AA$ equipped with the on-site symmetry $\overline \beta_g =\otimes_i \Ad{{\overline U_i({g})}}$, the \emph{complex conjugate representation} associated to $U_i$. A standard result asserts that $U_i \otimes \bar U_i$ contains a trivial one-dimensional subrepresentation. This means that there exists a special product $G$-state $(\tilde \phi, \tilde \beta \otimes \overline \beta, \tilde \caA \otimes \overline{\caA})$. Moreover,
$$(\psi_\alpha \otimes \tilde \phi, \beta \otimes \tilde \beta \otimes \overline \beta, \AA \otimes \tilde \AA \otimes \overline \AA) \sim (\phi \otimes \tilde \phi, \beta' \otimes \tilde \beta' \otimes \overline \beta, \AA' \otimes\tilde \AA' \otimes \overline \AA).$$
Finally, since all product $G$-states are stably equivalent (see Lemma \ref{lem.equivalence-product-g-states}), the claim is proven.
\end{proof}

\subsubsection{Classification of $G$-states}

A \emph{symmetry protected topological (SPT) phase with symmetric entanglers} in dimension $d$ is a stable equivalence class of $G$-states. We denote the set of such phases by
$$\sSPT_{G,d} := (\sSta_{G,d}/\sim).$$
It has the structure of a commutative monoid with product operation given by stacking. It is a submonoid of $\mathsf{SPT_{G,d}}$, the monoid of symmetry protected topological phases, consisting of stable equivalence classes of symmetric short-range entangled states in $d$ dimensions. In dimension $d=1$, the monoid $\mathsf{SPT}_{G,1}$ has been completely classified \cite{ogata2019classification,schuch2011MatrixProduct,kapustin2021classification,de_o_carvalho_classification_2025}. 

\begin{theorem} [Classification of 2d SPTs with symmetric entanglers] \label{thm.main}
    There is a group isomorphism
    \begin{equation}
        \sSPT_{G,2}  \simeq H^{3}(G,U(1)).
    \end{equation}
\end{theorem}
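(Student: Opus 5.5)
The plan is to assemble the isomorphism from the chain of maps described in the introduction:
$$H^3(G,U(1)) \Longleftrightarrow (\sEnt_{G,2}/\sim) \Longrightarrow \sSPT_{G,2} \Longrightarrow H^3(G,U(1)).$$
Proposition \ref{prop.classification-of-symmetric-entanglers} for $d=2$ supplies the first isomorphism, so it will be assumed. We then need three further ingredients. The first is a well-defined surjective monoid homomorphism $\Psi : (\sEnt_{G,2}/\sim) \to \sSPT_{G,2}$. The second is a homomorphism $\Theta : \sSPT_{G,2} \to H^3(G,U(1))$, taken from the literature (the Else--Nayak / Sopenko / Ogata index). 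The third is the compatibility $\Theta \circ \Psi = \ind$.

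To define $\Psi$, send the class of $(\alpha,\beta,\AA)$ to the class of $(\phi\circ\alpha,\beta,\AA)$, where $\phi$ is a $G$-invariant product state for $\beta$.

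We first check that this does not depend on choices.
\begin{itemize}
\item \emph{Choice of $\phi$.} Any two $G$-invariant product states give stably equivalent $G$-states $\phi\circ\alpha$ and $\phi'\circ\alpha$. To see this, stack with conjugate copies as in the proof of Lemma \ref{lem.injective-spt}, and note that $\alpha$ commutes with the symmetric-gate FDQC relating $\phi\otimes\tilde\phi'$ to $\phi'\otimes\tilde\phi$ up to stacking. Alternatively, use Lemma \ref{lem.equivalence-product-g-states} after conjugating by $\alpha$.
\item \emph{Equivalence of entanglers.} If $\alpha'=\alpha\circ\gamma$ with $\gamma$ a symmetric-gate FDQC, then $\phi\circ\alpha'=(\phi\circ\alpha)\circ\gamma$, so $\phi\circ\alpha'$ is equivalent to $\phi\circ\alpha$.
\item \emph{Stabilization.} Stacking an entangler with $\id$ on an on-site symmetry $(\tilde\beta,\tilde\AA)$ corresponds, on the state side, to stacking with a product $G$-state. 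This is reduced to a special product $G$-state by the conjugate-representation trick.
\end{itemize}
$\Psi$ respects stacking by construction. It is surjective because every $G$-state is by definition of the form $\phi\circ\alpha$ for some symmetric entangler.

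Given surjectivity, $\sSPT_{G,2}$ is the image of a group under a monoid homomorphism, hence a group. Moreover $\Psi$ is injective exactly when its kernel is trivial. Suppose $\Psi([\alpha])$ is trivial. We want $\ind(\alpha)=0$, and then Proposition \ref{prop.classification-of-symmetric-entanglers} gives $[\alpha]=[\id]$. This is where $\Theta$ is used. It suffices to establish $\Theta\circ\Psi=\ind$: then $\ind(\alpha)=\Theta(\text{trivial})=0$.

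The identity $\Theta\circ\Psi=\ind$ is the step I expect to be the main obstacle. $\Theta$ is defined on states by restricting the symmetry to a half-plane, obtaining an anomalous $1d$ symmetry action on the edge of $\psi$ via the GNS representation, and reading off an $H^3$ class. $\ind$ is defined for entanglers via the boundary LPS $\alpha^{-1}\circ\beta_g^{\mathrm{half}}\circ\alpha$ restricted to the half-plane.

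For $\psi=\phi\circ\alpha$, conjugating the half-plane symmetry by $\alpha$ identifies the truncated symmetry leaving $\psi$ invariant with $\alpha\circ\beta^{\mathrm{half}}_g\circ\alpha^{-1}$, up to an edge-localized QCA. So both invariants compute the same $1d$ anomaly class, possibly up to a sign convention or inversion, which is harmless since we only need trivial to go to trivial. I would carry out this comparison by showing that both constructions produce stably equivalent edge LPSs and then invoking the $1d$ LPS classification of \cite{bols2025classificationlocalitypreservingsymmetries} (Appendix \ref{appendix.LPS}).

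If this direct matching proves awkward, there is a fallback. Compare $\Theta\circ\Psi$ and $\ind$ only on the explicit representatives used for surjectivity (Lemma \ref{lem.surjectivity-2d}). Both maps are homomorphisms out of $(\sEnt_{G,2}/\sim)\cong H^3$, so they agree if they agree on representatives of every class.

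Granting this, $\Psi$ is an isomorphism. The composite $\ind\circ\Psi^{-1}$ then gives the desired group isomorphism $\sSPT_{G,2}\simeq H^3(G,U(1))$; by the identity just established it coincides with $\Theta$, which is therefore bijective as well.
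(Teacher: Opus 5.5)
Your proposal is essentially the paper's argument with the roles made explicit. The paper sets $\ind(\psi_\alpha):=\ind(\alpha)$, which is your $\ind\circ\Psi^{-1}$, so its injectivity (via Lemma~\ref{lem.injective-spt}) is your well-definedness of $\Psi$, and its claimed well-definedness is exactly your injectivity of $\Psi$: the compatibility with the literature state index $\Theta$ that you flag, which the paper's short proof also does not derive but takes from the cited works through the chain $H^3\Leftrightarrow\SE\Rightarrow\sSPT\Rightarrow H^3$ of its introduction. Since $H^3(G,U(1))$ is finite and $\SE\cong H^3(G,U(1))$, surjectivity of $\Theta$ (which the introduction reads off from the examples of Lemma~\ref{lem.surjectivity-2d}) already forces the surjection $\Psi$ to be bijective, so you do not need the full identity $\Theta\circ\Psi=\ind$.
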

\begin{proof}
    Let $(\psi_\alpha, \beta, \AA)$ be a 2d $G$-state. We define $\ind (\psi_\alpha) := \ind(\alpha)$. By Lemma \ref{lem.injective-spt} and Proposition \ref{prop.classification-of-symmetric-entanglers}, it is a well-defined bijective homomorphism. 
\end{proof}

It remains an open question whether all 2d symmetric short-range entangled states can be prepared from a product state by a symmetric entangler. 



\section{Classification of 1d symmetric entanglers} \label{sec.symmetric-entanglers-1d}

It is well known that equivariant QCAs on spin chains with global symmetry group $G$ are completely classified up to stable equivalence by their GNVW index \cite{Gross_2012} and a $H^2(G, U(1))$-valued invariant \cite{gong2020classification, bols2021classification}. In our application, we run into the problem of disentangling some equivariant FDQCs, which by definition have trivial GNVW index. Thus, their classification is entirely cohomological:

\begin{lemma} \label{lem.classification-1d-symmetric-entanglers}
     The monoid $(\sEnt_{G,1}/\sim)$ is a group, and there is a group isomorphism 
    \begin{equation}
        (\sEnt_{G,1}/\sim) \simeq H^2(G,U(1)).
    \end{equation}
\end{lemma}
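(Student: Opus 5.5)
The plan is to define an index $\ind(\alpha)\in H^2(G,U(1))$ for a 1d symmetric entangler $(\alpha,\beta,\AA)$ by restricting the symmetry to a half-chain. Since $\alpha$ is an FDQC, it has trivial GNVW index and can be cut along a site: there is a QCA $\alpha_R$ on $\AA_{[0,\infty)}$ with $\alpha = (\alpha_L\otimes\alpha_R)\circ \Ad{W}$ for some local unitary $W$. Let $\beta^R_g$ be the on-site symmetry restricted to $[0,\infty)$. Then $\alpha_R\circ\beta^R_g\circ\alpha_R^{-1}\circ(\beta^R_g)^{-1}$ is, by equivariance of $\alpha$, an automorphism acting trivially far from the cut. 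It is therefore $\Ad{V_g}$ with $V_g$ supported near the origin. The unitaries $V_g$ satisfy a twisted representation relation
\[
V_g\,\tilde\beta_g(V_h)=\omega(g,h)\,V_{gh},
\]
where $\tilde\beta_g = \alpha_R\beta^R_g\alpha_R^{-1}$, and $\omega$ is a $U(1)$-valued 2-cocycle. The first step is to check that changing the cut, the decomposition $\alpha_R$, or the phases of $V_g$ changes $\omega$ only by a coboundary.

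Next I will show that $\ind$ descends to $(\sEnt_{G,1}/\sim)$ and is a homomorphism. Stacking multiplies cocycles. Stacking with an identity on an on-site ancilla changes nothing. Composing with an FDQC with symmetric gates also changes nothing, because such a circuit can be cut with symmetric $\alpha_R$, which contributes a trivially commuting piece. Surjectivity follows from an explicit construction. Take a projective representation $u$ with cocycle $\omega$ and place $u\otimes\bar u$ at each site, making the symmetry linear and on-site. The entangler is the product of the two depth-one layers of the standard ``shift-$u$'' circuit, i.e.\ the unitary conjugating the product of $u$'s on pairs $(j_R,(j+1)_L)$, which is chosen to commute with the symmetry. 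Its half-chain restriction produces $V_g=u(g)$ at the cut, so its index is $[\omega]$. Alternatively I can cite \cite{bols2021classification,gong2020classification} for these examples.

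The main step is injectivity: if $[\omega]=0$, then $\alpha$ is stably equivalent to the identity. After redefining phases, $g\mapsto V_g$ becomes a genuine (twisted) representation on a finite region near the cut. I would absorb it by stacking an on-site ancilla carrying the regular representation, or $\rho\otimes\bar\rho$, at that site. On the ancilla-enlarged half chain, I would then modify $\alpha_R$ by a local unitary so that $\alpha'_R$ commutes exactly with $\beta^R_g\otimes\tilde\beta^R_g$. This uses the fact that a linear representation on a finite-dimensional block can be intertwined with the on-site representation after adding enough copies of the regular representation. With a symmetric half-chain restriction in hand, $\alpha$ factors as $\alpha'_L\otimes\alpha'_R$ times a symmetric local unitary near each cut. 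Doing this at a periodic array of cuts (every $\ell$ sites, with $\ell$ larger than the circuit range) writes $\alpha\otimes\id$ as a composition of two layers of block-symmetric unitaries. In other words, it is an FDQC with symmetric gates, so it is equivalent to the identity.

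Finally, the group property of $(\sEnt_{G,1}/\sim)$ follows from the homomorphism and injectivity: stacking $\alpha$ with an entangler of inverse index gives trivial index, hence triviality. Concretely, $\alpha^{-1}$ works via Lemma \ref{lem:composition is equivalent to stacking}. I expect the main obstacle to be the injectivity step, namely the careful choice of ancilla representations. The symmetric blocks must also glue consistently across all cuts simultaneously, since the cuts must not interfere and the ancillas must be chosen uniformly in $j$ to keep $d_{\max}$ bounded.
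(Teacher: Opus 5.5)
\paragraph{Main gap: the gates are only symmetric up to a character.}
Your half-chain defect $V_g$ is a legitimate alternative to the paper's boundary algebra. Symmetrizing a single cut by stacking regular representations plays the role of the paper's blending map $\theta_x$. The group property via $\alpha\otimes\alpha^{-1}\sim\alpha^{-1}\circ\alpha=\id$ is also fine. The gap is the last step of injectivity.

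Once $\alpha'_L,\alpha'_R$ are equivariant, $\Ad{W'}=(\alpha'_L\otimes\alpha'_R)^{-1}\circ\alpha$ commutes with every $\beta_g$. That only yields $\beta_g(W')=\lambda(g)W'$ for some character $\lambda\in H^1(G,U(1))$. Likewise each block unitary $v_k$ of your periodic decomposition obeys $\beta_g(v_k)=\mu_k(g)v_k$. So what you have built is a two-layer block partitioned symmetric entangler, not an FDQC with symmetric gates. Local re-choices near the cuts can move these charges from gate to gate but cannot make them all vanish.

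For instance, take $G=\mathbb{Z}_2$ on a qubit chain with $\beta=\otimes_j\Ad{X_j}$. The entangler $\alpha=\Ad{Z_0}$ passes through your construction untouched, since $V_g$ is trivial at every cut, yet its gate $Z_0$ is odd. The paper treats exactly this issue separately. After writing $\alpha\otimes\id=\alpha_{\mathrm{odd}}\circ\alpha_{\mathrm{even}}$, it invokes Lemma \ref{lem.equivalence-block-partitioned-1d}, an Eilenberg--Mazur swindle that pushes these $H^1$ charges off to infinity.

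\paragraph{How to close it within your framework.}
You can close the gap by telescoping. Take a layer $\otimes_k\Ad{v_k}$ with charges $\mu_k$.
\begin{itemize}
\item Choose characters $\Lambda_k$ with $\Lambda_k\Lambda_{k-1}^{-1}=\mu_k$. Only finitely many values occur, because $H^1(G,U(1))$ is finite.
\item Stack regular-representation ancillas. On two distinct ancilla sites in block $k$, pick unitaries $d^L_k,d^R_k$ with $\beta_g(d^L_k)=\Lambda_{k-1}(g)^{-1}d^L_k$ and $\beta_g(d^R_k)=\Lambda_k(g)d^R_k$.
\item Then $\otimes_k\Ad{v_k}\otimes\id=\big(\otimes_k\Ad{v_k(d^L_k)^*(d^R_k)^*}\big)\circ\big(\otimes_k\Ad{d^R_kd^L_{k+1}}\big)$, and every gate here is symmetric.
\end{itemize}
Apply this to each of your two layers.

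\paragraph{Smaller points.}
First, the cocycle relation should read $V_g\,\beta^R_g(V_h)=\omega(g,h)V_{gh}$, with $\beta^R_g$ rather than $\tilde\beta_g$.

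Second, your explicit surjectivity example does not work as described. Read as two layers of swaps on $u\otimes\bar u$ sites, it moves the $u$-factors one way and the $\bar u$-factors the other, so its index is $\pm2[\omega]$. Shifting the $u$-factors alone would give $[\omega]$, but that map has nonzero GNVW index and is not an FDQC. The paper instead uses $S\otimes\id\otimes S^*$ on $\caA^{\mu}\otimes\caA^{\bar\mu}\otimes\caA^{\mu}$ with symmetry $\beta^{\mu}\otimes\beta^{\bar\mu}\otimes\id$, so the compensating shift happens on a symmetry-neutral layer. Your fallback citation to the literature covers this point.
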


This Lemma is the $1$-dimensional case of Proposition \ref{prop.classification-of-symmetric-entanglers}. The proof of Lemma \ref{lem.classification-1d-symmetric-entanglers} follows from the combination of Lemmas \ref{lem.index-1d-symmetric-entanglers}, \ref{lem.surjectivity-1d}, and \ref{lem:injectiveness-1d}, which are to be proven in the following subsections.


\subsection{An index for 1d symmetric entanglers}

Let $(\alpha, \beta, \AA)$ be a symmetric entangler in $d=1$, with range $R(\alpha)$.

\subsubsection{The boundary algebra} \label{sec.boundary-algebra-1d}

Choose an FDQC representation of $\alpha$ and let $r\geq nq$ where $n$ is the depth of the circuit and $q$ is the maximum (over blocks and over layers of the quantum circuit) diameter of the blocks.  

We define the half-lines $\Left{x} := \{ i \in \ZZ\ |\ i < x \}$, and boundary regions
\begin{equation} \label{eq.boundary-region-1d}
    \partial_r \Left{x} := \{ i\in \ZZ\ |\ x-r \le i < x\}.
\end{equation}
The corresponding \emph{boundary algebra} $\mathcal P^{[r]}_{x}$ associated to $\alpha$ is defined as (c.f. \cite{freedman2020classification, Haah2023})
\begin{equation} \label{eq.boundary-algebra-hastings-freedman-1d}
   \mathcal P^{[r]}_{x} :=  \alpha (\AA_{\Left x}) \cap \AA_{\Left{x-r}}'
\end{equation}
where the prime denotes the commutant in $\AA$. This object clearly depends on $\alpha$, but we choose to not indicate this dependence in the notation. 

We now give an alternative construction of the boundary algebra $\mathcal P^{[r]}_{x}$ in terms of the FDQC representation. Recall from Subsection \ref{sec.qca} that, for an FDQC representation $\alpha = \alpha_n \circ \ldots \circ\alpha_1$, each block partitioned QCA $\alpha_i$ is called its $i^{\mathrm{th}}$ layer. Define the FDQC $(\eta_x,\AA)$ to be composed of gates in $\al$ as follows:

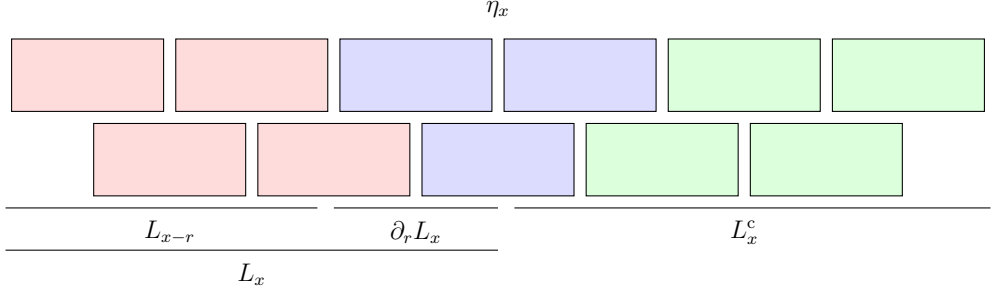
\begin{figure}
 \centering 
 \begin{tikzpicture}[scale=0.8, transform shape]
    \def\splitgap{0.2}  
    \def\xdist{2.7}
    \def\boxwidth{2.5}
    \def\boxheight{1.2}
    \def\ydist{1.4}
    \def\greenheight{3.1}
    \def\greenwidth{0.35} 
    \def\greendist{2.55}  

    \definecolor{softred}{RGB}{255,220,220}
    \definecolor{softblue}{RGB}{220,220,255}
    \definecolor{softgreen}{RGB}{220,255,220}

    \foreach \j [evaluate=\j as \label using int(\j)] in {-4,-2} {
        \pgfmathsetmacro{\x}{(\j + 6)/2 * \xdist}
        \node[draw=black, fill=softred, text=black, minimum width=\boxwidth cm, minimum height=\boxheight cm] 
            at (\x, 0){} ;
    }

    \foreach \j [evaluate=\j as \label using int(\j)] in {-5,-3} {
        \pgfmathsetmacro{\x}{(\j + 6)/2 * \xdist}
        \node[draw=black, fill=softred, text=black, minimum width=\boxwidth cm, minimum height=\boxheight cm] 
            at (\x, \ydist) {};
    }
    
    \foreach \j [evaluate=\j as \label using int(\j)] in {0} {
        \pgfmathsetmacro{\x}{(\j + 6)/2 * \xdist}
        \node[draw=black, fill=softblue, text=black, minimum width=\boxwidth cm, minimum height=\boxheight cm] 
            at (\x, 0) {};
    }

    \foreach \j [evaluate=\j as \label using int(\j)] in {-1,1} {
        \pgfmathsetmacro{\x}{(\j + 6)/2 * \xdist}
        \node[draw=black, fill=softblue, text=black, minimum width=\boxwidth cm, minimum height=\boxheight cm] 
            at (\x, \ydist) {};
    }

    \foreach \j [evaluate=\j as \label using int(\j)] in {2,4} {
        \pgfmathsetmacro{\x}{(\j + 6)/2 * \xdist}
        \node[draw=black, fill=softgreen, text=black, minimum width=\boxwidth cm, minimum height=\boxheight cm] 
            at (\x, 0) {};
    }

    \foreach \j [evaluate=\j as \label using int(\j)] in {3,5} {
        \pgfmathsetmacro{\x}{(\j + 6)/2 * \xdist}
        \node[draw=black, fill=softgreen, text=black, minimum width=\boxwidth cm, minimum height=\boxheight cm] 
            at (\x, \ydist) {};
    }

\node[] at (3*\xdist, 2.5) {$\eta_x$};
\draw (0 , -0.8) -- (1.9*\xdist, -0.8);
\node[] at (\xdist,-1.2) {$\Left{x-r}$};
\draw (2 * \xdist, -0.8) -- (3*\xdist, -0.8);
\node[] at (2.5*\xdist,-1.2) {$\partial_r L_x$};
\draw (3.1 * \xdist, -0.8) -- (6*\xdist, -0.8);
\node[] at (4.5*\xdist,-1.2) {$L_x^{\text{c}}$};
\draw (0* \xdist, -1.5) -- (3*\xdist, -1.5);
\node[] at (1.5*\xdist,-1.9) {$\Left x$};
\end{tikzpicture}
    \caption{An illustration of a depth 2 FDQC representation of $\al$. The gates comprising $\eta_x$ are shown in blue.}
    \label{fig:eta1d}
\end{figure}

\begin{itemize}
    \item In the first layer of $\al$, we retain gates whose support intersects both $\Left x$ and ${\Left x}^c$. These gates compose the first layer of $\eta_x$. Let $B_1$ be the union of the supports of all those gates. 

    \item For $i=2,\ldots,n$, the $i^{\mathrm{th}}$ layer of $\eta_x$ consists of all gates of the FDQC representation of $\al$ whose support intersects $B_{i-1}$, and we call the union of $B_i$ with the supports of these new gates $B_{i}$. 
\end{itemize}

This construction is illustrated by Figure \ref{fig:eta1d} and yields
\begin{equation} \label{eq.boundary-algebra-eta-1d}
\alpha(\caA_{\Left{x}})=\eta_x(\caA_{\Left x})= \eta_x(\caA_{\partial_r \Left x}) \otimes \caA_{\Left{x-r}} = \mathcal P^{[r]}_{x}  \otimes \caA_{\Left{x-r}},
\end{equation}
where first and second equalities follow by construction of $\eta_x$ and the last equality follows from the definition \eqref{eq.boundary-algebra-hastings-freedman-1d}.


\begin{lemma}
\label{lem.invariant-boundary}
        The boundary algebra $\mathcal P^{[r]}_{x}$ is $G$-invariant, i.e. $\beta_g(\mathcal P^{[r]}_{x})=\mathcal P^{[r]}_{x}$ for all $g \in G$.

\end{lemma}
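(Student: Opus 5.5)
The plan is to use the intrinsic description \eqref{eq.boundary-algebra-hastings-freedman-1d}, $\mathcal P^{[r]}_{x} = \alpha(\AA_{\Left x}) \cap \AA_{\Left{x-r}}'$, rather than the circuit description via $\eta_x$. The gates of $\alpha$ need not be symmetric, so $\eta_x$ need not commute with $\beta_g$. The intersection formula, by contrast, involves only $\alpha$ and subalgebras of $\AA$ attached to subsets of $\ZZ$, and all of these behave well under $\beta$. Since $\beta_g$ is an automorphism, it maps an intersection of subalgebras onto the intersection of their images. It therefore suffices to show that $\beta_g$ preserves each of the two algebras being intersected.

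\emph{Step 1: $\beta_g$ preserves $\alpha(\AA_{\Left x})$.} Because $\beta$ is on-site, $\beta_g = \otimes_j \Ad{U_j(g)}$ maps $\AA_X$ onto itself for every $X \subset \ZZ$. This is clear on finite $X$, and it extends to infinite $X$ by density and continuity, using that $\beta_{g^{-1}}$ gives the reverse inclusion. By the equivariance $\alpha\circ\beta_g = \beta_g\circ\alpha$,
$$\beta_g\big(\alpha(\AA_{\Left x})\big) = \alpha\big(\beta_g(\AA_{\Left x})\big) = \alpha(\AA_{\Left x}).$$

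\emph{Step 2: $\beta_g$ preserves the commutant $\AA_{\Left{x-r}}'$.} Let $A$ commute with all of $\AA_{\Left{x-r}}$, and let $B \in \AA_{\Left{x-r}}$. By Step 1's observation, $\beta_g^{-1}(B) = \beta_{g^{-1}}(B)$ also lies in $\AA_{\Left{x-r}}$. Hence
$$[\beta_g(A), B] = \beta_g\big([A, \beta_{g^{-1}}(B)]\big) = 0,$$
so $\beta_g(\AA_{\Left{x-r}}') \subset \AA_{\Left{x-r}}'$. Applying the same argument to $g^{-1}$ gives equality.

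Combining the two steps,
$$\beta_g(\mathcal P^{[r]}_{x}) = \beta_g(\alpha(\AA_{\Left x})) \cap \beta_g(\AA_{\Left{x-r}}') = \mathcal P^{[r]}_{x}.$$
There is no real obstacle here. The only point requiring care is to avoid the FDQC description via $\eta_x$, whose gates are not symmetric, and to use instead the equality in \eqref{eq.boundary-algebra-eta-1d} between that description and the intrinsic intersection definition.
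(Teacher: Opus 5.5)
Your proof is correct and follows the paper's argument: $\alpha(\AA_{\Left x})$ is $\beta_g$-invariant by equivariance together with the on-site invariance of $\AA_{\Left x}$, and the commutant $\AA_{\Left{x-r}}'$ is $\beta_g$-invariant. The only cosmetic difference is in the second point. The paper identifies $\AA_{\Left{x-r}}'$ with $\AA_{\Left{x-r}^c}$ and invokes on-site invariance directly, whereas you show that the commutant of any $\beta$-invariant subalgebra is $\beta$-invariant, which avoids needing that identification.
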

\begin{proof} 
        Since the action $\beta$ is on-site, $\AA_{\Left{x-r}}' = \AA_{ L_{x-r}^c }$  is $G$-invariant. Moreover, $[\alpha,\beta_g]=0$, and hence  $\alpha (\AA_{\Left x}) $ is also $G$-invariant. 
\end{proof}

The boundary algebra $\mathcal P^{[r]}_{x}$ is therefore isomorphic to a symmetric finite-dimensional full matrix algebra $\eta_x(\AA_{\partial_r \Left x})$ contained in the finite-dimensional algebra $\AA_{[-r,r]}$. Every automorphism of a full matrix algebra is inner, so there is a collection of unitaries $\{W_{x}^{[r]}(g)\}_{g \in G} \subset \mathcal P^{[r]}_{x}$ such that
\begin{equation} \label{eq:1d boundary action}
    \beta_g (\mathcal P^{[r]}_{x}) = \Ad{W_{x}^{[r]}(g)} (\mathcal P^{[r]}_{x}), \qquad \qquad W_x^{[r]}(g) W_x^{[r]}(h) = \mu_x^{[r]}(g,h) W_x^{[r]}(gh),
\end{equation}
for a 2-cocycle $\mu_x^{[r]} \in Z^2(G,U(1))$ (see Appendix \ref{appendix.projective-reps} for a discussion on projective representations). The unitaries $W_x^{[r]}(g)$ are unique up to a choice of $U(1)$ phases. We define the \emph{index} of $(\alpha,\beta,\AA)$ as
\begin{equation} \label{eq.index1d}
    \ind(\alpha) = [\mu_x^{[r]}] \in H^2(G,U(1)).
\end{equation}

\begin{lemma} \label{lem.index-independent-of-r-1d}
    The index map \eqref{eq.index1d} is well defined. That is, the class $[\mu_x^{[r]}]$
    \begin{enumerate}
    \item does not depend on the choice of $r \geq nq$ made when defining the boundary \eqref{eq.boundary-region-1d};
    \item does not depend on the FDQC representation of $\alpha$;
    \item does not depend on $x \in \ZZ$.
    \end{enumerate}
\end{lemma}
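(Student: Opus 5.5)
The plan rests on one elementary fact about projective representations. Suppose $\caB \subset \caC$ is a $G$-invariant full matrix subalgebra of a finite-dimensional full matrix algebra $\caC$, on which $G$ acts by automorphisms. Then $\caC \simeq \caB \otimes (\caB' \cap \caC)$, and the relative commutant is again a $G$-invariant full matrix algebra. The implementing unitaries therefore factor as $W_{\caC}(g) = W_{\caB}(g) W_{\caB'\cap\caC}(g)$ up to phases, which gives the cocycle relation $[\mu_{\caC}] = [\mu_{\caB}] + [\mu_{\caB'\cap\caC}]$. Moreover, if $\caC = \caA_S$ with on-site $\beta$, the implementing unitaries can be taken to be $\otimes_{j\in S} U_j(g)$, which are genuine representations, so $[\mu_{\caA_S}] = 0$. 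I will use this repeatedly: the cocycle class of a tensor factor that is complemented inside an on-site-symmetric algebra by a finite on-site region is determined by that complement.

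\emph{Independence of $r$.} I take $r' > r \geq nq$. From \eqref{eq.boundary-algebra-eta-1d} we get $\mathcal P^{[r']}_x = \mathcal P^{[r]}_x \otimes \caA_{[x-r', x-r)}$, since $\alpha(\caA_{L_x}) = \mathcal P^{[r]}_x \otimes \caA_{L_{x-r}}$ and one splits $\caA_{L_{x-r}} = \caA_{[x-r',x-r)} \otimes \caA_{L_{x-r'}}$. Both factors are $G$-invariant, the first by Lemma \ref{lem.invariant-boundary}. The second factor is on-site, so it carries a linear representation. The fact above then gives $[\mu^{[r']}_x] = [\mu^{[r]}_x]$.

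\emph{Independence of the FDQC representation.} Since $\mathcal P^{[r]}_x = \alpha(\caA_{L_x}) \cap \caA'_{L_{x-r}}$ depends only on $\alpha$ and $r$, a second representation only changes the admissible lower bound on $r$. Taking $r$ larger than both bounds and using the previous step settles this item.

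\emph{Independence of $x$.} This is the main step. The idea is to compare $\mathcal P_x$ and $\mathcal P_{x+1}$ (I fix $r$ large) inside the finite algebra $\caC := \alpha(\caA_{[x-r, x+1)})$, which is isomorphic to $\caA_{[x-r,x+1)}$ as a $G$-algebra. The isomorphism holds because $\alpha$ commutes with $\beta$, so $\alpha$ intertwines the actions. Hence $[\mu_\caC] = 0$, since $\alpha\big(\otimes_{j} U_j(g)\big)$ implements $\beta_g$ on $\caC$ and is a genuine representation.

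On the other hand, I decompose $\alpha(\caA_{L_{x+1}}) = \alpha(\caA_{L_{x-r}}) \otimes \caC$. Using \eqref{eq.boundary-algebra-eta-1d} at both points, $\alpha(\caA_{L_{x-r}}) = \mathcal P_{x-r} \otimes \caA_{L_{x-2r}}$. Intersecting with $\caA'_{L_{x+1-2r}}$ suitably (with $r$ replaced by $2r$, which is allowed by step one), I will show that
$$\mathcal P^{[2r+1]}_{x+1} \simeq \mathcal P^{[r]}_{x-r} \otimes \caC \otimes (\text{on-site factor})$$
as $G$-invariant tensor factors. Additivity of cocycle classes then gives $[\mu_{x+1}] = [\mu_{x-r}]$.

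Running the same argument with $\caA_{[x-r,x)}$ in place of $\caC$ shows $[\mu_x] = [\mu_{x-r}]$ as well, so $[\mu_x]$ is constant in $x$. The obstacle is bookkeeping: verifying the tensor-factor decompositions of the relative commutants exactly, i.e. that the pieces commute, generate the whole algebra, and are each $\beta$-invariant. The key inputs for this are the locality of $\alpha$ and the equivariance $\alpha\beta_g=\beta_g\alpha$.
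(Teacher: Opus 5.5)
Your proposal is correct and follows the paper's proof. Items 1 and 2 are argued the same way, and for item 3 the paper uses your decomposition, $\caP^{[r_0]}_{x+x_0} = \caP^{[r_0-x_0]}_{x} \otimes \alpha(\caA_{[x,x+x_0)})$, with $\beta_g$ acting on the second factor through the linear representation $g \mapsto \alpha(U_{[x,x+x_0)}(g))$.

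Two small remarks. Your detour through $x-r$ is harmless but unnecessary: the same computation gives $\caP^{[r+1]}_{x+1} = \caP^{[r]}_x \otimes \alpha(\caA_{\{x\}})$ and compares $x$ and $x+1$ directly. Also, to match your display you should intersect with $\caA'_{L_{x-2r}}$ rather than $\caA'_{L_{x+1-2r}}$, and then no extra on-site factor appears.
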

\begin{proof} \hfill
\begin{enumerate}
    \item Let $r_0$ be a positive integer. Since $\eta_x|_{\AA_{[-r-r_0,-r-1]}} = \id$, it follows that 
        $$\mathcal P^{[r+r_0]}_{x} = \mathcal P^{[r]}_{x} \otimes \AA_{[-(r+r_0), -r-1]}.$$
        
        Consequently, 
        $$W_{x}^{[r+r_0]} (g) = c(g) W_x^{[r]} (g) \otimes U_{[-r-r_0,-r-1]} (g)$$
        for phases $c(g) \in U(1)$. Since $g \to U_i(g)$ are linear representations, we get $[\mu_{x}^{[r+r_0]}] = [\mu_x^{[r]}]$. 

        \item The unitaries $W_x^{[r]}$ are, up to $U(1)$ phases, uniquely determined by the symmetry action on the boundary algebra $\caP_x^{[r]}$. Moreover, the boundary algebra, by definition, depends on the FDQC representation of $\alpha$ only through the constraint $r \geq nq$. By item 1, the index does not depend on this choice.

        \item Let $x_0,r_0$ be positive integers such that $r_0>2r$ and $x_0<r$. We compute
        \begin{align*}
            \caP_{x+x_0}^{[r_0]} &= \alpha(\AA_{\Left {x+x_0}}) \cap \AA_{\Left{x+x_0-r_0}}' \\ 
            &= \left[ \alpha(\AA_{\Left {x}}) \otimes \alpha (\AA_{\Left{x+x_0} \setminus \Left{x}})  \right]\cap \AA_{\Left{x+x_0-r_0}}'
            \intertext{by simply using the definition. We proceed by using the commutativity between algebras $\alpha(\AA_{\Left{x+x_0} \setminus \Left {x}})$ and $\AA_{\Left{x+x_0-r_0}}$, which follows from the assumptions on $x_0$ and $r_0$. Thus we can continue:}
            &= \left[ \alpha(\AA_{\Left {x}}) \cap  \AA_{\Left{x+x_0-r_0}}'  \right] \otimes \alpha (\AA_{\Left{x+x_0} \setminus \Left{x}}) \\
            &= \caP_{x}^{[r_0-x_0]} \otimes \alpha (\AA_{\Left{x+x_0} \setminus \Left{x}}).
        \end{align*}
        Since $\beta_g \circ \alpha(\AA_{\Left {x+x_0} \setminus \Left{x}}) = \alpha \circ \Ad{U_{[x,x+x_0)}(g)} (\AA_{\Left {x+x_0} \setminus \Left{x}})$ and $g\to U_{[x,x+x_0)}(g)$ is a linear representation, we obtain 
        $$[\mu_{x+x_0}^{[r_0]}] = [\mu_{x}^{[r_0-x_0]}].$$
        The claim then follows from item 1 above.
\end{enumerate} 
\end{proof}

\begin{lemma} \label{lem.index-1d-symmetric-entanglers}
    The index \eqref{eq.index1d} descends to a homomorphism of monoids
    \begin{equation*}
        \ind: (\sEnt_{G,1} / \sim) \to H^2(G,U(1))
    \end{equation*}
    which we call `the index map'.
\end{lemma}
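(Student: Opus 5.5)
We must prove three things. First, the index is constant on stable equivalence classes. Second, it is multiplicative under stacking. Third, it is well defined on individual entanglers, which Lemma \ref{lem.index-independent-of-r-1d} already gives; we take that as given throughout.

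\textbf{Stacking with an identity on an on-site system.} Let $(\tilde\beta,\tilde\AA)$ be an on-site symmetry and consider $(\alpha\otimes\id,\beta\otimes\tilde\beta,\AA\otimes\tilde\AA)$. The same circuit, with identity gates on $\tilde\AA$, represents $\alpha\otimes\id$. Hence
$$\caP^{[r]}_x(\alpha\otimes\id)=\caP^{[r]}_x(\alpha)\otimes\tilde\AA_{\partial_r\Left x}.$$
The symmetry acts on this algebra by $W^{[r]}_x(g)\otimes\tilde U_{\partial_r \Left x}(g)$. Since $g\mapsto \tilde U_{\partial_r\Left x}(g)$ is a linear representation, the cocycle is unchanged up to a coboundary. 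This is the same argument as item 1 of Lemma \ref{lem.index-independent-of-r-1d}.

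\textbf{Composition with a symmetric-gate FDQC.} Let $\gamma$ be an FDQC with symmetric gates and $\alpha'=\alpha\circ\gamma$. Concatenating the circuits gives an FDQC representation of $\alpha'$; choose $r$ large enough for both circuits. Let $\eta_x^\gamma$ be the boundary circuit of $\gamma$, built as in the construction preceding \eqref{eq.boundary-algebra-eta-1d}. It is made of symmetric gates, so it commutes with $\beta_g$. Then
$$\alpha'(\AA_{\Left x})=\alpha(\gamma(\AA_{\Left x}))=\alpha\big(\eta^\gamma_x(\AA_{\partial_{r}\Left x})\otimes\AA_{\Left{x-r}}\big).$$
Intersecting with $\AA_{\Left{x-r'}}'$ for $r'$ large, we get
$$\caP^{[r']}_x(\alpha')=\alpha\circ\eta^\gamma_x\circ\alpha^{-1}\big(\caP^{[r']}_x(\alpha)\big).$$
Equivalently, one can just note that the conjugating map is the inner automorphism $\Ad{\alpha(V)}$, where $V$ is the product of the relevant symmetric gates of $\gamma$. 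The element $\alpha(V)$ is $\beta$-invariant because $\alpha$ is equivariant. It also lies in a finite region, and it maps $\caP^{[r']}_x(\alpha)$ onto $\caP^{[r']}_x(\alpha')$.

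A $G$-equivariant isomorphism of boundary algebras transports the implementing unitaries: $\Ad{\alpha(V)}(W(g))$ implements $\beta_g$ on the image. The projective cocycle is therefore identical. Combining the two steps, stably equivalent entanglers have the same index.

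\textbf{Homomorphism under stacking.} For $\alpha\otimes\alpha'$ on $\AA\otimes\AA'$, the product circuit gives
$$\caP^{[r]}_x(\alpha\otimes\alpha')=\caP^{[r]}_x(\alpha)\otimes\caP^{[r]}_x(\alpha').$$
The symmetry is implemented there by $W_x(g)\otimes W'_x(g)$, whose cocycle is $\mu_x\mu'_x$. The identity entangler has trivial boundary action up to a linear representation, so the index maps the identity class to the trivial class. Hence the map is a monoid homomorphism.

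\textbf{Main obstacle.} The delicate step is the second one: identifying the boundary algebra of $\alpha\circ\gamma$ with a symmetric inner conjugate of that of $\alpha$. One must choose $r$ large enough that the commutant intersection behaves, so that the conjugating unitary localized near $x$ does not spoil the condition of commuting with $\AA_{\Left{x-r'}}$. I would handle this by fattening $r'$ past the ranges of both $\alpha$ and $\gamma$ and invoking item 1 of Lemma \ref{lem.index-independent-of-r-1d} to compare different $r$.
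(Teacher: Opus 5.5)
Your proposal is correct, but for the key step, invariance under composition with an FDQC $\gamma$ with symmetric gates, you take a different route from the paper.

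\emph{The paper's route.} It argues by truncation. It drops from $\gamma$ all gates meeting $\Left x$ to get $\gamma_{R_x}$. Then $\alpha\circ\gamma_{R_x}$ agrees with $\alpha$ on $\AA_{\Left x}$, so the boundary algebra and its $G$-action at $x$ are literally the same. Moreover, $\alpha\circ\gamma_{R_x}$ and $\alpha\circ\gamma$ have the same boundary data at points far to the right of $x$. The paper closes the loop with the $x$-independence of the index (item 3 of Lemma \ref{lem.index-independent-of-r-1d}).

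\emph{Your route.} You stay at a single point $x$ and write down an explicit $G$-equivariant isomorphism $\Ad{\alpha(V)}:\caP_x^{[r']}(\alpha)\to\caP_x^{[r']}(\alpha\circ\gamma)$, with $V$ the finite product of symmetric boundary gates of $\gamma$. This gives equality of the cocycles on the nose, without appealing to $x$-independence. The price is the commutant check you flagged: $r'$ must extend beyond the support of $\alpha(V)$, so that $\Ad{\alpha(V)}$ preserves $\AA_{\Left{x-r'}}'$, and be at least as large as the concatenated circuit requires.

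\emph{What each buys.} The paper's truncation argument uses only locality, which is why Proposition \ref{prop.index} can be dispatched as ``completely analogous'' in 2d. Your inner-unitary shortcut relies on the boundary part of $\gamma$ having finitely many gates, which holds only in 1d. In 2d you would have to keep the map $\alpha\circ\eta^\gamma_x\circ\alpha^{-1}$, which is then an equivariant QCA along the boundary strip. You would also need to compare the two spin-chain structures on $\caP_x^{[r']}(\alpha\circ\gamma)$ via Lemma \ref{lem.equivalence-symmetries-qca}, as in item 2 of Lemma \ref{lem.index-2d-does-not-depend-FDQC}.

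Your direct treatment of stacking with $\id$ is equivalent to what the paper obtains from additivity together with the triviality of $\ind(\id)$.
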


\begin{proof}
    Fix a symmetric entangler $(\alpha, \beta, \AA)$.
    \begin{enumerate}

        

        \item \emph{The index map is invariant w.r.t.~composition with FDQCs with symmetric gates:} Firstly, notice that the index is manifestly locally computable: it depends on the projective action $g \to W_x^{[r]} (g)$ on $\caP_x^{[r]}$, which depends only on the local action of $\alpha$ on $\AA_{\partial_r \Left x}$. Let $(\gamma, \beta,\AA)$ be an FDQC with symmetric gates, and denote by $(\gamma_{R{x}}, \beta, \AA)$ an FDQC with symmetric gates obtained from $\gamma$ by dropping gates intersecting $\Left x$. By local computability and independence of half-lines we obtain
        \begin{align*}
            \ind(\alpha) = \ind(\alpha \circ \gamma_{R_{x}}) = \ind(\alpha \circ \gamma). 
        \end{align*}

        \item \emph{The index map is a homomorphism :} If $\alpha = \id$, then $\caP_x^{[r]} = \caA_{[x-r, x-1]}$ and $W_x^{[r]}(g) = U_{[x- r, x-1]}(g)$. Triviality of its index follows from the fact that $\beta_g$ is an on-site action, i.e., each $g \to U_i(g)$ is a linear representation of $G$. Moreover, if two given symmetric entanglers $\al$ and $\tilde \al$ yield boundary algebras $\caP_x^{[r]}$ and $\tilde \caP_x^{[r]}$ with symmetry implemented by unitaries $\{W_x^{[r]} (g)\} \subset \caP_x^{[r]}$ and $\{\tilde W_x^{[r]} (g)\} \subset \tilde \caP_{x}^{[r]}$, then the stacked symmetric entangler yields boundary algebra $\caP_x^{[r]} \otimes \tilde \caP_x^{[r]}$ with symmetry implemented by $\{ W_x^{[r]} (g) \otimes \tilde W_x^{[r]} (g) \}$. It follows that the index is additive with respect to stacking.
    \end{enumerate}
    
\end{proof}

\subsection{Surjectivity of the index map in 1d}

\begin{lemma} [Surjectivity of the index map in 1d] \label{lem.surjectivity-1d} 
 Given a class $[\mu] \in H^2(G,U(1))$, there exists a 1d symmetric entangler $(\alpha, \beta,\AA)$ such that $\ind(\alpha) = [\mu]$. 
\end{lemma}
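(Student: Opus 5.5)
The natural construction is the standard ``cluster-state'' type entangler built from a projective representation. Fix a representative cocycle $\mu \in Z^2(G,U(1))$ and choose a projective unitary representation $g \mapsto V(g)$ on a finite-dimensional Hilbert space $\mathcal K$ with cocycle $\mu$. For instance, take the twisted regular representation on $\ell^2(G)$, $V(g)\ket{h} = \mu(g,h)\ket{gh}$. Then $\bar V(g)$ is projective with cocycle $\bar\mu$, and $V \otimes \bar V$ is a linear representation.

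On each site $j\in\Z$ put $\HH_j = \mathcal K \otimes \mathcal K$, with on-site linear representation $U_j(g) = \bar V(g) \otimes V(g)$. I will call the first factor the left half $\ell_j$ and the second the right half $\rho_j$. Let $\Omega_{j}$ be the swap unitary exchanging the factor $\rho_j$ with the factor $\ell_{j+1}$. Define $\alpha$ to be the depth-one, block-partitioned translation $\alpha = \otimes_j \Ad{\Omega_j}$ with blocks $\{j,j+1\}$; it can be made strictly block partitioned by regrouping the factor $\rho_j$ together with $\ell_{j+1}$.

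Equivariance follows because $\beta_g(\Omega_j)$ is $\Omega_j$ conjugated by $V(g)\otimes\bar V(g)$ acting on $\rho_j \otimes \ell_{j+1}$. Swapping two factors carrying $V$ and $\bar V$ respectively changes which representation acts where. This is the main subtlety: a bare swap is \emph{not} equivariant when $V \neq \bar V$.

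I will therefore fix this as follows. Take $\HH_j = \mathcal K\otimes\bar{\mathcal K} \otimes \mathcal K \otimes \bar{\mathcal K}$, with $U_j(g) = V\otimes \bar V\otimes V\otimes\bar V$, a linear representation. Let $\alpha$ swap the second factor of site $j$ with the fourth factor of site $j+1$. Both factors carry $\bar V$, so the gate commutes with $\bar V(g)\otimes\bar V(g)$ and hence with $U_j(g)\otimes U_{j+1}(g)$. But then $\alpha$ commutes with $\beta_g$ gate-by-gate, i.e.\ it has symmetric gates and trivial index.

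So instead I will use the genuinely non-symmetric-gate version. Take $\HH_j = \mathcal K\otimes\bar{\mathcal K}$ with $U_j = V\otimes\bar V$, and let $\alpha = \tau_{\bar{\mathcal K}}$ shift all $\bar{\mathcal K}$ factors one site to the right. This is a translation on a sublayer, which is a QCA with nontrivial GNVW index, not an FDQC. To get an FDQC I will compose it with the inverse shift of an auxiliary trivial layer. Concretely, take $\HH_j = (\mathcal K\otimes\bar{\mathcal K}) \otimes (\mathcal K'\otimes \bar{\mathcal K}')$, where the second pair carries the trivial representation and $\dim \mathcal K'=\dim\mathcal K$. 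Let $\alpha$ swap $\bar{\mathcal K}$ at site $j$ with $\bar{\mathcal K}'$ at site $j$, then swap $\bar{\mathcal K}'$ at $j$ with $\bar{\mathcal K}$ at $j+1$. Each individual gate is an FDQC gate, and the total circuit is equivariant: after the full circuit, each $\bar V$ ends up on a $\bar{\mathcal K}$ factor, having merely been translated.

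Equivariance of the composite is checked by verifying $\alpha\circ\beta_g\circ\alpha^{-1}=\beta_g$ directly, which holds because $\alpha$ permutes tensor factors, mapping each $\bar{\mathcal K}$-slot to a $\bar{\mathcal K}$-slot and each trivial slot to a trivial slot, even though its intermediate gates are not symmetric.

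Finally, compute the index. By \eqref{eq.boundary-algebra-eta-1d}, $\alpha(\caA_{\Left x})$ equals $\caA_{\Left{x-r}}$ tensored with a boundary algebra. Up to on-site linear pieces, the boundary algebra differs from $\caA_{\partial_r\Left x}$ by having lost one $\bar{\mathcal K}$ factor, which was replaced by a trivially transforming $\bar{\mathcal K}'$ factor. Hence $W^{[r]}_x(g)$ equals $V(g)$ times linear representations, up to phases, so $\mu^{[r]}_x$ is cohomologous to $\mu$; if the bookkeeping gives $\bar\mu$, I replace $V$ by $\bar V$. This bookkeeping step — identifying exactly which projective factor is left unpaired in the boundary algebra — is where I expect the care to be needed. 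Everything else is routine.
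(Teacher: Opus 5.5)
Your final construction is essentially the paper's: the translation of a projectively transforming layer composed with the opposite translation of a trivially transforming layer of the same dimension (the paper uses $S\otimes\id\otimes S^*$ on $\caA^{\mu}\otimes\caA^{\bar\mu}\otimes\caA^{\mu}$) is an equivariant FDQC whose boundary algebra carries exactly one unpaired projective factor. Your explicit depth-two swap circuit simply replaces the paper's appeal to the GNVW index, and the $\mathcal K'$ factor is superfluous; note that with the paper's convention $\alpha=\gamma_2\circ\gamma_1$ your circuit shifts the $\bar{\mathcal K}$ layer to the right, so the boundary algebra \emph{gains} a $\bar{\mathcal K}$ factor and the index is $-[\mu]$, which your fallback (or just running the construction for $\bar\mu$) handles.
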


\begin{proof}
    For any 2-cocycle $[\mu]$, let $(V_{\mu}, \mathcal K^{\mu})$ be a projective unitary representation $g \mapsto V_\mu(g)$ of $G$ on a finite dimensional Hilbert space $\KK^{\mu}$ with $V_{\mu}(g) V_{\mu}(h) = \mu(g, h) V_{\mu}(g, h)$.
    
    Let $\caA^{\mu}$ be the spin chain with on-site degrees of freedom $\caA_j^{\mu} \simeq \End( \mathcal K^{\mu} )$. For each $j \in \Z$ we then have a projective representation $V_{j, \mu} : G \rightarrow \caU( \caA_j^{\mu})$ whose simultaneous action defines $\beta^{\mu}_g := \otimes_j \Ad{V_{j, \mu}(g)}$ for each $g \in G$.  
    
    Consider the stacked chain
    $ \caA := \caA^{\mu} \otimes \caA^{\bar\mu} \otimes \caA^{\mu}$ with on-site symmetry $\beta = \beta^{\mu} \otimes \beta^{\bar \mu} \otimes \id$. This is indeed an on-site symmetry because the on-site actions are given by conjugation with $U_i=V_{i,\mu} \otimes V_{i,\bar \mu} \otimes \I_i$, which is a unitary representation of $G$. Let $U_{[x,y]}$ be the product $\prod_{x\leq i \leq y} U_i$. 

    Let $\al = S \otimes \id \otimes S^*$ where $S$ is the right shift on the spin chain $\caA^{\mu}$. One easily verifies that $\al \circ \beta_g = \beta_g \circ \al$ for all $g \in G$, and since the GNVW index of $\al$ is trivial \cite{Gross_2012}, it is an FDQC of range $R(\alpha)=1$, and $nq = 4$. Therefore, $(\al, \beta, \caA)$ is a symmetric entangler.

    Let us now compute its index. Let us write $\caA_i = \caA_{i, 1} \otimes \caA_{i, 2} \otimes \caA_{i, 3}$ for the decomposition of each on-site degree of freedom of $\caA = \caA^{\mu} \otimes \caA^{\bar \mu} \otimes \caA^{\mu}$. With $L_x = \{ i \in \Z \, : \, i < x\}$, as introduced before, we have 
    $$\al(\caA_{L_0}) = \caA_{L_{-1}} \otimes \left(\caA_{-1, 1}\otimes \caA_{-1, 2}\right)\otimes \caA_{0, 1}.$$
    With the choice $r = 4$ the boundary algebra is $\caP_{0}^{[r]} = \caA_{[-4, -2]} \otimes \caA_{-1, 1} \otimes \caA_{-1, 2} \otimes \caA_{0, 1}$. The symmetry $\beta$ restricts to $\caP_{0}^{[r]}$ as the projective representation $U_{[-4, -2]} \otimes V_{\bar \mu} \otimes V_{\mu} \otimes V_{\mu}$. We conclude that $\ind(\al) = [\mu]$.
\end{proof}

\subsection{Symmetric blending in 1d} \label{sec.symmetric-blending-1d}

Let $(\alpha,\beta,\AA)$ be a symmetric entangler in $d=1$ such that $\ind(\alpha) = [1]$, i.e. the representations $g \to W_x^{[r]}(g)$ of Eq. \eqref{eq:1d boundary action} are linear. Fix an FDQC representation of $\alpha$, and a choice of $r \geq nq$.

\subsubsection{Auxiliary spin chain}

Let $\tilde \AA$ be a spin chain with on-site degrees of freedom
\begin{align*}
    \tilde \caA_{x-1} \simeq \begin{cases} \caP_x^{'[r]} \otimes \mathcal B(\mathbb C[G]), \qquad &x \in 3r\ZZ, \\
    \mathbb C, & \text{otherwise,}
    \end{cases}
\end{align*}
where $\caP^{'[r]}_x \simeq \caP_x^{[r]}$, and $\mathcal B(\mathbb C[G])$ is the matrix algebra over $\mathbb C[G]$.

For each $x \in 3r \Z$, fix an isomorphism
\begin{align} \label{eq.pi-1d}
\pi_x : \caP_x^{[r]} \to \caP^{'[r]}_x
\end{align}
between the boundary algebra at $x$ and its copy.

We provide the chain $\tilde \caA$ with an on-site symmetry $\{\tilde \beta_g\}_{g \in G}$ acting on $\tilde \caA_{x-1}$ as $ \beta_g' \otimes \Ad{U_g}$ for each $x \in 3r \Z$, where $ \beta' = \pi_x \circ \beta|_{\caP_x^{[r]}} \circ \pi^{-1}_x$ and $g \mapsto U_g$ is the regular representation of $G$ on $\C[G]$.


\subsubsection{Symmetric blending}


Consider the symmetric entangler $(\alpha \otimes \id, \beta \otimes \tilde \beta, \AA \otimes \tilde \AA)$. By definition, it has associated boundary algebras
$$\caP_x^{[r]} \otimes \AA'_{\partial_r\Left{x}}.$$
Moreover, since $\AA_{\partial_r \Left x} $ and $\caP_x^{'[r]}$ are $G$-invariant (with respect to $\beta$ and $\beta'$ respectively) full matrix algebras with the same dimension and the same (trivial) index, Lemma \ref{lem.isomorphism-projective-reps} provides a $G$-equivariant isomorphism 
\begin{equation} \label{eq.iota-1d}
\iota_x: \caP^{'[r]}_x \otimes \caB(\mathbb C[G]) \to \AA_{\partial_r \Left x} \otimes \caB(\mathbb C[G]).
\end{equation}

\begin{lemma} [Symmetric blending in 1d] \label{lem.symmetric-blending-1d}
    For each $x \in 3r\ZZ$, there is a symmetric blending between $(\alpha \otimes \id, \beta \otimes \tilde \beta, \AA \otimes \tilde \AA)$ and $(\id, \beta \otimes \tilde \beta, \AA \otimes \tilde \AA)$ at site $x$. That is, a symmetric entangler $(\theta_x,\beta \otimes \tilde \beta, \AA \otimes \tilde \AA)$ of range at most $r$ such that 
    \begin{align} \label{eq.symmetric-blending-defining-prop}
        \theta_x|_{\Left {x-r}} = \alpha|_{\Left{x-r}} \otimes \id, \qquad \theta_x|_{\Left{x}^c} = \id \otimes \id.
    \end{align}
\end{lemma}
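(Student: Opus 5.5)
Write $\theta_x := \tilde\eta_x \circ \big((\alpha_{L}) \otimes \id\big)$, with $\alpha_L$ the gates of $\alpha$ lying to the left of the cut and $\tilde\eta_x$ a symmetric correction near the cut. Concretely, fix the FDQC representation of $\alpha$ and split it as $\alpha = \eta_x \circ \alpha_{<} \circ \alpha_{>}$. Here $\alpha_{<}$ (resp.\ $\alpha_{>}$) consists of the gates not in $\eta_x$ that are supported in $\Left x$ (resp.\ $\Left x^c$), arranged so that $\alpha(\AA_{\Left x}) = \eta_x(\AA_{\Left x})$ as in \eqref{eq.boundary-algebra-eta-1d}.

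The candidate blend is the automorphism
$$\theta_x := \alpha \otimes \id \quad \text{on } \AA_{\Left{x-r}}\otimes\tilde\AA_{\Left{x-r}}, \qquad \theta_x := \id \quad \text{on } \AA_{\Left x^c}\otimes \tilde\AA_{\Left x^c},$$
glued across $\partial_r\Left x$ using the isomorphisms $\pi_x$ and $\iota_x$. This is possible because $\alpha(\AA_{\Left x}) = \caP_x^{[r]}\otimes\AA_{\Left{x-r}}$. Define $\theta_x$ on $(\AA\otimes\tilde\AA)_{\Left x}$ by sending $\AA_{\Left{x-r}}$ via $\alpha$, which lands in $\caP_x^{[r]}\otimes\AA_{\Left{x-r}}$ since $\alpha(\AA_{\Left{x-r}})\subset\alpha(\AA_{\Left x})$. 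Send the boundary factor $\AA_{\partial_r\Left x}\otimes\caB(\C[G])$ (the regular-representation factor of $\tilde\AA_{x-1}$) through $\iota_x^{-1}$ into $\caP_x^{'[r]}\otimes\caB(\C[G])\subset\tilde\AA_{x-1}$. Finally, send the factor $\caP_x^{'[r]}$ of $\tilde\AA_{x-1}$ via $\pi_x^{-1}$ into $\caP_x^{[r]}$.

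Dimension counting shows that the images tensor-decompose $(\AA\otimes\tilde\AA)_{\Left x}$. Indeed $\alpha$ restricted to $\AA_{\Left x}$ is an isomorphism onto $\caP_x^{[r]}\otimes\AA_{\Left{x-r}}$, and the remaining two factors are exchanged isomorphically. The result is therefore an automorphism of $(\AA\otimes\tilde\AA)_{\Left x}$, and we extend it by the identity on the complement. To get the $\theta_x|_{\Left{x-r}}$ property literally, compose with an extra identification: on $\AA_{\Left{x-r}}$ the map is $\alpha$ by construction. The other auxiliary sites $\tilde\AA_{y-1}$ with $y\in 3r\Z$, $y<x$ and $y \neq x$ are left alone (identity), which is consistent with $\alpha\otimes\id$.

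Equivariance follows factor by factor. $\alpha$ commutes with $\beta$. The map $\pi_x$ intertwines $\beta|_{\caP_x^{[r]}}$ with $\beta'$ by the definition of $\beta'$. The map $\iota_x$ is $G$-equivariant by Lemma \ref{lem.isomorphism-projective-reps}, which is where triviality of $\ind(\alpha)$ enters, together with the $\caB(\C[G])$ padding. Since $\tilde\beta$ acts on $\tilde\AA_{x-1}$ as $\beta'\otimes\Ad{U_g}$, each piece of $\theta_x$ intertwines $\beta\otimes\tilde\beta$ with itself, and hence so does $\theta_x$.

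Locality holds because the only nontrivial gluing happens inside $[x-2r,x)$ plus the site $x-1$, and $\alpha$ itself has range at most $r$. Hence $\theta_x$ is a QCA of range $\lesssim r$, possibly after enlarging $r$, which Lemma \ref{lem.index-independent-of-r-1d} permits.

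The main obstacle is showing that $\theta_x$ is an FDQC and not merely an equivariant QCA. My approach is to write $\theta_x = \Ad{V}\circ(\alpha_{\mathrm{trunc}}\otimes\id)$, where $\alpha_{\mathrm{trunc}}$ is the truncated circuit of $\alpha$ keeping only the gates in $\Left x$ together with $\eta_x$. This truncation is an FDQC agreeing with $\alpha$ on $\Left{x-r}$. The correction $\Ad{V}$ is the automorphism of the finite-dimensional algebra $(\AA\otimes\tilde\AA)_{[x-2r,x)}$ that realizes the swap of the $\caP_x$ and $\caP'_x$ factors composed with $\iota_x$. Every automorphism of a full matrix algebra is inner, so $\Ad{V}$ is a single local gate, and $\theta_x$ is an FDQC of finite depth. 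Its range is controlled, and it is not required to have symmetric gates.
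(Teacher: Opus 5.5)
Your ingredients are the paper's: $\pi_x$, $\iota_x$, factorwise equivariance, and an inner correction on a finite window to get an FDQC. The gap is that the gluing you write down does not define a homomorphism. A map prescribed factor by factor on $\AA_{\Left{x-r}}\otimes(\AA_{\partial_r\Left x}\otimes\mathcal B(\mathbb C[G]))\otimes\caP^{'[r]}_x$ needs three images to commute pairwise and generate $(\AA\otimes\tilde\AA)_{\Left x}$: $\alpha(\AA_{\Left{x-r}})$, $\caP^{'[r]}_x\otimes\mathcal B(\mathbb C[G])$, and $\pi_x^{-1}(\caP^{'[r]}_x)=\caP^{[r]}_x$. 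They do not, because $\caP_x^{[r]}$ is the relative commutant of $\AA_{\Left{x-r}}$ inside $\alpha(\AA_{\Left x})$, not that of $\alpha(\AA_{\Left{x-r}})$ inside $\AA_{\Left x}$. For instance, suppose $\alpha$ contains a swap gate on $\{x-r-1,x-r\}$ (symmetric when both sites carry the same representation). Then $\AA_{x-r}$ lies in both $\alpha(\AA_{\Left{x-r}})$ and $\caP_x^{[r]}$. If instead $\alpha$ contains a swap gate on $\{x-1,x\}$, then $\AA_x\subset\caP^{[r]}_x$. Your map then sends $\pi_x(\AA_x)$ onto $\AA_x$, which is also the image of $\AA_x$ under the identity on $\Left{x}^c$, so injectivity fails. ``Dimension counting'' cannot repair a failure of commutation. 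The same problem appears in your last paragraph. There $\alpha_{\mathrm{trunc}}$ contains $\eta_x$, which acts on $[x,x+r)$, so no gate supported in $[x-2r,x)$ can make $\theta_x$ the identity on $\Left{x}^c$.

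The fix, which is what the paper does, is to apply $\alpha$ to all of $\AA_{\Left x}$ first and only afterwards use $\pi_x,\iota_x$, in the forward direction. Set $\theta_x=\theta^{[2]}\circ(\alpha\otimes\id)$ on $\AA_{\Left x}\otimes\tilde\AA_{x-1}$, where $\theta^{[2]}=\pi_x\otimes\id\otimes\iota_x$ acts on $\alpha(\AA_{\Left x})\otimes\tilde\AA_{x-1}=\caP_x^{[r]}\otimes\AA_{\Left{x-r}}\otimes(\caP^{'[r]}_x\otimes\mathcal B(\mathbb C[G]))$. In words, the boundary part of $\alpha(\AA_{\Left x})$ is parked in the ancilla, and the ancilla is moved into $\AA_{\partial_r\Left x}\otimes\mathcal B(\mathbb C[G])$. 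Both steps are isomorphisms between genuine tensor decompositions, so $\theta_x$ preserves $\AA_{\Left x}\otimes\tilde\AA_{x-1}$ and extends by the identity. With this $\theta_x$ your FDQC argument goes through. The map $\theta_x\circ(\alpha_{\mathrm{trunc}}\otimes\id)^{-1}$ equals $\theta^{[2]}$ on $\alpha(\AA_{\Left x})\otimes\tilde\AA_{x-1}$ and $\eta_x^{-1}$ on $\eta_x(\AA_{\Left{x}^c})$. It therefore fixes $\AA_{\Left{x-r}}$ and $\AA_{\Left{x+r}^c}$, and is inner on the finite window $[x-r,x+r)$ together with the ancilla at $x-1$.

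Note that this $\theta_x$ is in general only guaranteed to agree with $\alpha\otimes\id$ on $\AA_{\Left{x-2r}}$. For $a$ supported in $[x-2r,x-r)$, the component of $\alpha(a)$ in $\caP^{[r]}_x$ is moved into the ancilla. This is harmless if the cuts are spaced a bit further apart. If you want literal agreement on $\AA_{\Left{x-r}}$, as you were aiming for, your original recipe can be repaired by replacing $\caP^{[r]}_x$ with $Q=\AA_{[x-2r,x)}\cap\alpha(\AA_{\Left{x-r}})'$. This $Q$ is $G$-invariant and has the size of $\AA_{\partial_r\Left x}$. Being the relative commutant of $\caP^{[r]}_{x-r}$ in $\AA_{[x-2r,x)}$, it carries a projective action of class $-\ind(\alpha)=[1]$. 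So you can map $\AA_{\partial_r\Left x}\otimes\mathcal B(\mathbb C[G])$ equivariantly onto $Q\otimes\mathcal B(\mathbb C[G])$, and the ancilla $\caP^{'[r]}_x$ is then not needed at all.
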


\begin{proof}
    We use the notation introduced above, and particularly the definitions \eqref{eq.pi-1d} and \eqref{eq.iota-1d}. Let 
    $$
    \theta_x := ( \theta^{[2]}\circ\theta^{[1]}) \otimes \id_{\Left{x}^c},
    $$
    where $\theta^{[1]}$ and $\theta^{[2]}$ (depending on $x$) are defined as follows: 
    \begin{enumerate}
        \item The homomorphism $\theta^{[1]}: \AA_{\Left x} \otimes \tilde \AA_{x-1} \to \al(\AA_{\Left x}) \otimes \tilde \AA_{x-1}$ is defined by
    \[
    \theta^{[1]}:\quad
    \begin{tikzcd}[column sep=1.8em, row sep=4em]
       \AA_{\Left x} \arrow[d, "\alpha"'] 
      & \otimes \arrow[d, phantom]
      & \caP^{'[r]}_x \arrow[d, "\mathrm{id}"']
      & \otimes \arrow[d, phantom]
      &\mathcal{B}(\mathbb C[G]) \arrow[d, "\mathrm{id}"'] 
       \\
       \alpha(\AA_{\Left x})
      & \otimes 
         & \caP^{'[r]}_x
      & \otimes 
      &\mathcal{B}(\mathbb C[G]).
    \end{tikzcd}
    \]
    
        \item Given the identity
    $
     \alpha(\caA_{\Left x}) =  \caP_x^{[r]} \otimes \caA_{\Left{x-r}}
    $, we define $\theta^{[2]}: \al(\AA_{\Left x}) \otimes \tilde \AA_{x-1} \to \AA_{\Left x} \otimes \tilde \AA_{x-1}$ by:
    \[
    \theta^{[2]}:\quad
    \begin{tikzcd}[column sep=1.1em, row sep=4em]
     \mathcal P_x^{[r]} 
     \arrow[d, "\pi_x"'] 
      & \otimes \arrow[d, phantom] 
      & \AA_{ \Left{x-r}} \arrow[d, "\id"'] 
      & \otimes \arrow[d, phantom] 
      & \caP^{'[r]}_x \otimes \mathcal B(\mathbb C[G]) \arrow[d, "\iota_x"'] \\
    \caP^{'[r]}_x  & \otimes 
      & \AA_{\Left{x-r}}
      & \otimes 
      & \mathcal A_{\partial_r \Left x} \otimes \mathcal B(\mathbb C[G]).
    \end{tikzcd}
    \]
    \end{enumerate}
    Since all maps are manifestly $G$-equivariant, so is $\theta_x$. By construction, Eq. \eqref{eq.symmetric-blending-defining-prop} is satisfied, which also implies that $\theta_x$ is a FDQC with range at most $r$.
\end{proof}

\subsection{Injectivity of the index map in 1d} \label{sec.disentangling-symmetric-entanglers-1d}

\begin{lemma}\label{lem:injectiveness-1d}
    The index map $\ind: (\sEnt_{G,1} / \sim) \to H^2(G,U(1))$ is injective.
\end{lemma}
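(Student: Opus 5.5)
Given a symmetric entangler $(\alpha,\beta,\AA)$ with $\ind(\alpha)=[1]$, we will show that it is stably equivalent to $(\id,\beta,\AA)$. Injectivity then follows, since stable equivalence classes form a monoid on which $\ind$ is a homomorphism (Lemma \ref{lem.index-1d-symmetric-entanglers}). The plan is to cut $\alpha\otimes\id$ into symmetric pieces localized between consecutive blending points, using the symmetric blendings of Lemma \ref{lem.symmetric-blending-1d} placed periodically along the chain.

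\textbf{Step 1.} Stack with the auxiliary chain $(\tilde\AA,\tilde\beta)$ and work with $\alpha\otimes\id$. For each $x\in 3r\ZZ$, Lemma \ref{lem.symmetric-blending-1d} gives a symmetric entangler $\theta_x$ of range at most $r$. It agrees with $\alpha\otimes\id$ on $\Left{x-r}$ and is trivial on $\Left{x}^c$. Set
\begin{equation*}
\Theta:=\bigotimes_{x\in 3r\ZZ}\theta_x\circ(\alpha\otimes\id)^{-1}\big|_{\text{strip}_x},
\end{equation*}
or more concretely compare $\alpha\otimes\id$ with the ``product of differences'' $\theta_{x+3r}\circ\theta_x^{-1}$. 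The latter acts as $\alpha\otimes\id$ deep inside $[x,x+3r)$ and as the identity far away. The cleanest choice is probably the automorphisms
\begin{equation*}
\rho_x:=\theta_{x+3r}\circ\theta_x^{-1}.
\end{equation*}
Each $\rho_x$ is $G$-equivariant. It equals the identity on $\Left{x-r}$, because there $\theta_x$ and $\theta_{x+3r}$ both equal $\alpha\otimes\id$. It also equals the identity on $\Left{x+3r}^c$. Hence $\rho_x$ is supported on the finite region $[x-2r,x+3r+r)$ and is therefore inner, $\rho_x=\Ad{V_x}$.

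\textbf{Step 2.} The unitary $V_x$ is determined only up to a phase, and $\rho_x$ commutes with $\beta_g$. Hence $\beta_g(V_x)=\chi_x(g)V_x$ for a character $\chi_x$ of $G$. This is the main obstacle: a nontrivial character means the gate is not symmetric. I would try to absorb it by tensoring with one more ancilla carrying a one-dimensional site $\C[G]$ with the regular representation. Alternatively, I would observe that the telescoping product of the $\rho_x$ (taken on even and odd sublattices) must reproduce $\alpha\otimes\id$, which has no charge. Failing that, I would multiply $V_x$ by an on-site unitary of charge $\bar\chi_x$ acting on the regular-representation factor $\caB(\C[G])$, which exists since that representation contains every character, and correct the compensating layer accordingly.

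\textbf{Step 3.} Write $\alpha\otimes\id$ as a composition of $\theta$'s and $\rho$'s. The regions of the $\rho_x$ overlap only between neighbours, so I would split $3r\ZZ$ into the two sublattices $6r\ZZ$ and $6r\ZZ+3r$. I would then check that $\alpha\otimes\id=\gamma_2\circ\gamma_1\circ\delta$, where $\gamma_1,\gamma_2$ are the products of the $\rho_x$ over the two sublattices, and $\delta$ is built from a single blending pattern, possibly via a second application of the same trick to $\alpha^{-1}$. Checking this factorization on local observables should be routine using the support properties of Step 1. If a leftover term remains, I would use the swindle argument of Appendix \ref{appendix.swindle} together with Lemma \ref{lem:composition is equivalent to stacking} to kill it. The outcome is that $\alpha\otimes\id$ is a finite-depth circuit with symmetric gates, so $(\alpha,\beta,\AA)\sim(\id,\beta,\AA)$.
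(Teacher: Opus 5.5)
There is a genuine gap, and it sits exactly where you place ``the main obstacle'' in Step~2. The characters $\chi_x$ with $\beta_g(V_x)=\chi_x(g)V_x$ are in general nontrivial, and no local manipulation removes them. In a finite region a product of symmetric gates is a symmetric unitary, so the $H^1(G,U(1))$-charge of an equivariant inner automorphism is a local invariant. None of your proposed fixes gets around this:
\begin{itemize}
\item Stacking with an ancilla on which the entangler acts trivially leaves the charge of $V_x\otimes\I$ unchanged.
\item Multiplying $V_x$ by a charged unitary $w_x$ in the $\caB(\C[G])$ factor only moves the charges into a compensating layer $\bigotimes_x\Ad{w_x^*}$. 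That layer is again a block partitioned symmetric entangler carrying the same on-site charges.
\item The telescoping product is infinite, so there is no ``total charge'' that would force the $\chi_x$ to vanish.
\end{itemize}
The missing idea is that you never need to neutralize the charges. By Lemma~\ref{lem.equivalence-block-partitioned-1d}, every block partitioned symmetric entangler is stably equivalent to $\id$: the on-site charges are pumped to infinity by an Eilenberg--Mazur swindle built on the 0d classification (Lemma~\ref{lem.classification of 0d symmetric entanglers}). This requires stacking with further auxiliary chains. So your conclusion that $\alpha\otimes\id$ itself ``is a finite-depth circuit with symmetric gates'' should be weakened to stable equivalence with $\id$. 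You mention the swindle only as a fallback for an unspecified leftover in Step~3, not as the tool that resolves Step~2.

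Step~3 also does not work as stated. The maps $\rho_x$ and $\rho_{x+3r}$ do not commute, so composing the product over one sublattice with the product over the other does not reproduce the telescoped product, and you never say what $\delta$ is. The paper avoids this as follows. It takes $\alpha_{\text{even}}=\prod_{k\in2\ZZ}\alpha_k$ with $\alpha_k=\theta_{3rk}^{-1}\circ\theta_{3r(k+1)}$, and \emph{defines} $\alpha_{\text{odd}}:=(\alpha\otimes\id)\circ\alpha_{\text{even}}^{-1}$. Each $\alpha_k$ acts as $\alpha\otimes\id$ in the bulk of its stripe, so $\alpha_{\text{odd}}$ acts trivially deep inside the even blocks and is itself block partitioned. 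Hence $\alpha\otimes\id=\alpha_{\text{odd}}\circ\alpha_{\text{even}}$ with no leftover term, and both factors are stably trivial by the swindle lemma.

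A minor point: with your ordering $\theta_{x+3r}\circ\theta_x^{-1}$, triviality is only guaranteed on $(\alpha\otimes\id)(\AA_{\Left{x-r}})\supset\AA_{\Left{x-2r}}$, not on all of $\AA_{\Left{x-r}}$. This is because $\theta_x^{-1}$ agrees with $\alpha^{-1}$ only there. Your stated support absorbs this.

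Finally, your opening reduction is incomplete: for a monoid homomorphism, trivial kernel does not imply injectivity. You need inverses, which the paper gets from surjectivity (Lemma~\ref{lem.surjectivity-1d}) and additivity. If $\ind(\alpha)=\ind(\alpha')$, pick $\delta$ with $\ind(\delta)=-\ind(\alpha')$; then $\alpha\sim\alpha\otimes\delta\otimes\alpha'\sim\alpha'$.
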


\begin{proof}
    The crux of the proof is to show that 1d symmetric entanglers with trivial index are stably equivalent to $\id$. Let us prove that first.
    
    Let $(\alpha,\beta,\AA)$ be a symmetric entangler in 1d such that $\ind{(\alpha)} = [1] \in H^2(G,U(1))$, as in the previous section. 
    Lemma \ref{lem.symmetric-blending-1d} gives us symmetric entanglers $(\theta_x, \beta \otimes \tilde \beta, \AA \otimes \tilde \AA)$ satisfying \eqref{eq.symmetric-blending-defining-prop}.
    We define symmetric entanglers
    \begin{equation*}
    \left(\al_k, \beta \otimes \tilde \beta, \AA \otimes \tilde \AA  \right), \qquad k \in \ZZ,
    \end{equation*}
    with $\al_k := \theta_{3rk}^{-1} \circ \theta_{3r(k+1)}$, which satisfy (see Figure \ref{fig:alpha-k-1d})
    \begin{align} 
        \nonumber \alpha_k|_{\Left{3rk-r}} = \id, \quad &\alpha_k|_{\Left{3r(k+1)}^c} = \id, \\ \label{eq.alpha-k-properties in 1d}
        \alpha_k|_{\AA_{ [ 3rk+r, 3r(k+1) - r] }} &= \alpha|_{\AA_{ [ 3rk+r, 3r(k+1) - r] }} \otimes \id.
    \end{align}
    The formal infinite products 
    \begin{align*}
    \bigg(\alpha_{\text{even}}= \prod_{k \in 2\ZZ} \alpha_k, \, \beta \otimes \tilde \beta, \, \AA \otimes \tilde \AA\bigg),
    \end{align*}
    and 
    \begin{align*}
    \bigg(\alpha_{\text{odd}} = (\alpha \otimes \id) \circ \alpha_{\text{even}}^{-1}, \, \beta \otimes \tilde\beta, \, \AA \otimes \tilde \AA\bigg)
    \end{align*}
    are symmetric block partitioned FDQCs of ranges bounded by  $2r$ and $3r$ respectively. Their symmetric gates are supported on intervals of length $4r$ (see Figure \ref{fig:odd-even-1d}).
    
    We now have 
        \begin{equation}\label{eq:stripes 1d}
            \alpha \otimes \id = \alpha_{\text{odd}} \circ \alpha_{\text{ even}}
    \end{equation}
    where each factor $\alpha_{\text{even}}$, $\alpha_{\text{odd}}$ is stably equivalent to $\id$ by Lemma \ref{lem.equivalence-block-partitioned-1d}. This means that $\al_{\rm{even}} \otimes \id$ and $\al_{\rm{odd}} \otimes \id$ can be cast as FDQCs with symmetric gates, after stacking with some auxiliary spin chain. Then clearly the composition $\al_{\rm{odd}} \circ \al_{\rm{even}} \sim (\al_{\rm{odd}} \circ \al_{\rm{even}}) \otimes \id = (\al_{\rm{odd}} \otimes \id) \circ (\al_{\rm{even}} \otimes \id) \sim \id$ is also stably equivalent to a FDQC with symmetric gates, and therefore stably equivalent to $\id$, as required.

    To finish the proof, suppose $\al$ and $\al'$ are two symmetric entanglers with $[\mu] = \ind(\al) = \ind(\al')$. By Lemma \ref{lem.surjectivity-1d} there is a symmetric entangler $\delta$ with $\ind(\delta) = -[\mu]$. By additivity of the index (Lemma \ref{lem.index-1d-symmetric-entanglers}) we have that $\delta \otimes \al'$ has trivial index and is therefore stably equivalent to $\id$. It follows that $\al \sim \al \otimes \id \sim \al \otimes \delta \otimes \al'$. But also $\al \otimes \delta$ has trivial index, and is therefore stably equivalent to $\id$. we therefore obtain $\al \sim \id \otimes \al' \sim \al'$, as required. 
\end{proof}

\begin{figure}
    \centering
    \begin{tikzpicture}[x=1cm, y=1cm, scale=1.2, transform shape]
        
        
        \draw[very thick, dotted] (-6,0) -- (7,0);
        \draw[very thick, red]  (-3,0) -- (1,0);
        \draw[very thick, green] (1,0) -- (3,0);
        \draw[very thick, red]  (3,0) -- (5,0);
        
        \tiny
        \node[below] at (-3,0) {$(3k-1)r$};
        \node[below] at (-1,0) {$3kr$};
        \node[below] at (1,0) {$(3k+1)r$};
        \node[below] at (3,0) {$(3k+2)r$};
        \node[below] at (5,0) {$(3k + 3)r$};
        
    \end{tikzpicture}
    \caption{The automorphism $\alpha_k$ acts by conjugation with a unitary block. The colored lines represent a non-trivial action. In the bulk (green region), $\alpha_k$ acts as $\alpha$.}
    \label{fig:alpha-k-1d}
\end{figure}
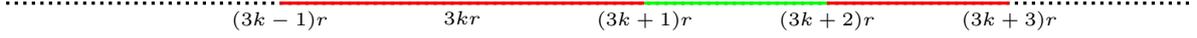

\vspace{.7cm}
\begin{figure}
    \centering
    \begin{tikzpicture}[x=1cm, y=1cm]
    
    \begin{scope}[yshift=0cm]
    
            \draw[very thick, dotted] (-6,0) -- (10,0);

    \draw[very thick, red]  (-5.5,0) -- (-3.5,0);
    \draw[very thick, green] (-3.5,0) -- (-2.5,0);
    \draw[very thick, red]  (-2.5,0) -- (-1.5,0);
    
    \draw[very thick, red]  (2.5,0) -- (4.5,0);
    \draw[very thick, green] (4.5,0) -- (5.5,0);
    \draw[very thick, red]  (5.5,0) -- (6.5,0);
    
    
    %
    
    
    \end{scope}
    
    \begin{scope}[yshift=-1cm]
    
    
            \draw[very thick, dotted] (-6,0) -- (10,0);

    \draw[very thick, red]  (-6,0) -- (-5.5,0);
    
    \draw[very thick, red]  (-2,0) -- (0,0);
    \draw[very thick, green] (0,0) -- (1,0);
    \draw[very thick, red]  (1,0) -- (2,0);
    
    \draw[very thick, red]  (6,0) -- (8,0);
    \draw[very thick, green] (8,0) -- (9,0);
    \draw[very thick, red]  (9,0) -- (10,0);
    
    
    \end{scope}
    \end{tikzpicture}
    
    \caption{The block partitioned automorphisms $\alpha_{\text{even}}$ and $\alpha_{\text{odd}}$, each illustrated on a spin chain. Non-trivial actions are represented by colored lines. Analogously as in Figure \ref{fig:alpha-k-1d}, they act as $\alpha$ on green regions.}
    \label{fig:odd-even-1d}
\end{figure}

\section{Classification of 2d symmetric entanglers} \label{sec: 2d classification}

\subsection{An index for 2d symmetric entanglers} \label{sec.index}


Throughout this section we consider a symmetric entangler $(\al, \beta, \AA)$ of range $R(\alpha)$ on a 2-dimensional spin system $\caA = \otimes_{j \in \Z^2} \caA_j$.

\subsubsection{The boundary algebra}

We construct boundary algebras associated to $(\alpha, \beta, \AA)$ which generalize the boundary algebras constructed in Section \ref{sec.boundary-algebra-1d} for the one-dimensional case. \\

We choose an FDQC representation of $\alpha$ and take $r \geq nq$ where $n$ is the depth of the circuit and $q$ is the maximum (over blocks and over layers of the quantum circuit) of diameters of blocks.

For $x \in \Z$ we let $\Left {x}$ be the half-plane
$$
\Left x=\{ (i_1,i_2)\ |\ i_1 < x \} \subset \Z^2,
$$
and we define define the boundary region
\begin{equation} \label{eq.boundary-region}
    \partial_r \Left x := \{ (i_1,i_2)\ |\ x-r \le i_1 < x\}.
\end{equation}
The \emph{boundary algebra} is defined as
\begin{equation} \label{eq.boundary-algebra-hastings-freedman}
   \caP_x^{[r]} :=  \alpha (\AA_{\Left x}) \cap \AA_{ \Left{x-r}}'.
\end{equation}

Let $(\eta_x, \AA)$ be the FDQC build out of gates of the FDQC decomposition of $\al$ as follows:
\begin{itemize}
    \item From the first layer of $\al$, we retain gates whose support intersects both $\Left x$ and $\Left{x}^c$. These gates form the first layer of $\eta_x$. Let $B_1$ be the union of the supports of all those gates. 

    \item For $i=2,\ldots,n$, the $i^{\mathrm{th}}$ layer of $\eta_x$ consists of all gates of the FDQC representation of $\al$ whose support intersects $B_{i-1}$, and we call the union of $B_i$ with the supports of these new gates $B_{i}$. 
\end{itemize}
This construction is illustrated by Figure \ref{fig:eta1d}. With this definition of $\eta_x$ we get
\begin{equation} \label{eq.eta-boundary-algebra-2d}
\alpha(\caA_{\Left x})=\eta_x(\caA_{\Left x})= \eta_x(\caA_{\partial_r \Left x}) \otimes \caA_{\Left {x-r}} = \caP_x^{[r]}  \otimes \caA_{\Left {x-r}}.
\end{equation}
By analogous reasoning as in Lemma \ref{lem.invariant-boundary}, the boundary algebras $\caP_x^{[r]}$ are $\beta_g$-invariant.

For future use, we note that
\begin{equation}\label{eq: split alpha l}
\alpha|_{\caA_{\Left x}}= \eta_x|_{\caA_{\Left x}} \circ \alpha'_{[x]} 
\end{equation}
where $\alpha'_{[x]}$ is the FDQC comprised of those gates of $\al$ whose support overlaps with $\Left x$ and that were not chosen in $\eta_x$ (the red gates in Figure \ref{fig:eta1d}).

\subsubsection{A locality preserving symmetry on the boundary algebra}

We will consider spin chains oriented along the boundary $\partial_r \Left x$. To that end, note that $\partial_r \Left x$ as defined in \eqref{eq.boundary-region} is a strip. For the sake of this subsection, it will be advantageous to view this strip as having unit ``width'', and so we put, for $\J \in \bbZ$:
$$
\caA_{x,\J}^{[r]}=\caA_{\{(x-r,\J),\ldots,(x-1,\J)\}}
$$
and define $\caP_{x,\J}^{[r]}=\eta_x(\caA_{x,\J}^{[r]})$ (the algebras $\caP_{x,\J}^{[r]}$ still depend on $\alpha$, which we omit for clarity of notation). Then
$$\caP_x^{[r]}=\otimes_\J \caP_{x,\J}^{[r]}$$
is a spin chain. Even though $\beta_g$ was an on-site action on $\caA$, it is not necessarily on-site when restricted to the spin chain $\caP_x^{[r]}$, since the notion of ``sites'' is different. Namely, in general, 
$$\beta_g (\mathcal P_{x,\J}^{[r]}) = \beta_g \circ \eta_x (\AA_{x,\J}^{[r]}) \not\subset \mathcal P_{x,\J}^{[r]},$$
since $\eta_x$ is not necessarily equivariant. However, each $\beta_g$ acts as a QCA on $\mathcal P_{x}^{[r]}$, from which we obtain the following lemma: 

\begin{lemma} \label{lem.LPS-boundary-algebra}
    The family $\{\beta_g|_{\caP_x^{[r]}}\}_{g \in G}$ is an LPS on the spin chain $\caP_{x}^{[r]} = \otimes_\J \caP_{x,\J}^{[r]}$. The range of $(\beta, \caP_x^{[r]})$ is bounded above by $r$.
\end{lemma}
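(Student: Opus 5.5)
We need to show three things: each $\beta_g$ restricts to an automorphism of $\caP_x^{[r]}$, the restriction is locality preserving with respect to the sites $\caP_{x,\J}^{[r]}$, and the group law survives restriction. The key tool is to transport the problem through the isomorphism $\eta_x : \caA_{\partial_r \Left x} \to \caP_x^{[r]}$ of \eqref{eq.eta-boundary-algebra-2d}, which maps the strip sites $\caA_{x,\J}^{[r]}$ onto $\caP_{x,\J}^{[r]}$.

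\textbf{Automorphism and group law.} By the analogue of Lemma \ref{lem.invariant-boundary}, $\beta_g(\caP_x^{[r]}) = \caP_x^{[r]}$. Since $\beta_g$ is an automorphism of $\caA$, its restriction $\beta_g|_{\caP_x^{[r]}}$ is an automorphism of $\caP_x^{[r]}$, with inverse $\beta_{g^{-1}}|_{\caP_x^{[r]}}$. The identities $\beta_e=\id$ and $\beta_g\beta_h=\beta_{gh}$ restrict directly, so $g \mapsto \beta_g|_{\caP_x^{[r]}}$ is a homomorphism into $\Aut{\caP_x^{[r]}}$. It remains to show that each restriction is a QCA of range at most $r$ on the chain $\otimes_\J \caP_{x,\J}^{[r]}$.

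\textbf{Locality.} Let $Y \subset \Z$ and set $S_Y := \{(i_1,\J)\,:\, x-r\le i_1<x,\ \J\in Y\}$, so that $\otimes_{\J\in Y}\caP_{x,\J}^{[r]} = \eta_x(\caA_{S_Y})$. Since $\beta_g$ is on-site, $\beta_g(\eta_x(\caA_{S_Y})) = \eta_x\circ(\eta_x^{-1}\beta_g\eta_x)(\caA_{S_Y})$. The conjugate $\eta_x^{-1}\beta_g\eta_x$ is a QCA on $\caA$: it is a composition of QCAs, and $\beta_g$ has range $0$. Its range is therefore at most twice the spread of $\eta_x$, which is bounded by $nq$. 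This bound is too crude for the claimed range $r$, so I would argue more directly.

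\textbf{Sharper vertical bound.} An element of $\eta_x(\caA_{S_Y})$ is supported in the vertical fattening of $S_Y$ by the light-cone of $\eta_x$, roughly $S_Y^{(nq)}$. Applying $\beta_g$ preserves supports, so the image still lies in $\caP_x^{[r]} \cap \caA_{S_Y^{(nq)} \cup (\text{horizontal part})}$. Now intersect with $\caP_x^{[r]} = \eta_x(\caA_{\partial_r\Left x})$ and pull back by $\eta_x^{-1}$, which again has light-cone at most $nq$. The delicate point is that an element of $\caP_x^{[r]}$ supported in a vertically bounded region pulls back, via $\eta_x^{-1}$, to a vertically bounded region inside $\partial_r\Left x$.

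\textbf{Main obstacle: getting exactly $r$.} The two light-cones stack, giving a bound of order $2nq$ on the vertical spread. To reach the claimed constant $r\ge nq$, I expect one must exploit the fact that only the gates of $\eta_x$ matter. Each column-row site moves at most $nq$ under $\eta_x$, and the relevant commutant argument uses the identity $\caP_x^{[r]}\cap \caA_{Z}' = \eta_x(\caA_{\partial_r\Left x\setminus Z^{(-nq)}})$ type relations. If the sharp constant resists, I would accept a range bound of $2nq$ and adjust $r$ accordingly, since only finiteness of the range is used downstream.
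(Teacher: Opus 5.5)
Your argument follows the paper's route. The paper's only justification is the remark before the lemma: $\beta_g$ leaves $\caP_x^{[r]}$ invariant (the analogue of Lemma \ref{lem.invariant-boundary}) and therefore acts on it as a QCA. Your conjugation supplies the detail that remark leaves out. The automorphism $\eta_x^{-1}\circ\beta_g\circ\eta_x$ preserves $\caA_{\partial_r \Left x}$ and has range at most $2nq$. Combined with $\caA_{\partial_r \Left x}\cap \caA_{S_Y^{(2nq)}}=\caA_{S_{Y^{(2nq)}}}$, this gives $\beta_g(\otimes_{\J\in Y}\caP^{[r]}_{x,\J})\subset \otimes_{\J\in Y^{(2nq)}}\caP^{[r]}_{x,\J}$ for every $Y\subset\ZZ$. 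That is an LPS of range at most $2nq$. State that intersection identity explicitly; it is all there is to your ``delicate point''. The automorphism and group-law part is fine as written.

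What you do not obtain is the constant $r$.
\begin{itemize}
\item The range of $(\beta,\caP_x^{[r]})$ does not depend on $r$. Increasing $r$ only adjoins columns strictly left of $x-r\le x-nq$, where $\eta_x$ acts trivially and $\beta$ is on-site. Each site $\caP^{[r]}_{x,\J}$ is just tensored with an on-site factor.
\item Hence ``range $\le r$ for every $r\ge nq$'' amounts to ``range $\le nq$''. Your two stacked light cones cannot give this, and the paper's one-line justification does not address it either.
\item Nothing downstream depends on the sharp constant. The index needs only finite range. Lemma \ref{thm.trivializing-LPS}, and the resulting ranges of $\gamma_x$ and $\theta_x$, only need $\mathcal O(r)$.
\end{itemize}
So your fallback is the right, honest version: range $\le 2nq\le 2r$, or literally $\le r$ if you take $r\ge 2nq$, which is allowed since the index is independent of $r$. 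Two clean-ups: drop the unproved commutant identity involving $Z^{(-nq)}$ in your last paragraph, and merge the ``sharper vertical bound'' paragraph into the locality step, since it only rederives $2nq$.
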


Recall that we write $\Sym_{G, 1}$ for the set of locality preserving symmetries on spin chains. In \cite{bols2025classificationlocalitypreservingsymmetries} an index $\Omega : \Sym_{G, 1} \to H^3(G, U(1))$ was defined which completely classifies LPSs on spin chains up to stable equivalence, see Proposition \ref{thm.classification-1d-LPS} in Appendix \ref{appendix.LPS}. We will now show that $\Omega(  (\beta, \caP_x^{[r]}) )$ is a well-defined index for our symmetric entangler $\al$. We will often simply write $\beta$ instead of $(\beta, \caP_x^{[r]})$ when it is clear from context that $\beta$ is to be interpreted as an LPS on a boundary algebra.

\subsubsection{An index for 2d symmetric entanglers}

We define the index 
\begin{equation}
\ind(\alpha) := \Omega(\beta) \in H^{3}(G,U(1))
\end{equation}
of the symmetric entangler $(\alpha, \beta, \AA)$ as the index of the corresponding locality preserving symmetry $(\beta,\caP_{x}^{[r]})$ from Lemma \ref{lem.LPS-boundary-algebra}.

\begin{lemma} \label{lem.index-2d-does-not-depend-FDQC}
    The index is well-defined. That is, the class $\Omega(\beta)$
    \begin{enumerate}
        \item does not depend on the choice of $r \geq nq$, for fixed FDQC representation of $\al$;
        \item does not depend on the choice of FDQC representation for $(\alpha,\beta,\AA)$;
        \item does not depend on $x \in \ZZ$.
    \end{enumerate}
\end{lemma}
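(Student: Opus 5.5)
We mirror the proof of Lemma \ref{lem.index-independent-of-r-1d}. The one new ingredient is the 1d LPS classification of Proposition \ref{thm.classification-1d-LPS}. From it we use two facts: $\Omega$ is invariant under stable equivalence of LPSs, and it is additive under stacking with trivial index for on-site symmetries. Throughout, we view the relevant boundary algebras as spin chains along the $\J$-direction. Our strategy is to show that the different choices produce LPSs that differ only by stacking with an on-site symmetry, or by conjugation with an FDQC on the chain $\caP_x^{[r]}$.

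\emph{Item 1.} Let $r_0\ge 0$. Since $\eta_x$ acts trivially on $\caA_{\{x-r-r_0\le i_1<x-r\}}$, we have
$$\caP_x^{[r+r_0]}=\caP_x^{[r]}\otimes \caA_{\{x-r-r_0\le i_1< x-r\}},$$
and the same holds at the level of sites, $\caP^{[r+r_0]}_{x,\J}=\caP^{[r]}_{x,\J}\otimes \caA_{\{(x-r-r_0,\J),\dots,(x-r-1,\J)\}}$. On the second factor $\beta$ restricts to an on-site symmetry. Hence $(\beta,\caP_x^{[r+r_0]})$ is the stack of $(\beta,\caP_x^{[r]})$ with an on-site symmetry, and the two have the same $\Omega$.

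\emph{Item 2.} The two definitions of $\caP_x^{[r]}$ agree as algebras: $\alpha(\caA_{\Left x})\cap\caA'_{\Left{x-r}}$ is intrinsic, and for $r$ large enough for both circuits, item 1 lets us use a common $r$. What changes is the choice of sites $\caP_{x,\J}^{[r]}=\eta_x(\caA^{[r]}_{x,\J})$, because the map $\eta_x$ depends on the circuit. Let $\eta_x$ and $\tilde\eta_x$ come from the two representations. On $\caA_{\Left x}$ both satisfy $\eta_x(\caA_{\Left x})=\tilde\eta_x(\caA_{\Left x})=\alpha(\caA_{\Left x})$, so
$$\kappa:=\tilde\eta_x^{-1}\circ\eta_x$$
restricts to an automorphism of $\caA_{\partial_r\Left x}$ (after enlarging $r$ if necessary). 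This automorphism is a composition of gates, so it is a 1d FDQC along the strip, up to the strip-width coarse-graining. Thus $\eta_x|_{\caA_{\partial_r \Left x}}$ and $\tilde\eta_x|_{\caA_{\partial_r \Left x}}$ differ by an FDQC. This identifies $(\beta,\caP_x^{[r]})$ in the two site structures with two LPSs on $\caA_{\partial_r\Left x}$, namely $\eta_x^{-1}\beta\eta_x$ and $\tilde\eta_x^{-1}\beta\tilde\eta_x$, which are conjugate by the FDQC $\kappa$. Conjugate LPSs are equivalent, so $\Omega$ agrees.

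\emph{Item 3.} As in 1d, take $0<x_0<r$ and $r_0>2r$. The same commutant computation gives
$$\caP^{[r_0]}_{x+x_0}=\caP^{[r_0-x_0]}_x\otimes\alpha(\caA_{\{x\le i_1<x+x_0\}}),$$
and this factorization should hold site-wise in $\J$ up to an FDQC. On the second factor, $\beta$ acts as $\alpha\circ(\text{on-site})\circ\alpha^{-1}$. Restricted to the strip, this is the on-site symmetry conjugated by the strip-restriction of $\alpha$. That restriction is an FDQC on the chain after throwing away gates outside the strip, possibly after enlarging the factorization slightly using item 1. So the second factor carries an LPS equivalent to an on-site one, with trivial $\Omega$. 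Additivity then gives $\Omega$ at $x+x_0$ equal to $\Omega$ at $x$, with item 1 absorbing the shift from $r_0$ to $r_0-x_0$.

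We expect the main obstacle to be the bookkeeping showing that the relevant identifications are genuinely FDQCs on the 1d chain, respecting the chosen site decompositions $\caP_{x,\J}$. This matters most in items 2 and 3, where the splitting of $\alpha$ near the strip is only defined up to gates straddling its edges. We would handle this by truncating circuits to gates meeting the strip and widening $r$ via item 1.
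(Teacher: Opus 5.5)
Your overall structure is the paper's. Items 1 and 3 follow the 1d proof, which is all the paper says for them ("analogous"). In item 2 you compare the two site structures $\{\eta_x(\caA^{[r]}_{x,\J})\}_\J$ and $\{\tilde\eta_x(\caA^{[r]}_{x,\J})\}_\J$ through an automorphism that intertwines the two LPSs; up to transport by $\eta_x$ and inversion, your $\kappa$ is the paper's $\mathscr F=\tilde\eta_x\circ\eta_x^{-1}$. Item 1 is fine.

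The gap is in item 2: the claim that $\kappa|_{\caA_{\partial_r\Left x}}$ is a 1d FDQC because it is a composition of gates. The gates of $\eta_x$ and $\tilde\eta_x$ straddle the line $i_1=x$. So although the composite preserves $\caA_{\partial_r\Left x}$, its restriction is not a product of gates supported in the strip, and the inference fails in general. Take all on-site algebras isomorphic, and the 2d FDQC given by two layers of swaps across $i_1=x$, acting as $\J\mapsto\J+1$ on $\partial_r\Left x$ and $\J\mapsto\J-1$ on $[x,x+r)\times\ZZ$. It preserves $\caA_{\Left x}$ and is trivial on $\caA_{\Left{x-r}}$, yet on the strip it restricts to a shift, which has nontrivial GNVW index and is not an FDQC. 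Your proposed remedy does not help. Enlarging $r$ only moves the left edge of the strip, while the straddling happens at its fixed right edge $i_1=x$, and a circuit truncated to gates meeting the strip no longer agrees with $\kappa$ there. Since equivalence of LPSs in this paper means conjugation by an FDQC, ``conjugate LPSs are equivalent'' cannot be applied to $\kappa$ as you have it.

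What is true, and all that is needed, is that $\kappa|_{\caA_{\partial_r\Left x}}$ is a QCA of range $\mathcal O(r)$ on the chain $\otimes_\J\caA^{[r]}_{x,\J}$. The missing ingredient is Lemma \ref{lem.equivalence-symmetries-qca}: conjugating an LPS by an arbitrary QCA $\kappa$ does not change its stable equivalence class, because $\kappa\otimes\kappa^{-1}$ is an FDQC. This is exactly how the paper closes item 2.

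The same issue recurs in item 3, where ``site-wise up to an FDQC'' and the ``strip-restriction of $\alpha$'' should be replaced; the latter is not even an automorphism of $\caA_{[x,x+x_0)\times\ZZ}$. Instead:
\begin{itemize}
\item With sites $\alpha(\caA_{[x,x+x_0)\times\{\J\}})$, the symmetry on $\alpha(\caA_{[x,x+x_0)\times\ZZ})$ is literally on-site. It is implemented by $\alpha$ applied to the on-site unitaries of $\beta$, so no circuit is needed there.
\item The stacked sites $\caP^{[r_0-x_0]}_{x,\J}\otimes\alpha(\caA_{[x,x+x_0)\times\{\J\}})$ have the same dimensions as $\caP^{[r_0]}_{x+x_0,\J}$, since they come from the same lattice sites.
\item They are carried onto $\caP^{[r_0]}_{x+x_0,\J}$ by the QCA $\eta_{x+x_0}\circ\Psi^{-1}$, where $\Psi$ equals $\eta_x$ on $\caA_{\partial_{r_0-x_0}\Left x}$ and $\alpha$ on $\caA_{[x,x+x_0)\times\ZZ}$.
\end{itemize}
Lemma \ref{lem.equivalence-symmetries-qca} then gives equality of $\Omega$.
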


\begin{proof}
    Items 1 and 3 follow from analogous arguments as in the proof of Lemma \ref{lem.index-independent-of-r-1d}.

    For item 2, let $\tilde \eta_x$ be obtained from another FDQC representation of $\alpha$ of depth $\tilde n$ and maximal block diameter $\tilde q$. Take $r \geq \max\{ nq, \tilde n \tilde q \}$. We have that
    \begin{align*}
        \mathcal P_x^{[r]} = \otimes_\J \caP_{x,\J}^{[r]} = \otimes_{\J} \tilde \caP_{x,\J}^{[r]}
    \end{align*}
    are two spin chain representations of $\caP_x^{[r]}$ in terms of on-site algebras $\caP_{x,\J}^{[r]} = \eta_x(\AA_{x,\J}^{[r]})$ and $\tilde \caP_{x,\J}^{[r]} = \tilde \eta_x(\AA_{x,\J}^{[r]})$. We consider the LPS $({\mathscr F^{-1} \circ \beta \circ \mathscr F}, \otimes_\J \caP_{x,\J}^{[r]})$, where $\mathscr F = \tilde \eta_x \circ \eta_x^{-1}$. Since $\mathscr F (\caP_{x,\J}^{[r]}) = \tilde \caP_{x,\J}^{[r]}$, and $\mathscr F$ intertwines the $G$-actions $\beta$ and $\mathscr F^{-1} \circ \beta \circ \mathscr F$, we have that:
    $$(\beta, \otimes_{\J} \tilde \caP_{x,\J}^{[r]}) \sim ({\mathscr F^{-1} \circ \beta \circ \mathscr F}, \otimes_\J \caP_{x,\J}^{[r]}).$$
    Notice that $\mathscr F \in \QCA(\caP_{x}^{[r]})$. By Lemma \ref{lem.equivalence-symmetries-qca} (Stable equivalence under conjugation by a QCA), we have that 
    $$(\beta, \otimes_{\J} \tilde \caP_{x,\J}^{[r]}) \sim ({\mathscr F^{-1} \circ \beta \circ \mathscr F}, \otimes_\J \caP_{x,\J}^{[r]}) \sim (\beta,\otimes_\J \caP_{x,\J}^{[r]}),$$
    so item 2 follows from the invariance of $\Omega$ under stable equivalence of LPS.
\end{proof}

\begin{proposition} \label{prop.index}
    The index $\ind$ descends to a homomorphism of monoids
    \begin{align*}
        \ind: (\sEnt_{G,2}/\sim) \longrightarrow H^3(G,U(1)).
    \end{align*}
\end{proposition}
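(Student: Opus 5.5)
We follow the structure of the proof of Lemma \ref{lem.index-1d-symmetric-entanglers} and check three things. First, the index is invariant under stabilization. Second, it is invariant under composition with FDQCs having symmetric gates. Third, it is additive under stacking and trivial on on-site symmetries. Together these say that $\ind$ descends to $(\sEnt_{G,2}/\sim)$ as a monoid homomorphism. Throughout we use the properties of $\Omega$ recorded in Proposition \ref{thm.classification-1d-LPS}: it is invariant under stable equivalence of LPSs, additive under stacking, and trivial on on-site symmetries.

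\emph{Stabilization and the identity.} For $(\id,\beta,\AA)$ we may take $\eta_x=\id$. Then $\caP_{x,\J}^{[r]}=\caA_{x,\J}^{[r]}$, and $\beta$ restricted to the boundary chain is the on-site symmetry $\otimes_\J \Ad{U_{\{(x-r,\J),\dots,(x-1,\J)\}}(g)}$, so $\Omega(\beta)=[1]$.

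\emph{Stacking.} For a stack $(\alpha\otimes\alpha',\beta\otimes\beta',\AA\otimes\AA')$ we use the stacked circuit, with $r$ at least both thresholds. The boundary chain is then $\otimes_\J(\caP_{x,\J}^{[r]}\otimes\caP'^{[r]}_{x,\J})$ with symmetry $\beta\otimes\beta'$, so additivity of $\Omega$ gives $\ind(\alpha\otimes\alpha')=\ind(\alpha)+\ind(\alpha')$. Since $\ind(\id)=[1]$, this also shows $\ind(\alpha\otimes\id)=\ind(\alpha)$.

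\emph{Composition with a symmetric-gate FDQC $\gamma$.} This is the main obstacle, because the 1d proof's remark that ``the index is locally computable'' now has to be made precise for an LPS on an infinite chain. We plan two steps.

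\begin{itemize}
\item[(a)] We first replace $\gamma$ by $\gamma_{R_x}$, the circuit obtained by dropping all gates meeting $\Left{x}$. The circuit $\gamma\circ\gamma_{R_x}^{-1}$ acts only near $\Left{x}$ and far from $\Left{x+s}^c$ for a suitable shift $s$. By item 3 of Lemma \ref{lem.index-2d-does-not-depend-FDQC} we may compute the index of $\alpha\circ\gamma$ at $x+s$ with a large $r$, and there only gates of $\gamma$ near the line at $x+s$ enter the boundary algebra. This reduces the problem to showing $\ind(\alpha\circ\gamma)=\ind(\alpha\circ\gamma')$, where $\gamma'$ consists of symmetric gates supported in $\Left{y}^c$ for $y$ well to the left of the boundary line.
\item[(b)] For $\alpha\circ\gamma'$, concatenating the circuits gives $(\alpha\circ\gamma')(\caA_{\Left x})$. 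Since $\gamma'$ has no gates near the line, it leaves $\caA_{\Left x}$ essentially invariant up to gates straddling the line. We then write
\[
\tilde\eta_x=\eta_x\circ\gamma'_{\partial},
\]
where $\gamma'_\partial$ collects the straddling symmetric gates. This gives a boundary algebra with a new spin-chain structure $\tilde\caP_{x,\J}=\eta_x\gamma'_\partial(\caA_{x,\J})$ inside the same algebra $\caP_x^{[r]}$ (for $r$ large), or more precisely inside $\caP_x^{[r]}$ tensored with a symmetric strip.
\end{itemize}

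The change of on-site structure is then handled exactly as in item 2 of Lemma \ref{lem.index-2d-does-not-depend-FDQC}. The map $\mathscr F=\tilde\eta_x\eta_x^{-1}$ is a QCA, so Lemma \ref{lem.equivalence-symmetries-qca} applies. The extra strip of $\caA$ carries the on-site symmetry $\beta$, which is invisible to $\Omega$. Hence $\ind(\alpha\circ\gamma)=\ind(\alpha)$.

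Combining this with stabilization invariance, stably equivalent entanglers have equal index, so $\ind$ is well defined on $(\sEnt_{G,2}/\sim)$, and by the stacking step it is a homomorphism of monoids.
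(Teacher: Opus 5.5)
Your overall plan matches the paper's. The paper proves the proposition by pointing to Lemma~\ref{lem.index-1d-symmetric-entanglers}: $\id$ has trivial index, the index is additive under stacking, and it is invariant under a symmetric-gate FDQC $\gamma$ via $\ind(\alpha)=\ind(\alpha\circ\gamma_{R_x})=\ind(\alpha\circ\gamma)$. Your identity and stacking paragraphs are fine. Your step (a) is also fine: it is the paper's \emph{second} equality, computed at a line far to the right of $x$ where $\gamma$ and $\gamma_{R_x}$ have the same gates, combined with item 3 of Lemma~\ref{lem.index-2d-does-not-depend-FDQC}.

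Step (b) is where the argument breaks. After (a) you evaluate $\alpha\circ\gamma'$ at a line that the gates of $\gamma'$ straddle. Your text asserts both ``no gates near the line'' and ``gates straddling the line'', which is inconsistent. With straddling gates present, this is just the original problem again, so (a) as you framed it has reduced nothing.

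The argument you then give fails. We have $(\alpha\circ\gamma')(\caA_{\Left{x}})=\alpha(\gamma'_\partial(\caA_{\Left{x}}))$, and $\gamma'_\partial(\caA_{\Left{x}})\not\subset\caA_{\Left{x}}$. So gates of $\alpha$ to the right of the line, which are not in $\eta_x$, also act. Hence $\tilde\eta_x\neq\eta_x\circ\gamma'_\partial$, and the new boundary algebra is not a subalgebra of $\caP_x^{[r]}$. Consequently $\mathscr F=\tilde\eta_x\eta_x^{-1}$ is not a QCA of the chain $\caP_x^{[r]}$. Lemma~\ref{lem.equivalence-symmetries-qca} conjugates an LPS on a fixed chain by a QCA of that same chain, so it does not apply. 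Your ``symmetric strip'' hedge would need a separate argument: you would have to split off an anomaly-free strip compatibly with the chain structure along the line, and you do not give one.

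The fix is the paper's \emph{first} equality, which your proposal is missing: compute $\ind(\alpha\circ\gamma_{R_x})$ at the line $x$ itself.
\begin{enumerate}
\item Every gate of $\gamma_{R_x}$ lies in $\Left{x}^c$, so $\gamma_{R_x}|_{\caA_{\Left{x}}}=\id$.
\item Hence $(\alpha\circ\gamma_{R_x})(\caA_{\Left{x}})=\alpha(\caA_{\Left{x}})$. The boundary algebra $\caP_x^{[r]}$ and the restriction of $\beta$ to it are literally those of $\alpha$.
\item The concatenated circuit yields the same $\eta_x$ along this line. Any discrepancy in chain structure lives on the same algebra and is absorbed exactly as in item 2 of Lemma~\ref{lem.index-2d-does-not-depend-FDQC}.
\end{enumerate}
Combined with your step (a) and $x$-independence, this gives $\ind(\alpha)=\ind(\alpha\circ\gamma)$, and step (b) can be dropped entirely.
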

\begin{proof}
    Completely analogous to the proof of Lemma \ref{lem.index-1d-symmetric-entanglers}.
\end{proof}



\subsection{Surjectivity of the index map in 2d}

\begin{lemma} [Surjectivity of the index map in 2d] \label{lem.surjectivity-2d}
    For each class $[\omega] \in H^3(G,U(1))$, there exists a 2d symmetric entangler $(\alpha, \beta,\AA)$ such that $\ind(\alpha) = [\omega]$.  
\end{lemma}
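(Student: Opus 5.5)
We construct the entangler by the standard ``cocycle state'' (Else–Nayak / Chen–Gu–Liu–Wen) construction, made equivariant on a suitable on-site Hilbert space. The plan is as follows.

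\emph{Step 1: the lattice and the symmetry.} Fix a representative $\omega \in Z^3(G,U(1))$, which we may take normalized. Work on a triangulation of the plane whose vertices are grouped into unit cells; each such grouping is a site of $\Z^2$ carrying finitely many vertices. Put $\C[G]$ on each vertex $v$, with the left regular representation $L_g\ket{h}=\ket{gh}$. This gives an on-site symmetry $\beta_g=\otimes_v \Ad{L_g}$, and the state $\otimes_v\ket{\textstyle\sum_h h}$ is a special $G$-invariant product state. Choose a branching structure (a total order on the vertices of each triangle). Add one extra vertex $v_*$ carrying a \emph{fixed} reference value $e$, which is not acted on by the symmetry; equivalently, use the inhomogeneous cocycle formula. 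Define the diagonal unitary
\[
V=\prod_{\triangle=(v_0v_1v_2)} \nu(g_{v_0},g_{v_1},g_{v_2},e)^{\pm 1},
\]
where $\nu(g_0,\dots,g_3)=\omega(g_0^{-1}g_1,g_1^{-1}g_2,g_2^{-1}g_3)$ is the homogeneous cocycle and the sign is the orientation of the triangle. Each factor is a diagonal gate on the three vertices of a triangle. All factors commute, so after grouping triangles into finitely many non-overlapping families, $V$ is an FDQC; take $\alpha=\Ad{V}$.

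\emph{Step 2: equivariance.} Under $\beta_g$, each factor becomes $\nu(gg_{v_0},gg_{v_1},gg_{v_2},e)=\nu(g_{v_0},g_{v_1},g_{v_2},g^{-1})$ by homogeneity. Apply the cocycle condition to the four arguments $(g_{v_0},g_{v_1},g_{v_2},g^{-1},e)$. The product over all triangles of the ratio then becomes a product of terms $\nu(\cdot,\cdot,g^{-1},e)$ on edges. These cancel pairwise, because each edge borders two triangles of opposite induced orientation on a closed (infinite) surface. Hence $\beta_g(V)=V$ up to a global phase, and so $\alpha\circ\beta_g=\beta_g\circ\alpha$. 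Since the infinite product is only formal, this must be done carefully: verify equivariance on each finite-support observable $A$. Only the finitely many gates near $\operatorname{supp}A$ matter, and the local cancellation holds there.

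\emph{Step 3: computing the index.} This is the main obstacle. We must compute $\Omega(\beta,\caP^{[r]}_x)$ by the recipe of \cite{bols2025classificationlocalitypreservingsymmetries} recalled in Appendix~\ref{appendix.LPS}. Restrict $\beta_g$ to the boundary algebra. Using \eqref{eq.eta-boundary-algebra-2d}, $\caP^{[r]}_x=\eta_x(\caA_{\partial_r L_x})$, and pulling back by $\eta_x$ gives $\eta_x^{-1}\beta_g\eta_x$ acting on $\caA_{\partial_r L_x}$. Because $V$ is diagonal, this pulled-back action is $\otimes L_g$ composed with $\Ad{}$ of the uncancelled edge terms along the cut, namely $\prod_{\text{boundary edges}}\nu(g_a,g_b,g^{-1},e)^{\pm}$. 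This is exactly the Else–Nayak boundary action $U(g)=\prod L_g \prod \nu(\cdot,\cdot,g^{-1},e)$, a matrix-product-unitary-type LPS on a chain. Its $H^3$ anomaly is known to equal $[\omega]$; for this we would invoke or verify the formula for $\Omega$ on such actions from Appendix~\ref{appendix.LPS}. Concretely, we restrict $U(g)$ to a half-chain to obtain $U_{R}(g)$, compute $U_R(g)U_R(h)U_R(gh)^{*}$, which is a unitary localized at the cut, and read off the associativity phase, checking that it reproduces $\omega(g,h,k)$ up to a coboundary. If directly matching conventions with $\Omega$ proves cumbersome, there is a fallback. It suffices to show that $[\omega]\mapsto \ind(\alpha_\omega)$ is a homomorphism, using stacking and Proposition~\ref{prop.index}, and that it is injective on the image, using nontriviality of Else–Nayak boundary anomalies. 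Since $H^3(G,U(1))$ is finite, an injective homomorphism into it is a bijection, which yields surjectivity.
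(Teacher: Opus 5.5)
\paragraph{Comparison with the paper's construction.} Your route is genuinely different from the paper's, and it is viable. The paper never writes down a cocycle model. Following \cite{seifnashri2025disentangling}, it takes chain LPSs $\beta,\bar\beta$ with $\Omega(\beta)=[\omega]=-\Omega(\bar\beta)$ from \cite[Section~4]{bols2025classificationlocalitypreservingsymmetries}. It lays alternating stabilized copies of them as the columns of a 2d system. It then uses Lemma~\ref{thm.trivializing-LPS} to obtain FDQCs $A_1,A_2$, acting on the two pairings of neighbouring columns, with $B=A_1\circ\Gamma\circ A_1^{-1}$ and $\Gamma=A_2\circ B\circ A_2^{-1}$, where $\Gamma$ is on-site. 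Equivariance of $A=A_2\circ A_1$ is then a one-line identity. The index also requires no computation: the boundary algebra is $\alpha_2^{(0,1)}(\caB_0)$, and $\Gamma$ acts on it as a conjugated copy of $\beta\otimes\delta$, so the index is $\Omega(\beta)=[\omega]$ by construction. The price is reliance on the disentangling theorem for anomaly-free chain LPSs, which is the hard part of \cite{bols2025classificationlocalitypreservingsymmetries}.

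\paragraph{What your route buys, and where its content lies.} Your Chen--Gu--Liu--Wen entangler avoids that theorem, and Steps 1--2 are fine. The local equivariance check works because the uncancelled edge factors of $\beta_g(V_S)V_S^*$ sit on vertices outside $\mathrm{supp}\,A$. They are diagonal, so they commute with $V_SAV_S^*$. The pull-back by $\eta_x$ is also correct: on $\caA_{L_x}$ only the edge factors inside $L_x$ survive. This yields an Else--Nayak-type action on $\otimes_y\caA^{[r]}_{x,y}$, with on-site regular actions elsewhere in the strip.

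All the content then sits in the step you leave open: computing $\Omega$ of that action from its definition in \cite{bols2025classificationlocalitypreservingsymmetries}. Concretely, you restrict to a half-chain, compute the endpoint unitaries $U_R(g)U_R(h)U_R(gh)^*$, and read off the associator. Appendix~\ref{appendix.LPS} contains no such formula, so this must be done by hand or imported from \cite{Else_2014}. It does go through and yields $\pm[\omega]$; the sign depends on conventions and is harmless for surjectivity.

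Your fallback does not bypass this computation. The implication you need, $\ind(\alpha_\omega)=0\Rightarrow[\omega]=0$, is exactly the nontriviality of that anomaly. In addition, the homomorphism property $\omega\mapsto\ind(\alpha_\omega)$ should come from $V_{\omega\omega'}=V_\omega V_{\omega'}$ together with Lemma~\ref{lem:composition is equivalent to stacking}, not from stacking alone.
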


\begin{proof}
    The following construction was suggested in \cite{seifnashri2025disentangling}. Fix a class $[\omega] \in H^3(G, U(1))$. From \cite[Section~4]{bols2025classificationlocalitypreservingsymmetries} we get a locality preserving symmetry $(\beta, \caA)$ on a spin chain $\caA = \otimes_{i \in \Z} \caA_i$ with index $\Omega(\beta) = [\omega]$, as well as a locality preserving symmetry $(\bar \beta, \caA)$ on the same spin chain with index $\Omega(\bar \beta) = -[\omega]$.
    
    By Lemma \ref{thm.trivializing-LPS} there are on-site symmetries $(\delta, \caD)$ and $(\delta', \caD')$ and a FDQC $(\al_1, \widetilde \caA' \otimes \widetilde \caA)$ with 
    $$ \widetilde \caA = \caA \otimes \caD, \quad \widetilde \caA' = \caA \otimes \caD',$$
    such that
    $$ \gamma' \otimes \gamma := \al_1^{-1} \circ ( (\bar \beta \otimes \delta') \otimes (\beta \otimes \delta) ) \circ \al_1 $$
    is on-site, as well as an FDQC $(\al_2, \widetilde \caA \otimes \widetilde \caA')$ such that
    $$ \gamma \otimes \gamma' = \al_2 \circ ( (\beta \otimes \delta) \otimes (\bar \beta \otimes \delta') ) \circ \al_2^{-1}$$
    is on-site.

    Consider the two-dimensional spin system $\caB = \bigotimes_{j \in \Z} \caB_j$ where the rows parallel to the $y$-axis are given by
    $$ \caB_j \simeq \begin{cases} \widetilde \caA & \text{if } j \, \text{even} \\ \widetilde \caA' & \text{if } j \, \text{odd}, \end{cases}$$
    see Figure \ref{fig:exampleSPT2d} (which is however rotated $90^{\circ}$ w.r.t. the usual presentation of the $xy$-plane). For each $j \in \Z$ we let $(\gamma_{2j}, \caB_{2j})$ be an isomorphic copy of $(\gamma, \widetilde \caA)$, and we let $(\gamma_{2j+1}, \caB_{2j+1})$ be an isomorphic copy of $(\gamma', \widetilde \caA')$. These combine into an on-site symmetry $(\Gamma, \caB)$ with $\Gamma = \otimes_{j \in \Z} \gamma_j$.

    Similarly, for each $j \in \Z$ let the LPS $(\beta_{2j}, \caB_{2j})$ be an isomorphic copy of $(\beta \otimes \delta, \widetilde \caA)$, and let $(\beta_{2j+1}, \caB_{2j+1})$ be an isomorphic copy of $(\bar \beta \otimes \delta', \widetilde \caA')$. These combine into a locality preserving symmetry $B = \otimes_{j \in \Z} \beta_j$ on $\caB$.
    
    Finally, for each $j \in \Z$ let $(\al_1^{(2j-1, 2j)}, \caB_{2j-1} \otimes \caB_{2j})$ be an isomorphic copy of $(\al_1, \widetilde \caA' \otimes \widetilde \caA)$, and let $(\al_2^{(2j, 2j+1)}, \caB_{2j} \otimes \caB_{2j+1})$ be an isomorphic copy of $(\al_2, \widetilde \caA \otimes \widetilde \caA')$. These combine into FDQCs $A_1 = \otimes_{j \in \Z} \al_1^{(2j-1, 2j)}$ and $A_2 = \otimes_{j \in \Z} \, \al_2^{(2j, 2j+1)}$ with the following properties:
    $$ B = A_1 \circ \Gamma \circ A_1^{-1}, \quad \Gamma =  A_2 \circ B \circ A_2^{-1}.$$

    We consider now the FDQC $( A := A_2 \circ A_1, \caB )$. This FDQC is equivariant under the on-site symmetry $\Gamma$, indeed
    $$ A \circ \Gamma = A_2 \circ A_1 \circ \Gamma =  A_2 \circ B \circ A_1 = \Gamma \circ A_2 \circ A_1 = \Gamma \circ A.$$
    
    We now compute its index using the left  half-plane $L = \Left{1}$. Writing $\caB_{\leq j} := \caB_{\Left{j+1}}$, 
    $$ A(\caB_{L}) = A( \caB_{\leq 0} ) = A_2 \big( A_1(\caB_{\leq 0}) \big) = A_2( \caB_{\leq 0} ) = \caB_{\leq -1} \otimes \al_2^{(0, 1)}( \caB_0 ). $$
    From this it is clear that we can take $\caP = \al_2^{(0, 1)}( \caB_0 )$ as our boundary algebra. To see how the global symmetry $\Gamma$ acts on this boundary algebra, take $x \in \caB_0$. Then
    $$ \Gamma \big(  \al_2^{(0, 1)}(x)  \big) = \al_2^{(0, 1)} \big( \beta_0(x) \big). $$
    We see that $( \Gamma|_{\caP}, \caP)$ is isomorphic as an LPS to $(\beta \otimes \delta, \widetilde \caA)$, which has index $[\omega]$. We conclude that the equivariant FDQC $(A, \caB)$ has anomaly $[\omega]$.
    \begin{figure}
        \centering
    \begin{tikzpicture}[>=stealth]
    
        \def\lines{6}
        \def\dotsperline{4}
        \def\spacing{0.6} 
        \def\groupspacing{1.5} 

        \pgfmathsetmacro{\yoffset}{-(-1) * (\lines * \spacing )}
        
        \foreach \l in {1,...,\lines} {
    
            \ifodd\l
                \def\linecolor{blue!60}
                \def\labell{$\beta$}
            \else
                \def\linecolor{green!60}
                \def\labell{$\bar\beta$}
            \fi 
            
            \pgfmathsetmacro{\currenty}{\yoffset - (\l-1) * \spacing}
            
            \draw[\linecolor, thin] (0, \currenty) -- (\dotsperline , \currenty);
    
            \node[] at (\dotsperline / 2,\currenty ) {\labell};
            
            \foreach \x in {0,...,\numexpr\dotsperline-1\relax} {
                \fill[\linecolor] ( \x + 0.5, \currenty) circle (2pt);
            }
        }
    
    
        \foreach \l in {1,2,3}{
        \pgfmathsetmacro{\y}{ \l * 2*  \spacing }
            \draw [decorate, decoration={brace, amplitude=5pt, raise=5pt}, thick]
                ( \dotsperline , \y ) -- (\dotsperline , \y - 1 * \spacing);
    
                \draw [decorate, decoration={brace, amplitude=5pt, raise=5pt}, thick]
                ( \dotsperline +1.5 , \y - 1 * \spacing ) -- (\dotsperline +1.5, \y );
        
            \draw[->] (\dotsperline +0.5 , \y - 0.5 * \spacing) -- (\dotsperline +1 , \y - 0.5 * \spacing) 
                node[midway, above] {$\alpha_1$};
        }

    
        \foreach \l in {1,2}{
        \pgfmathsetmacro{\y}{ \l * 2*  \spacing + \spacing}
            \draw [decorate, decoration={brace, amplitude=5pt, raise=5pt}, thick]
                ( -\groupspacing  , \y ) -- (-\groupspacing  , \y - 1 * \spacing);
    
                \draw [decorate, decoration={brace, amplitude=5pt, raise=5pt}, thick]
                ( -\groupspacing  +1.5 , \y - 1 * \spacing ) -- (-\groupspacing  +1.5, \y );
        
            \draw[->] (-\groupspacing  +0.5 , \y - 0.5 * \spacing) -- (-\groupspacing  +1 , \y - 0.5 * \spacing) 
                node[midway, above] {$\alpha_2$};
        }
    
        
        \pgfmathsetmacro{\yoffset}{-(-1) * (\lines * \spacing )}
        
        \foreach \l in {1,...,\lines} {
    
            \ifodd\l
                \def\linecolor{orange!60}
                \def\labell{$\gamma$}
            \else
                \def\linecolor{red!60}
                \def\labell{$\gamma'$}
            \fi 
            
            \pgfmathsetmacro{\currenty}{\yoffset - (\l-1) * \spacing}
            
            \draw[\linecolor, thin] (\dotsperline + \groupspacing , \currenty) -- (\dotsperline + \groupspacing + \dotsperline , \currenty);
    
            \node[] at (\dotsperline + \groupspacing +\dotsperline / 2,\currenty ) {\labell};
            
            \foreach \x in {0,...,\numexpr\dotsperline-1\relax} {
                \fill[\linecolor] ( \dotsperline + \groupspacing + \x + 0.5, \currenty) circle (2pt);
            }
        }
    
        
        \pgfmathsetmacro{\yoffset}{-(-1) * (\lines * \spacing )}
        
        \foreach \l in {1,...,\lines} {
    
            \ifodd\l
                \def\linecolor{orange!60}
                \def\labell{$\gamma$}
            \else
                \def\linecolor{red!60}
                \def\labell{$\gamma'$}
            \fi 
            
            \pgfmathsetmacro{\currenty}{\yoffset - (\l-1) * \spacing}
            
            \draw[\linecolor, thin] (-\dotsperline - \groupspacing , \currenty) -- (-\dotsperline - \groupspacing + \dotsperline , \currenty);
    
            \node[] at (-\dotsperline - \groupspacing +\dotsperline / 2,\currenty ) {\labell};
            
            \foreach \x in {0,...,\numexpr\dotsperline-1\relax} {
                \fill[\linecolor] ( -\dotsperline - \groupspacing + \x + 0.5, \currenty) circle (2pt);
            }
        }
     \def\y{-0.2}
     \node[] at (\dotsperline/2, \y) {$B$};
     \node[] at (3*\dotsperline/2 + \groupspacing, \y) {$\Gamma$};
     \node[] at (-\groupspacing - \dotsperline /2, \y) {$\Gamma$};
    \draw[->, thick] (-\groupspacing  +0.5 , \y ) -- (-\groupspacing  +1 , \y ) 
                node[midway, above] {$A_2$};
    \draw[->, thick] (\dotsperline  +0.5 , \y ) -- (\dotsperline  +1 , \y ) 
                node[midway, above] {$A_1$};
    
    \end{tikzpicture}
        \caption{An illustration of the the build-up of the example class. In order to better represent the actions of $A_1$ and $A_2$, we take the $y$-axis to point to the right, and the $x$-axis pointing down. Each square lattice represent the spin system $\caB$, with different on-site symmetries, given by repsectively $\Gamma, B$ and $\Gamma$. The entanglers between the symmetries are $A_1$ and $A_2$, acting componentwise on the indicated stripes.}
        \label{fig:exampleSPT2d}
    \end{figure}
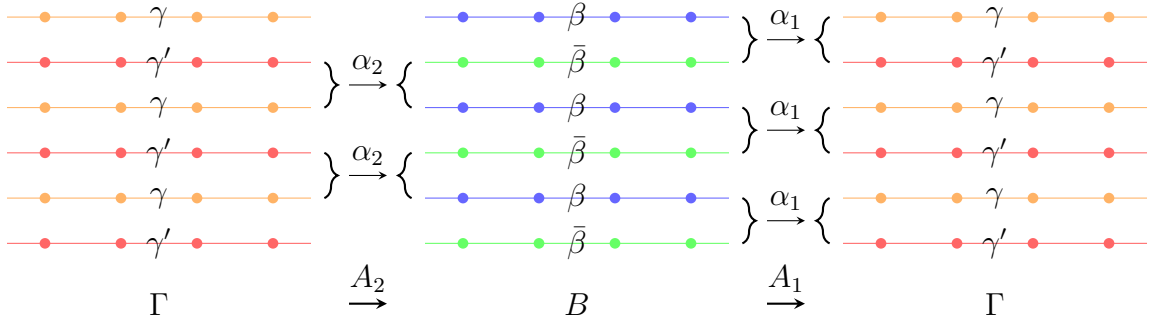
\end{proof}

\subsection{Symmetric blending in 2d} \label{sec.symmetric-blending}

The goal of this section is to construct symmetric blendings $\theta_x$ between a symmetric entangler with trivial $H^3(G,U(1))$ index, and the identity automorphism on $\AA$.

Let $(\al, \beta, \caA)$ be a symmetric entangler with $\ind(\alpha) = [1]$ and fix a FDQC representation of $\al$ of depth $n$ and maximal diameter of blocks $q$. Recall the terminology and objects discussed in Section \ref{sec.index}.

We will construct a blending $\theta_x$ so that $\theta_x|_{\caA_{L_x^c}} = \id$ for $x \in \Z$. 

\subsubsection{Disentangling $\beta$ on the spin chain $\caP_x^{[r]}$}

Recall that, by Lemma \ref{lem.LPS-boundary-algebra}, the boundary algebra $\caP_x^{[r]} =\otimes_\J \caP_{x,\J}^{[r]}$ is a spin chain that carries an action of $G$ by the LPS $(\beta, \caP_{x}^{[r]})$. By Lemma \ref{thm.trivializing-LPS}, there is an on-site symmetry action $(\delta_x,\mathcal D_x)$ on a spin chain $\caD_x = \otimes_\J \caD_{x,\J}$, and an FDQC $(\gamma_x, \caP_x^{[r]} \otimes \mathcal D_x)$ with range of $\mathcal{O}(r)$ so that 
$$\gamma_x \circ (\beta \otimes \delta_x) \circ \gamma_x^{-1}$$
is an on-site symmetry on $\caP_x^{[r]} \otimes \caD_x$. 
Let us define the algebras
$$
\caC_{x,\J} = \gamma_x^{-1}(\caP_{x,\J}^{[r]} \otimes \caD_{x,\J})
$$
and note that $(\beta \otimes \delta_x)$ is on-site with respect to the decomposition $\caC_x := \caP_x^{[r]} \otimes \caD_x = \otimes_\J \, \caC_{x,\J}$. That is, $(\beta \otimes \delta_x) (\caC_{x,\J})=\caC_{x,\J}$. We introduce isomorphic copies of the $\caC_{x,\J}$ which we call $\caC'_{x,\J}$, and define the spin chain
$$
\caC'_x=\otimes_\J \, \caC_{x,\J}'.
$$

We fix an isomorphism $\pi_x : \caC_x \rightarrow \caC'_x$ 
that takes each $\caC_{x,\J}$ to $\caC'_{x,\J}$ and equip $\caC'_x$ with the on-site symmetry $\beta'_x = \{ \beta'_{x, g} \}_{g \in G}$ given by
$$
    \beta'_x := \pi_x \circ (\beta \otimes \delta_x) \circ \pi_x^{-1}.
$$

Finally, we want to obtain a $G$-equivariant isomorphism $\iota_x$ from $\caC'_x$ to $\otimes_{\J} \, (\caA^{[r]}_{x,\J} \otimes \caD_{x,\J})$ which maps each $\caC'_{x,\J}$ to $\caA^{[r]}_{x,\J} \otimes \caD_{x,\J}$. 
It is always possible to find such an isomorphism inside a stable equivalence class. Indeed, the finite-dimensional full matrix algebras $\mathcal C'_{x,\J}$ and $\mathcal A^{[r]}_{x,\J} \otimes \mathcal D_{x,\J}$ are $G$-invariant under $\beta'_x$ and $\beta \otimes \delta_x$ respectively, and these symmetry actions are both linear. Moreover, recalling that $\caP^{[r]}_{x,\J} = \eta_x(\caA^{[r]}_{x,\J})$, we see that $\mathcal C'_{x,\J}$ and $\mathcal A^{[r]}_{x,\J} \otimes \mathcal D_{x,\J}$ have the same dimension. 
We can therefore apply Lemma \ref{lem.isomorphism-projective-reps} to see that we can stack $\AA$ with a spin chain along the boundary $\partial_r \Left x$ carrying on-site regular representations of $G$ and replace $\al$ by $\alpha \otimes \id$, so that the $G$-actions on $\mathcal C'_{x,\J}$ and $\mathcal A^{[r]}_{x,\J} \otimes \mathcal D_{x,\J}$ are isomorphic. We assume without loss of generality that these regular ancillary degrees of freedom are included in $\caA$, so we have $G$-equivariant isomorphisms $\iota_x|_{\caC'_{x,\J}} : \caC'_{x,\J} \to \caA^{[r]}_{x,\J} \otimes \caD_{x,\J}$ which together define a $G$-equivariant isomorphism $\iota_x:\caC'_x \to\otimes_{\J} \, (\caA^{[r]}_{x,\J} \otimes \caD_{x,\J})$ as required.

\subsubsection{The blending maps $\theta_x$} \label{sec: blending map}

We recall that $\caA=\otimes_{j\in {\bbZ^2}} \caA_j$ is a 2d spin system and $\caD_x=\otimes_{\J \in \bbZ} \caD_{x,\J}$ and $\caC'_x=\otimes_{\J \in \bbZ} \caC'_{x,\J}$ are spin chains equipped with on-site symmetries $\delta_x$ and $\beta'_x$ respectively.
We will put these together in a 2d spin system by stacking the spin chains $\caD_x$ and $\caC'_x$ along the boundaries of $\partial_r \Left x$.
More precisely, we consider the 2-dimensional spin system $\widetilde \caA$ with on-site degrees of freedom
\begin{align}
    \label{ex.auxiliary-algebra-blending-1} \widetilde \caA_{(x-1,\J)} &=\caA_{(x-1,\J)}\otimes \caD_{x,\J} \otimes  \caC'_{x,\J}, \qquad \forall x \in 3r\ZZ, \J \in \Z \\
    \label{ex.auxiliary-algebra-blending-2}   \widetilde \caA_{(x,y)} &=\caA_{(x,y)}, \qquad \qquad \qquad \qquad \quad \forall x \not\in 3r\ZZ-1, y \in \Z.
\end{align}
The following Lemma is a symmetric version of a particular case of \cite[Theorem 3.7]{freedman2020classification} (namely, the FDQC case). 

\begin{lemma} [Symmetric blending in 2d]  \label{lem:symmetric_blending}
   For each $x \in 3r\ZZ$, there exists a symmetric entangler $(\theta_x, \beta \otimes \delta \otimes \beta',\caA\otimes \caD\otimes \caC')$ of range $\mathcal{O}(r)$ such that
    \begin{align} \label{eq.symmetric-blending-defining-prop-2d}
        \theta_x|_{\Left{x-r}} = \alpha|_{\Left{x-r}} \otimes \id, \qquad \theta|_{\AA_{\Left{x}^c}} = \id \otimes \id.
    \end{align}
\end{lemma}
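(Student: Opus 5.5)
The plan is to mimic the 1d construction of Lemma \ref{lem.symmetric-blending-1d}. The one new ingredient is that the boundary algebra is now a spin chain carrying a non-on-site LPS, and $\gamma_x$ is what disentangles it. Throughout, write $\caA_{\Left x}$ for the algebra of the enlarged system restricted to the half-plane, including the auxiliary chains $\caD_x\otimes\caC'_x$ placed on the column $x-1$. We will define $\theta_x$ on $\caA_{\Left x}\otimes\caD_x\otimes\caC'_x$ as a composition $\theta^{[3]}\circ\theta^{[2]}\circ\theta^{[1]}$ of $G$-equivariant isomorphisms, and extend it by $\id$ on $\Left{x}^c$.

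\textbf{Step 1.} Set $\theta^{[1]}=\alpha|_{\caA_{\Left x}}\otimes\id_{\caD_x}\otimes\id_{\caC'_x}$. Its image is $\alpha(\caA_{\Left x})\otimes\caD_x\otimes\caC'_x=\caP_x^{[r]}\otimes\caA_{\Left{x-r}}\otimes\caD_x\otimes\caC'_x$, by \eqref{eq.eta-boundary-algebra-2d}. It is equivariant because $\alpha$ commutes with $\beta$.

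\textbf{Step 2.} Group $\caP_x^{[r]}\otimes\caD_x=\caC_x$ and apply $\pi_x:\caC_x\to\caC'_x$ to it. On the original copy of $\caC'_x$, apply $\iota_x:\caC'_x\to\otimes_\J(\caA^{[r]}_{x,\J}\otimes\caD_{x,\J})=\caA_{\partial_r\Left x}\otimes\caD_x$. Act by $\id$ on $\caA_{\Left{x-r}}$. Call this $\theta^{[2]}$, and follow it with the tensor flip $\theta^{[3]}$ that puts the factors back in their original slots. Both $\pi_x$ and $\iota_x$ are $G$-equivariant by construction: $\beta'_x$ is defined as the transport of $\beta\otimes\delta_x$, and $\iota_x$ is equivariant by Lemma \ref{lem.isomorphism-projective-reps}. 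Hence $\theta_x$ is an equivariant automorphism of the full algebra, and restricting to $\Left{x-r}$ gives $\alpha\otimes\id$, which is \eqref{eq.symmetric-blending-defining-prop-2d}.

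\textbf{Step 3 (locality and FDQC property).} This is the main obstacle. In 1d the boundary piece was finite-dimensional, so locality was automatic; here $\pi_x$ and $\iota_x$ are isomorphisms of infinite chains, and we need them to be local. The factor $\pi_x$ is chosen to map $\caC_{x,\J}=\gamma_x^{-1}(\caP^{[r]}_{x,\J}\otimes\caD_{x,\J})$ onto $\caC'_{x,\J}$. Composed with $\eta_x$ and $\gamma_x$, which have ranges $\mathcal O(r)$ along the strip, this shows that $\theta_x$ sends the observables at a given height $\J$ into those within height $\mathcal O(r)$. Likewise $\iota_x$ is site-wise in $\J$. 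So $\theta_x$ is a QCA of range $\mathcal O(r)$.

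To upgrade from QCA to FDQC, I would write $\theta_x$ explicitly as a circuit:
\[
\theta_x = \Big(\big(\iota_x\circ\pi_x\circ\gamma_x^{-1}\big)\ \text{and swaps}\Big)\circ\big(\eta_x\otimes\id\big)\circ\big(\alpha'_{[x]}\otimes\id\big),
\]
using \eqref{eq: split alpha l}. Each factor is either an FDQC of range $\mathcal O(r)$ ($\alpha'_{[x]}$, $\eta_x$, $\gamma_x^{-1}$) or a strip-localized automorphism that maps each $\J$-block into a bounded number of $\J$-blocks. The latter is a 1d QCA along the strip. To see it has trivial GNVW index, note that the composite $\theta_x$ agrees with $\alpha$ on $\Left{x-r}$ and with $\id$ on $\Left x^c$, while the intermediate maps are site-wise isomorphisms. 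If needed, this can be checked via \cite[Theorem 3.7]{freedman2020classification}, whose non-symmetric analogue already yields an FDQC blending. Finally, $\theta_x$ is a symmetric entangler because it is an FDQC commuting with the on-site symmetry $\beta\otimes\delta\otimes\beta'$; note that symmetric gates are not required.
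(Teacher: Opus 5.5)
Your construction of $\theta_x$ in Steps 1--2 is the same as the paper's, and so are the equivariance argument, the restriction identities and the $\mathcal{O}(r)$ range bound. (The flip $\theta^{[3]}$ is superfluous: the image of $\theta^{[2]}$ already is the subalgebra $\caA_{\Left x}\otimes\caD_x\otimes\caC'_x$.)

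The gap is in Step 3, which is exactly where the work lies. First, your displayed factorization is not an identity. $\theta^{[2]}$ applies $\pi_x$ directly to $\caC_x=\caP^{[r]}_x\otimes\caD_x$, so you cannot insert a $\gamma_x^{-1}$ without a compensating $\gamma_x$.

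Second, you reduce the FDQC property to triviality of the GNVW index of a strip QCA, and neither justification you give establishes it.
\begin{itemize}
\item The restriction identities hold equally for $\theta_x\circ\tau$, where $\tau$ is any QCA acting only on the auxiliary chain $\caC'_x$. An example is a translation in the $\J$-direction, when it is equivariant, which has non-trivial index. So these identities say nothing about the index along the strip.
\item $\pi_x$ is site-wise only relative to the twisted decomposition $\caC_x=\otimes_\J\caC_{x,\J}$, not relative to the rows $\caP^{[r]}_{x,\J}\otimes\caD_{x,\J}$. A map that is site-wise for one decomposition can have any index for another, unless you show the two decompositions differ by an FDQC.
\end{itemize}
The appeal to Theorem 3.7 of Freedman--Hastings only gives the existence of some (non-symmetric) blending. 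It says nothing about this particular one.

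The missing step is to use exactly that relation, $\caC_{x,\J}=\gamma_x^{-1}(\caP^{[r]}_{x,\J}\otimes\caD_{x,\J})$, and commute $\gamma_x$ through $\theta^{[2]}$. This is the paper's argument. Since $\theta^{[2]}$ acts as $\pi_x$ on $\caC_x$, one has $\theta^{[2]}\circ\gamma_x=(\pi_x\circ\gamma_x\circ\pi_x^{-1})\circ\theta^{[2]}$. Together with \eqref{eq: split alpha l} this gives
\[
\theta_x=(\pi_x\circ\gamma_x\circ\pi_x^{-1})\circ\Sigma_x\circ\alpha'_{[x]},\qquad \Sigma_x:=\theta^{[2]}\circ\gamma_x^{-1}\circ(\eta_x\otimes\id).
\]
Each factor is an FDQC:
\begin{itemize}
\item $\Sigma_x$ is the identity on $\caA_{\Left{x-r}}$. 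It maps $\caA^{[r]}_{x,\J}\otimes\caD_{x,\J}$ onto $\caC'_{x,\J}$, and via $\iota_x$ maps $\caC'_{x,\J}$ back onto $\caA^{[r]}_{x,\J}\otimes\caD_{x,\J}$. So it is a single layer of gates on the blocks $\{x-r,\dots,x-1\}\times\{\J\}$, together with their auxiliary factors.
\item $\pi_x\circ\gamma_x\circ\pi_x^{-1}$ is a transported copy of $\gamma_x$. It is an FDQC on $\otimes_\J\caC'_{x,\J}$, because $\gamma_x$ is also an FDQC with respect to its own rotated decomposition $\{\caC_{x,\J}\}$, with gates $\gamma_x^{-1}(V)$.
\item $\alpha'_{[x]}$ consists of gates of $\alpha$.
\end{itemize}
Hence $\theta_x$ is manifestly an FDQC, and no index computation is needed.
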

\begin{proof}
The blending is defined similarly as in the 1d case (Lemma \ref{sec.symmetric-blending-1d}): We let 
$$
\theta_x := ( \theta^{[2]}\circ\theta^{[1]}) \otimes \id_{\caA_{\Left{x}^c}},
$$
where $\theta^{[1]}$ and $\theta^{[2]}$ are defined as follows: 
\begin{enumerate}
    \item Let $\theta^{[1]}: \AA_{\Left x} \otimes \mathcal D_x \otimes \mathcal C'_x \to \alpha(\AA_{\Left x}) \otimes \mathcal D_x \otimes \mathcal C'_x$ be the homomorphism acting according to the diagram:
\[
\theta^{[1]}:\quad
\begin{tikzcd}[column sep=1.8em, row sep=4em]
\AA_{\Left x} \arrow[d, "\alpha"'] 
  & \otimes \arrow[d, phantom] 
  & \mathcal{D}_x \arrow[d, "\mathrm{id}"'] 
  & \otimes \arrow[d, phantom] 
  & \caC'_x \arrow[d, "\mathrm{id}"'] \\
\alpha(\AA_{\Left x})  
  & \otimes 
  & \mathcal{D}_x 
  & \otimes 
  & \caC'_x
\end{tikzcd}
\]

    \item Recalling Eq. \eqref{eq.eta-boundary-algebra-2d}, the homomorphism $\theta^{[2]}: \al(\AA_{\Left x}) \otimes \mathcal D_x \otimes \mathcal C'_x \to \AA_{\Left x} \otimes \mathcal D_x \otimes \mathcal C'_x$ is defined by
\[
\theta^{[2]}:\quad
\begin{tikzcd}[column sep=1.1em, row sep=4em]
\AA_{\Left{x-r}} \arrow[d, "\id"'] 
  & \otimes \arrow[d, phantom] 
  &  \caC_x \arrow[d, "\pi_x"'] 
  & \otimes \arrow[d, phantom] 
  & \caC'_x \arrow[d, "\iota_x"'] \\
\AA_{\Left{x-r}}
  & \otimes 
  & \mathcal{\caC}_x' 
  & \otimes 
  & \mathcal A_{\partial_r L_x} \otimes \mathcal D_x.
\end{tikzcd}
\]

\end{enumerate}
All maps that enter are manifestly $G$-equivariant, and hence so is $\theta_x$. Note moreover that each $\caP_{x, y}^{[r]} = \eta_x(\caA^{[r]}_{x, \J})$ is a subalgebra of $\widetilde \caA_{\{(x, y)\}^{(r)}}$, and that $\gamma_x$ has range $\mathcal{O}(r)$. It follows that each $\caC_{x, y} = \gamma_x^{-1}(\caP_{x,\J}^{[r]} \otimes \caD_{x,\J})$ is a subalgebra of $\widetilde \caA_{\{(x, y)\}^{(Cr)}}$ for some $C$. Together with the fact that $\al$ has range bounded by $r$, this implies that $\theta_x$ has range $\mathcal{O}(r)$.

To show that $\theta_x$ is an FDQC, we first use \eqref{eq: split alpha l} to split $\theta^{[1]}=\gamma_x^{-1}\circ\theta'\circ \alpha'$ where we write $\alpha' := \alpha'_{[x]}$, and


\[
\theta':\quad
\begin{tikzcd}[column sep=1.1em, row sep=4em]
\AA_{\Left {x-r}} \arrow[d, "\id"'] 
 & \otimes 
 &  \mathcal A_{\partial_r L_x} \otimes \mathcal D_x \arrow[d, "\gamma_x \circ(\eta_x\otimes \id)"']
  & \otimes \arrow[d, phantom] 
  & \caC'_x \arrow[d, "\id"'] \\
 \AA_{\Left{x-r}} 
 &\otimes 
 & \mathcal{P}_x^{[r]} \otimes \mathcal D_x
  & \otimes 
  & \mathcal C'_x
\end{tikzcd}
\]
Note that $\alpha'$ is manifestly an FDQC. The remaining map can be recast as
\begin{align*}
\theta_x \circ (\alpha')^{-1} &= \theta^{[2]} \circ\gamma_x^{-1}\circ\theta' \\ &= (\pi_x \circ \gamma_x ^{-1}\circ\pi_x^{-1})\circ\theta^{[2]}\circ\theta'.
\end{align*}
Since $\gamma_x$ is an FDQC, so is $\pi_x\circ \gamma_x^{-1}\circ\pi_x^{-1}$. 
The crux is that $\theta^{[2]}\circ\theta'$ is on-site, and therefore also an FDQC. 
To see this, note first that $\theta^{[2]} \circ \theta'$ acts as identity on $\caA_{L_{x-r}}$, so it is sufficient to check that it is  on-site on $\caA_{\partial_r L_x} \otimes \caD_x \otimes \caC'_x$. Indeed, for each $\J \in \ZZ$ we have
\begin{equation*}
    \mathcal{A}^{[r]}_{x,\J} \otimes \mathcal{D}_{x,\J} 
    \xrightarrow{\quad \eta_x \otimes \text{id} \quad} 
    \mathcal{P}_{x,\J}^{[r]} \otimes \mathcal{D}_{x,\J} 
    \xrightarrow{\quad \gamma_x^{-1} \quad} 
    \mathcal{C}_{x,\J} 
    \xrightarrow{\quad \pi_x \quad} 
    \mathcal{C}'_{x,\J}
\end{equation*}
and
\begin{equation*}
    \caC'_{x, \J} \, \xrightarrow{ \quad \iota_x \quad } \, \caA^{[r]}_{x, \J} \otimes \caD_{x, \J}.
\end{equation*}
Together, these facts imply that $(\theta^{[2]} \circ \theta')( \caA^{[r]}_{x, \J} \otimes \caD_{x, \J} \otimes \caC'_{x, \J}) = \caA^{[r]}_{x, \J} \otimes \caD_{x, \J} \otimes \caC'_{x, \J}$ for all $\J \in \Z$, as required.
We conclude that $\theta_x$ is an FDQC.
\end{proof}
  


\subsection{Injectivity of the index map in 2d}

In this section we use the above results, particularly the concept of symmetric blending, to establish the injectivity of the index map defined in Proposition \ref{prop.index}.

\begin{lemma} \label{lem:injectiveness}
    The index map $\ind: (\sEnt_{G,2}/\sim) \longrightarrow H^3(G,U(1))$ is injective.
\end{lemma}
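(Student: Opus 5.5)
The plan mirrors the 1d injectivity argument (Lemma \ref{lem:injectiveness-1d}), with the 2d symmetric blending of Lemma \ref{lem:symmetric_blending} in place of Lemma \ref{lem.symmetric-blending-1d}.

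\textbf{Trivial index implies stably trivial.} Let $(\al,\beta,\caA)$ be a 2d symmetric entangler with $\ind(\al)=[1]$. First we stack with the auxiliary system $\widetilde\caA$ built from the chains $\caD_x\otimes\caC'_x$ placed along the columns $x-1$, $x\in 3r\Z$, including the regular-representation ancillas. This replaces $\al$ by $\al\otimes\id$, which is stably equivalent to $\al$. Lemma \ref{lem:symmetric_blending} then gives symmetric entanglers $\theta_x$ of range $\mathcal O(r)$. Each $\theta_x$ agrees with $\al\otimes\id$ on $\Left{x-r}$ and is the identity on $\Left{x}^c$.

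For $k\in\Z$ set $\al_k:=\theta_{3rk}^{-1}\circ\theta_{3r(k+1)}$. This is equivariant, being a composition of equivariant maps. It is the identity far to the left of $3rk$ and to the right of $3r(k+1)$, and equals $\al\otimes\id$ deep inside the strip. Here one must enlarge the spacing to $Cr$ with $C$ a large constant, since the blendings only have range $\mathcal O(r)$, not $r$. Each $\al_k$ is therefore an equivariant FDQC supported on a vertical strip of bounded width. Strips with indices of equal parity are disjoint, so $\al_{\rm even}:=\prod_{k\in2\Z}\al_k$ is well defined, and we set $\al_{\rm odd}:=(\al\otimes\id)\circ\al_{\rm even}^{-1}$. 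Then $\al\otimes\id=\al_{\rm odd}\circ\al_{\rm even}$, and $\al_{\rm odd}$ is again a product of equivariant FDQCs on disjoint strips, namely the regions between the even strips where the pieces of $\al$ and of $\al_{\rm even}^{-1}$ telescope.

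\textbf{Main obstacle.} Unlike in 1d, the strips are infinite in the $y$-direction. Each strip entangler is thus a 2d symmetric entangler supported on a strip, which is effectively a 1d symmetric entangler on a quasi-1d system, and not a finite-volume unitary. To handle this I would view each strip, with its columns grouped into one site per $y$-value, as a spin chain along $y$. The restriction of $\al_k$ is then a 1d symmetric entangler there. It is still an FDQC after regrouping, and it is equivariant because $\beta$ and the auxiliary symmetries are on-site.

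Its 1d index in $H^2(G,U(1))$ need not vanish, and here I must be careful. I would apply the 1d classification, Lemma \ref{lem.classification-1d-symmetric-entanglers}, to each strip separately, including both $\al_k$ and the pieces of $\al_{\rm odd}$. If a strip has nontrivial 1d index $[\mu_k]$, I would stack a 1d entangler of index $-[\mu_k]$ from Lemma \ref{lem.surjectivity-1d} together with its inverse on an ancillary chain in the same strip. The inverse is an equivariant FDQC of index $[\mu_k]$; this pair is trivially absorbable because it is composed into the circuit in an arrangement where the corrections cancel between the even and odd layers. In this way each layer becomes stably equivalent to an FDQC with symmetric gates, uniformly in $k$. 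Uniformity holds because the 1d disentangling is constructive with ranges controlled by $r$, and the strips are translates of finitely many patterns up to isomorphism. The whole layer is then stably equivalent to $\id$, in the spirit of Lemma \ref{lem.equivalence-block-partitioned-1d}, and hence $\al\sim\al_{\rm odd}\circ\al_{\rm even}\sim\id$.

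\textbf{Conclusion.} The passage from "trivial index implies trivial" to injectivity is verbatim the last paragraph of the proof of Lemma \ref{lem:injectiveness-1d}. It uses surjectivity (Lemma \ref{lem.surjectivity-2d}) to produce an inverse-index entangler, and additivity of the index under stacking (Proposition \ref{prop.index}).
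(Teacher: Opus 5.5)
Your setup matches the paper's: the blendings from Lemma \ref{lem:symmetric_blending}, $\al_k=\theta_{3rk}^{-1}\circ\theta_{3r(k+1)}$, the factorisation $\al\otimes\id=\al_{\rm odd}\circ\al_{\rm even}$ into striped symmetric entanglers, and the closing argument via surjectivity and additivity. The gap is in the step that removes the strip indices. The paper handles that step with Lemma \ref{lem.triviality-striped-symmetric-entanglers}, which is an Eilenberg--Mazur swindle.

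Write $[\mu_k]$ ($k$ even) for the 1d indices of the strips of $\al_{\rm even}$, and $[\nu_{k+1}]$ for the index of the strip $T_{k+1}$ of $\al_{\rm odd}$ lying between even strips $k$ and $k+2$. Inserting $\id=\epsilon^{-1}\circ\epsilon$ between the layers, with $\epsilon$ a product of ancillary 1d entanglers of index $-[\mu_k]$, makes the even layer stripwise trivial. But the odd layer now contains $\epsilon^{-1}$. Each ancillary column of index $[\mu_k]$ either forms an odd strip by itself or is absorbed into $T_{k-1}$ or $T_{k+1}$, whose index becomes e.g.\ $[\nu_{k+1}]+[\mu_k]$.

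Nothing makes this vanish. Change each $\theta_x$ by a 1d entangler of index $c_x$ on an ancillary column at $x-1$; this still satisfies \eqref{eq.symmetric-blending-defining-prop-2d}. Writing $c_k$ for $c_{3rk}$, it shifts $[\mu_k]$ by $c_{k+1}-c_k$ and $[\nu_{k+1}]$ by $c_{k+2}-c_{k+1}$, so no local identity between neighbouring indices can hold. Repairing the odd layer the same way just pushes new defects back into the even layer. So ``the corrections cancel between the layers'' only records $\epsilon^{-1}\circ\epsilon=\id$. It does not show that the odd layer is stably equivalent to an FDQC with symmetric gates.

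What is missing is a global, cumulative choice of corrections. Split the correction in even strip $k$ into two ancillary columns of indices $a_k$ and $b_k=-[\mu_k]-a_k$, placed in its overlaps with $T_{k-1}$ and $T_{k+1}$. Then $T_{k+1}$ carries $[\nu_{k+1}]-b_k-a_{k+2}$, so you need $a_{k+2}=a_k+[\mu_k]+[\nu_{k+1}]$. This recursion is solved by partial sums starting from an arbitrary $a_0$. Since the values lie in the finite group $H^2(G,U(1))$, all ancillas and 1d disentanglers stay uniformly bounded. This bookkeeping genuinely uses that the row of strips is infinite. It is exactly the content of the swindle in Lemma \ref{lem.triviality-striped-symmetric-entanglers} (and of Lemma \ref{lem.equivalence-block-partitioned-1d}, which you allude to), so you should either carry it out or cite that lemma.

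Also, uniformity cannot come from the strips being ``translates of finitely many patterns'', because $\al$ need not be translation invariant. It comes from the explicit bounds in Lemmas \ref{lem.symmetric-blending-1d} and \ref{lem:injectiveness-1d}, together with the finiteness of the list of representatives $\al^{[\omega]}$.
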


\begin{proof}
    Let $(\alpha,\beta,\AA)$ be a symmetric entangler on a 2-dimensional spin system $\caA$ with $\ind{(\alpha)} = [1] \in H^3(G,U(1))$. Lemma \ref{lem:symmetric_blending} gives symmetric blendings $\{ \theta_x \}_{x \in \Z}$ between $\alpha \otimes \id$ and $\id$, satisfying \eqref{eq.symmetric-blending-defining-prop-2d}.
    Define symmetric entanglers
    \begin{equation*}
    \left(\al_k = \theta_{{3rk}}^{-1} \circ \theta_{3r(k+1)}, \beta \otimes\delta \otimes \beta', \AA \otimes \tilde \AA \right).
    \end{equation*}
    It follows from Eq. \eqref{eq.symmetric-blending-defining-prop-2d} that
     \begin{align}  \label{eq.alpha-k-properties in 2d}
        \nonumber \alpha_k|_{\Left{3rk-r}} &= \alpha_k|_{\Left{3r(k+1)}^c} = \id \otimes \id, \\ 
        \alpha_k|_{\AA_{ [ 3rk+r, 3r(k+1) - r] \times \ZZ}} &= \alpha|_{\AA_{ [ 3rk+r, 3r(k+1) - r] \times \ZZ }} \otimes \id.
    \end{align}
    
    This means that the $\al_k$s are supported on vertical stripes which only overlap with the supports of their nearest neighbors $\al_{k\pm1}$. Moreover, on the interior of these stripes each $\al_k$ agrees with $\al$. Each of
    \begin{align*}
    \bigg(\alpha_{\text{even}}= \prod_{k \in 2\ZZ} \alpha_k, \beta \otimes \delta \otimes \beta', \AA \otimes \tilde \AA\bigg),
    \end{align*}
    and 
    \begin{align*}
    \bigg(\alpha_{\text{odd}} = \alpha \otimes \id \circ \alpha_{\text{even}}^{-1}, \beta \otimes \delta \otimes \beta', \AA \otimes \tilde \AA\bigg)
    \end{align*}
    is a product of symmetric 1d FDQCs supported on disjoint stripes, each with range universally bounded by $\mathcal O(r)$.
    
    We now have 
        \begin{equation}\label{eq:stripes 2d}
            \alpha \otimes \id =\alpha_{\text{ odd}} \circ \alpha_{\text{even}},
    \end{equation}
    where each factor $\alpha_{\text{even}}$, $\alpha_{\text{odd}}$ is stably equivalent to $\id$ by Lemma \ref{lem.triviality-striped-symmetric-entanglers} (any two striped symmetric entanglers are stably equivalent). We conclude that $\alpha \otimes \id$, and therefore $\al$ itself, is stably equivalent to $\id$.

    To finish the proof, suppose $\al$ and $\al'$ are two symmetric entanglers with $[\mu] = \ind(\al) = \ind(\al') \in H^3(G, U(1))$. By Lemma \ref{lem.surjectivity-2d} there is a symmetric entangler $\delta$ with $\ind(\delta) = -[\mu]$. By additivity of the index (Proposition \ref{prop.index}) we have that $\delta \otimes \al'$ has trivial index and is therefore stably equivalent to $id$. It follows that $\al \sim \al \otimes \id \sim \al \otimes \delta \otimes \al'$. But also $\al \otimes \delta$ has trivial index, and is therefore stably equivalent to $\id$. we therefore obtain $\al \sim \id \otimes \al' \sim \al'$, as required.
\end{proof}


    


\subsection{Proof of Proposition \ref{prop.classification-of-symmetric-entanglers}} \label{sec.proof-classificaiton-symmetric-entanglers-2d}

\begin{proof}
    The case $d=0$ was discussed below the statement of the Proposition. The case $d=1$ is covered by Lemma \ref{lem.classification-1d-symmetric-entanglers}. For the case $d=2$, Proposition \ref{prop.index} yields a well-defined index map $\ind: (\sEnt_{G,2}/\sim) \longrightarrow H^3(G,U(1))$ which is a homomorphism of monoids. Lemmas \ref{lem:injectiveness} and \ref{lem.surjectivity-2d} show that this index map is a bijection. Since $H^3(G,U(1))$ is a group, it follows that $(\sEnt_{G,2}/\sim)$ is a group as well, and the index map is an isomorphism of groups.
\end{proof}

\section*{Acknowledgments}
W.D.R.~and B.O.C.~were supported by the FWO and F.R.S.-FNRS under the Excellence of Science (EOS) programme through the research project G0H1122N EOS 40007526 CHEQS, the KULeuven Runner-up Grant No.~iBOF DOA/20/011, and the internal KULeuven Grant No.~C14/21/086.

\section*{Data Availability}
Data sharing is not applicable to this article as no new data were created or analysed in this study.

 \section*{Conflict of interest}
 The authors declare no conflict of interest.

\appendix

\section{Group cohomology} \label{sec.cohomology}

For a comprehensive discussion on finite group cohomology, see e.g.  \cite{brown2012cohomology}. Let $G$ be a finite group. For any left G-module $M$, we denote by $C^n(G,M)$ the set of functions from $G \times \dots \times G \to M$, where the product runs over $n$ copies of $G$. Elements of $C^n(G,M)$ are called $n-$cochains. The pair $(C^n(G,M),\cdot)$ is a group for every $n$, so that co-boundary group homomorphisms can be defined as
	\begin{align*}
		\nonumber    d^{n+1}:\ C^n(G,M)&\to C^{n+1}(M,G) \\
		\mu &\mapsto (d^{n+1} \mu),
	\end{align*}
	given explicitly by
	\begin{align} \label{eq.cocyclecondition}
		\nonumber (d^{n+1}\mu) (g_1,\dots,g_{n+1}) := &\mu (g_2,\dots,g_{n+1}) \\ &+ \sum\limits_{i=1}^n (-1)^i \mu(g_1,\dots,g_ig_{i+1},\dots,g_{n+1}) + (-1)^{n+1} \mu(g_1,\dots,g_n),
	\end{align}
    where we are using additive notation for the abelian module operation. The map
		\begin{equation}
			d^{n+1} \circ d^n:\ C^{n}(G,M) \to C^{n+2}(G,M)
		\end{equation}
		is identically zero, from which we conclude there is a cochain complex
	$\dots \leftarrow C^{n} \leftarrow C^{n-1} \leftarrow \dots,$
	with $n$-th cohomology group defined as 
	\begin{equation}
		H^n(G,M) := \dfrac{Z^n(G,M)}{B^n(G,M)},
	\end{equation}
	with 
	\begin{align*}
		Z^n(G,M) := \ker(d^{n+1}), \\
		B^n:= \begin{cases} 0,\ &\text{if } n=0, \\
			\text{im}(d^n),\ &\text{if } n\ge 1.
		\end{cases}
	\end{align*}

    That is, the n-th cohomology group $H^n(G, M)$ is the group of equivalence classes of n-cocycles (elements of $Z^n(G,M)$) with respect to equivalence modulo n-coboundaries (elements of $B^n(G,M)$). \par

    

\subsection{Projective representations} \label{appendix.projective-reps}

We define a projective representation ($V, \mathcal K$) with 2-cocycle $\mu \in Z^2(G,U(1))$ as a finite-dimensional Hilbert space $\mathcal K$, together with a map 
$$V:\ G \longrightarrow \mathcal U(\mathcal K)$$
such that $V(g) V(h) = \mu(g,h) V(gh)$. Two projective representations $(V,\mathcal K)$, $(V',\mathcal K')$ are isomorphic if there exists a unitary intertwiner $\mathfrak U: \mathcal K \to \mathcal K'$ such that
$\mathfrak U \circ V(g) = V'(g) \circ \mathfrak U,$
for all $g \in G$. By \cite[~Corollary 3.8]{CHENG2015230}, two projective representations with the same 2-cocycle $\mu$ and the same character are isomorphic. This implies

\begin{lemma} \label{lem.isomorphism-projective-reps}
    Let $(V,\HH)$, $(V',\HH')$ be two projective representations of the same dimension with the same 2-cocycle $\mu$. Then 
    $(V \otimes \rho, \HH \otimes \mathbb C[G])$ is isomorphic to $ (V' \otimes \rho, \HH' \otimes \mathbb C[G]),$
    where $(\rho,\mathbb C[G])$ is the regular representation of $G$. 
\end{lemma}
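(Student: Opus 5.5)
The plan is to build an explicit intertwiner by combining the cited character result with the structure of the regular representation. First observe that $(V\otimes\rho,\HH\otimes\C[G])$ and $(V'\otimes\rho,\HH'\otimes\C[G])$ are projective representations with the same 2-cocycle $\mu$: since $\rho$ is linear, $(V(g)\otimes\rho(g))(V(h)\otimes\rho(h)) = \mu(g,h)\,V(gh)\otimes\rho(gh)$, and likewise for $V'$. By \cite[Corollary 3.8]{CHENG2015230}, recalled just above the statement, it therefore suffices to show that the two tensored representations have the same character.

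For the character computation, use multiplicativity of the trace under tensor products, $\operatorname{tr}(V(g)\otimes\rho(g)) = \operatorname{tr}V(g)\cdot\operatorname{tr}\rho(g)$. The character of the regular representation is $\operatorname{tr}\rho(g)=|G|\,\delta_{g,e}$. Hence the character of $V\otimes\rho$ vanishes for $g\neq e$, and at $g=e$ it equals $\operatorname{tr}V(e)\cdot|G|$. The same holds for $V'\otimes\rho$. It remains to compare the values at $e$. Here I would use the normalization $V(e)=\I$, which holds after rescaling by phases; this is harmless because it changes $\mu$ only by a coboundary. With this normalization, $\operatorname{tr}V(e)=\dim\HH=\dim\HH'=\operatorname{tr}V'(e)$ by the equal-dimension hypothesis.

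The main obstacle is the normalization subtlety. Corollary 3.8 is stated for the same cocycle $\mu$, not merely the same class. From $V(e)V(e)=\mu(e,e)V(e)$ we get $V(e)=\mu(e,e)\I$, so $V(e)$ is a scalar fixed by $\mu$. Since $V$ and $V'$ share the same $\mu$, we have $V(e)=V'(e)=\mu(e,e)\I$, and no rescaling is needed. The characters at $e$ are then $\mu(e,e)\dim\HH\,|G|$ for both, so they agree.

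As a sanity check, and as a fallback if one prefers to avoid the cited corollary, the unitary intertwiner can be written down directly. Define $\mathfrak U_0 : \HH\otimes\C[G]\to \HH\otimes\C[G]$ by $\psi\otimes\ket{h}\mapsto \overline{\mu(\cdot)}\,V(h)^*\psi\otimes\ket{h}$, with suitable phases chosen so that $V\otimes\rho\cong \I_\HH\otimes\rho'$. Here $\rho'$ is the $\mu$-twisted regular representation, which does not depend on $V$ except through $\dim\HH$. Doing the same for $V'$ and composing with any unitary $\HH\to\HH'$ gives the intertwiner.
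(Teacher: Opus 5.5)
Your proposal is correct and follows the paper's argument: the character of the regular representation vanishes off $e$, so $\chi_{V\otimes\rho}$ and $\chi_{V'\otimes\rho}$ agree everywhere, and \cite[Corollary 3.8]{CHENG2015230} gives the isomorphism. Your version is slightly more careful than the paper's. The paper writes $\chi_\rho(g)=\delta_{e,g}$ instead of $|G|\delta_{e,g}$, and leaves implicit both that $V\otimes\rho$ still has cocycle $\mu$ and that $V(e)=\mu(e,e)\I$. Your fallback is also sound with no extra phases: the map $\psi\otimes\ket{h}\mapsto V(h)^*\psi\otimes\ket{h}$ already conjugates $V\otimes\rho$ to $\I_{\HH}\otimes\rho'$, where $\rho'(g)\ket{h}=\mu(g,h)\ket{gh}$.
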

\begin{proof}
    Let $\chi_V,\chi_{V'}, \chi_\rho$ denote the character functions of projective representations $V$, $V'$, and $\rho$, respectively. Since $\chi_\rho (g) = \delta_{e,g}$, the result follows by $
        \chi_{V \otimes \rho}(g) = \chi_V (g) \cdot \chi_\rho (g) = \dim (V) \delta_{e,g} = \chi_{V'\otimes \rho}(g).$
\end{proof}

\begin{lemma}[Classification of 0d LPS] \label{lem.classification-0d-LPS}
    There is a group isomorphism
    \begin{align*}
       \Omega:  (\Sym_{G,0} / \sim) \to H^2(G,U(1)). 
    \end{align*}
\end{lemma}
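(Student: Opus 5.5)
The plan is to reduce everything to projective representations, using that in $0d$ a spin system is a single full matrix algebra $\caA \simeq \End(\C^{d})$, and that every QCA (in particular every FDQC) is an arbitrary automorphism, hence of the form $\Ad{W}$ for a unitary $W$.

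\emph{Definition of $\Omega$.} Given an LPS $(\beta,\caA)$, each $\beta_g$ is an automorphism of a full matrix algebra, hence inner: $\beta_g = \Ad{V(g)}$ with $V(g)$ unitary and unique up to a phase. Since $\beta_g\beta_h=\beta_{gh}$, the unitary $V(g)V(h)V(gh)^*$ is central, so
\[
V(g)V(h)=\mu(g,h)V(gh)
\]
for some $\mu(g,h)\in U(1)$. Associativity gives $\mu\in Z^2(G,U(1))$. Rephasing $V(g)\mapsto c(g)V(g)$ changes $\mu$ by a coboundary, so we set $\Omega(\beta):=[\mu]\in H^2(G,U(1))$. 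This is the same argument already used around \eqref{eq:1d boundary action}.

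\emph{Invariance and homomorphism property.}
\begin{enumerate}
\item If $\beta'_g=\gamma^{-1}\circ\beta_g\circ\gamma$ with $\gamma=\Ad{W}$, then $V'(g)=W^*V(g)W$ implements $\beta'$ and has the same cocycle $\mu$.
\item Stacking multiplies implementers, $V\otimes V'$, and therefore multiplies cocycles. So $\Omega(\beta\otimes\beta')=\Omega(\beta)+\Omega(\beta')$.
\item An on-site symmetry is implemented by a linear representation and has trivial class. Hence stacking with on-site symmetries does not change $\Omega$.
\end{enumerate}
Together these show that $\Omega$ descends to a monoid homomorphism on $(\Sym_{G,0}/\sim)$.

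\emph{Surjectivity.} For any $\mu\in Z^2(G,U(1))$, the twisted left regular representation on $\C[G]$,
\[
V_\mu(g)\ket{h}=\mu(g,h)\ket{gh},
\]
is a projective representation with cocycle $\mu$. Then $\beta_g:=\Ad{V_\mu(g)}$ on $\End(\C[G])$ is an LPS with $\Omega(\beta)=[\mu]$.

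\emph{Injectivity.} This is the only step with content. Suppose $\Omega(\beta)=\Omega(\beta')$, with implementers $(V,\C^d)$ and $(V',\C^{d'})$. Rephase $V'$ so that the two cocycles are literally equal.
\begin{itemize}
\item First match dimensions by stacking $\beta$ with the trivial on-site symmetry on $\End(\C^{d'})$, and $\beta'$ with the trivial one on $\End(\C^{d})$. This leaves two projective representations of equal dimension $dd'$ with the same cocycle.
\item Next stack both with the on-site symmetry given by the regular representation on $\End(\C[G])$.
\item By Lemma \ref{lem.isomorphism-projective-reps}, the resulting projective representations are unitarily intertwined by some $W$. Then $\Ad{W}$, which is a (depth-one) FDQC in $0d$, conjugates one stacked LPS into the other.
\end{itemize}
Hence $\beta\sim\beta'$.

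A bijective monoid homomorphism onto a group forces $(\Sym_{G,0}/\sim)$ to be a group, so $\Omega$ is a group isomorphism. Explicitly, the inverse of the class of $\beta$ is the class implemented by $\overline{V}$.
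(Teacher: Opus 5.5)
Your proposal is correct and follows essentially the same route as the paper's proof: the index is defined as the class of the implementing projective representation, surjectivity comes from the $\mu$-twisted regular representation on $\C[G]$, and injectivity comes from matching dimensions by trivial stacking, rephasing to equal cocycles, and applying Lemma \ref{lem.isomorphism-projective-reps}. You also explicitly check that $\Omega$ descends to the quotient and that the quotient monoid is a group; the paper leaves both points implicit.
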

\begin{proof}
    A 0-dimensional locality preserving symmetry is a pair $(\beta,\AA)$, where the algebra $\AA = \End{(\HH)}$ is the full matrix algebra on a finite-dimensional Hilbert space $\HH$, and the symmetry action $\beta_g = \Ad{U(g)}$ for a unitary projective representation $g \to U(g) \in \AA$ of $G$. 

    The anomaly index $\Omega(\beta)$ of $(\beta,\AA)$ is defined as the projective class of $g \to U(g)$, and takes value in $H^2(G,U(1))$. For any 2-cocycle $\mu$, consider $\AA = \End(\C[G])$ and $\beta$ given by $g \mapsto U(g)$ with $U(g) \ket{h} = { \mu(g, h) } \ket{gh}$. Then $\Omega( \beta ) = [\mu] \in H^2(G, U(1))$. Thus the index map is surjective. 


    Suppose now $(\beta, \AA)$ and $(\beta', \AA')$ are 0-dimensional LPS with $\Omega(\beta) = \Omega(\beta') = [\mu]$. By stacking with trivial representations, we may assume w.l.o.g. that $\caA$ and $\caA'$ have the same dimension. Moreover, possibly by redefining phases, one can pick the projective representations so that they yield the same 2-cocycles (not just up to coboundaries). The stable equivalence $(\beta, \AA) \sim (\beta', \AA')$ now follows from Lemma \ref{lem.isomorphism-projective-reps}, showing injectivity.
\end{proof}

\section{Locality preserving symmetries} \label{appendix.LPS}

We have defined (see \ref{subsec:classification-lps}) equivalence of LPSs with respect to conjugation with FDQCs. By stabilization, this equivalence can be lifted to accommodate also QCAs, as stated in the following Lemma:

\begin{lemma} \label{lem.equivalence-symmetries-qca}
    Let $(\beta,\AA)$ be a LPS on a spin chain $\AA$, and $(\gamma,\AA)$ a QCA on $\AA$. Then $(\beta,\AA)$ is stably equivalent to $(\gamma \circ \beta \circ \gamma^{-1}, \AA)$.
\end{lemma}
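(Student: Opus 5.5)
The plan is to reduce the general QCA case to an FDQC conjugation by doubling with an inverted copy, using the standard fact that $\gamma \otimes \gamma^{-1}$ is an FDQC (Arrighi/GNVW-type result: for any QCA $\gamma$ on a spin chain, $\gamma\otimes\gamma^{-1}$ on $\AA\otimes\AA$ equals a finite depth circuit, e.g.\ via $\gamma\otimes\gamma^{-1} = \Sigma\circ(\gamma^{-1}\otimes\id)\circ\Sigma\circ(\gamma\otimes\id)$, where $\Sigma$ is the on-site swap, combined with the fact that $(\gamma\otimes\gamma^{-1})$ has trivial GNVW index). Concretely, the stabilizing system is $\tilde\AA$, a copy of $\AA$. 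The obstacle is that $\beta$ is not on-site, so we cannot simply stack a copy of $(\beta,\AA)$ as the auxiliary on-site symmetry.

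\textbf{Trivial copy.} Stack $(\beta,\AA)$ with $(\id,\tilde\AA)$, the trivial on-site action; this is allowed since the trivial representation is on-site. Similarly stack $(\gamma\circ\beta\circ\gamma^{-1},\AA)$ with $(\id,\tilde\AA)$.

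\textbf{Conjugating circuit.} Set $\Gamma := \gamma\otimes\gamma^{-1}$ on $\AA\otimes\tilde\AA$. By the fact above, $\Gamma$ is an FDQC, and
$$\Gamma\circ(\beta_g\otimes\id)\circ\Gamma^{-1} = (\gamma\circ\beta_g\circ\gamma^{-1})\otimes(\gamma^{-1}\circ\id\circ\gamma) = (\gamma\circ\beta_g\circ\gamma^{-1})\otimes\id .$$
Hence the two stacked LPSs are equivalent via conjugation by the FDQC $\Gamma^{-1}$, which is exactly the definition of stable equivalence in Subsection \ref{subsec:classification-lps}.

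\textbf{Checking the circuit.} I would verify that $\Gamma$ is an FDQC by writing $\gamma\otimes\gamma^{-1} = (\gamma\otimes\id)\circ\Sigma\circ(\gamma^{-1}\otimes\id)\circ\Sigma$. Here $\Sigma = \otimes_j \Ad{\Omega_j}$ is on-site, and the product is locally a "swap-conjugated" pair. Applying the standard GNVW argument, one block-partitions $\AA$ into intervals of length at least the range of $\gamma$ and realizes $\gamma\otimes\gamma^{-1}$ as a depth-two circuit, as in \cite{Gross_2012} and \cite[Lemma 3.1 / Theorem 3.6]{freedman2020classification}.

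The main step requiring care is citing or reproving that $\gamma\otimes\gamma^{-1}$ is an FDQC for a general 1d QCA. Since everything here is one-dimensional, GNVW gives it directly: the index is multiplicative and $\operatorname{ind}(\gamma^{-1}) = \operatorname{ind}(\gamma)^{-1}$, and a trivial-index 1d QCA is an FDQC.
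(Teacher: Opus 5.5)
Your proposal is correct and is essentially the paper's own proof: you stack both sides with a trivial on-site copy and conjugate by $\gamma\otimes\gamma^{-1}$, which is an FDQC by \cite{Gross_2012}. Your GNVW-index justification (multiplicativity, plus the fact that a trivial-index 1d QCA is an FDQC) suffices on its own, and the swap identity you write down is a valid alternative route to the same fact.
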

\begin{proof}
    By definition,
    $$(\gamma \circ \beta \circ \gamma^{-1}, \AA) \sim \left(\left(\gamma \circ \beta \circ \gamma^{-1}\right)\otimes \id, \AA \otimes \AA\right) = \left(\left(\gamma \otimes \gamma^{-1}\right) \circ \left(\beta \otimes \id\right) \circ \left(\gamma^{-1} \otimes \gamma\right), \AA \otimes \AA  \right).$$
    But $(\gamma \otimes \gamma^{-1},\AA \otimes \AA)$ is an FDQC \cite{Gross_2012}, so 
    $$(\left(\gamma \otimes \gamma^{-1}\right) \circ \left(\beta \otimes \id \right)\circ\left( \gamma^{-1} \otimes \gamma\right), \AA \otimes \AA) \sim (\beta \otimes \id,\AA \otimes \AA) \sim (\beta,\AA).$$
\end{proof}

\subsection{Disentangling anomaly-free LPSs}

LPSs on spin chains have been fully classified in terms of their anomaly index, taking values in the third group cohomology $H^3(G,U(1))$.

Recall that stable equivalence of LPSs was defined (see Section \ref{subsec:classification-lps}) by allowing to stack with on-site symmetries. That is, range 0 symmetries with on-site actions that are linear representations of $G$. The following Lemma connects this definition with the completeness results in \cite{bols2025classificationlocalitypreservingsymmetries, seifnashri2025disentangling}.

\begin{lemma} \label{thm.trivializing-LPS}
    If an LPS $(\beta, \AA)$ of range $r$ on a spin chain $\AA$ has a trivial anomaly index $\Omega(\beta) = [1]$, then there exists an on-site symmetry $(\delta, \mathcal D)$, and an FDQC $(\gamma, \AA \otimes \mathcal D)$ of range $\mathcal{O}(r)$ so that
    $$
    \gamma \circ (\beta \otimes \delta) \circ \gamma^{-1} 
    $$
    is an on-site symmetry. 
\end{lemma}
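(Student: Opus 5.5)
The plan is to reduce the statement to the completeness result of \cite{bols2025classificationlocalitypreservingsymmetries} (Proposition \ref{thm.classification-1d-LPS}), which says that an LPS with trivial index $\Omega(\beta)=[1]$ is stably equivalent to an on-site symmetry. By the definition in Subsection \ref{subsec:classification-lps}, this gives on-site symmetries $(\delta,\caD)$ and $(\delta',\caD')$ on spin chains, and an FDQC $(\tilde\gamma,\AA\otimes\caD)$, such that
$$\tilde\gamma^{-1}\circ(\beta\otimes\delta)\circ\tilde\gamma=\delta'$$
is on-site on $\caD'$. Here we identify $\AA\otimes\caD\simeq\caD'$ sitewise, which is possible because the stable equivalence forces the local algebras to match. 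Setting $\gamma:=\tilde\gamma^{-1}$ then yields exactly the conclusion that $\gamma\circ(\beta\otimes\delta)\circ\gamma^{-1}$ is on-site. So the existence part is essentially a restatement of stable equivalence to the trivial class. If the cited classification is phrased with a slightly different notion of stabilization (for instance, also allowing conjugation by QCAs, or stacking with arbitrary on-site actions on both sides), I will first translate it using Lemma \ref{lem.equivalence-symmetries-qca}. The QCA $\mathscr F$ arising there can be absorbed because $\mathscr F\otimes\mathscr F^{-1}$ is an FDQC; to do so I stack $\delta$ with a further on-site copy so that the doubled chain carries an on-site action.

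The second step is the quantitative claim that $\gamma$ has range $\mathcal O(r)$ and $\caD$ has bounded on-site dimension controlled by $r$. For this I will not use the abstract statement. Instead I will re-run the explicit construction from \cite{bols2025classificationlocalitypreservingsymmetries} (compare \cite{seifnashri2025disentangling}). First, block the chain into blocks of length $\sim r$, so that $\beta$ becomes a nearest-neighbour LPS. Next, cut the chain at block boundaries and restrict $\beta_g$ to a half-chain, obtaining $\beta_g|_{\text{left}}=\Ad{\cdot}$ corrections localized near the cut. The resulting local unitaries $\{V(g)\}$ near each cut satisfy a twisted cocycle relation whose obstruction is $\Omega(\beta)$. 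When this class is trivial, one can rephase the $V(g)$ and stack regular representations $\C[G]$ (Lemma \ref{lem.isomorphism-projective-reps}) at each cut to obtain, near each cut, a genuine linear representation. Finally, assemble the cut-localized pieces into an FDQC whose gates live on $\mathcal O(1)$ blocks, i.e.\ of range $\mathcal O(r)$ in original sites. Because the construction is translation-local and uses only data in windows of size $\mathcal O(r)$, the range bound follows.

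The main obstacle is this uniformity: the abstract classification gives some FDQC, but the range must be bounded by a universal constant times $r$. The difficulty is that the stabilizing chain and the linearization at each cut must be chosen with bounded size, independently of position. I will handle this by checking that every step above is performed blockwise with inputs of diameter $\mathcal O(r)$. The auxiliary dimensions are then bounded in terms of $|G|$ and $d_{\max}^{\mathcal O(r)}$, which suffices for the application in Lemma \ref{lem:symmetric_blending}.
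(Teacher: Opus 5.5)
There is a genuine gap, and it sits exactly where the content of this lemma lies. The results of \cite{bols2025classificationlocalitypreservingsymmetries} (Theorem 2.1 and Proposition 6.2 there) stabilize and disentangle only up to \emph{range $0$} symmetries $\otimes_i \Ad{U_i(g)}$, in which each $U_i$ may be a genuinely projective representation. The lemma is precisely the upgrade from this to linear on-site actions. Your first step does not provide that upgrade. Proposition \ref{thm.classification-1d-LPS}, with stabilization by linear on-site symmetries, is deduced in the paper \emph{from} this lemma together with \cite[Theorem 2.1]{bols2025classificationlocalitypreservingsymmetries}, so invoking it here is circular. Your fallback translation via Lemma \ref{lem.equivalence-symmetries-qca} only trades conjugation by QCAs for conjugation by FDQCs; it says nothing about projective versus linear on-site actions. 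Suppose your ``further on-site copy'' is read as the conjugate $\overline{\delta}$, which does make the ancilla $\delta\otimes\overline{\delta}$ linear. Even then, the disentangled symmetry $\gamma\circ(\beta\otimes\delta)\circ\gamma^{-1}$ is still only range $0$ with projective on-site actions.

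Your second step asserts that, once the $H^3$ obstruction vanishes, rephasing and stacking regular representations at each cut yields a linear representation there. This fails. Rephasing changes a 2-cocycle only by a coboundary, and $V\otimes\rho$, with $\rho$ the (linear) regular representation, has the same class as $V$. Lemma \ref{lem.isomorphism-projective-reps} identifies projective representations \emph{with the same cocycle}; it never cancels one. The residual $H^2$ charges are invisible to $\Omega$. For example, take $\beta_g=\otimes_i \Ad{U_i(g)}$ with $U_0$ of class $[\mu]\neq[1]$ and all other $U_i$ linear. Then $\Omega(\beta)=[1]$ and there is nothing nontrivial at any cut, yet $\beta$ is not on-site. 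In fact no construction of the kind you describe can treat this example, that is, none in which the ancilla and the gates near a point are determined by data in a bounded window. To see this, let $R$ be the range of $\gamma$ and take a long interval $J$. The class of the symmetry on $(\caA\otimes\caD)_{J^{(R)}}$ splits into the trivial class carried by $\gamma^{-1}((\caA\otimes\caD)_J)$ plus two contributions localized near the endpoints of $J$. These contributions would coincide for this $\beta$ and for the same $\beta$ with $U_0$ replaced by a linear representation, whereas the total classes differ by $[\mu]$. The missing ingredient is a non-local charge pump. The paper takes the range-$0$ disentangling of \cite[Proposition 6.2]{bols2025classificationlocalitypreservingsymmetries}, which also supplies the $\mathcal O(r)$ range, and stacks it with $\overline{\delta}$. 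It then removes the on-site $H^2$ charges by the Eilenberg--Mazur swindle of Lemma \ref{lem.equivalence-decoupled-1d}, whose ancillas are built from partial sums of these charges over arbitrarily long stretches of the chain.
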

\begin{proof}
    By \cite[Proposition 6.2]{bols2025classificationlocalitypreservingsymmetries}, the statement holds without mention of linearity of the on-site actions. That is, there is a range 0 LPS $(\delta, \caD)$ and a disentangling FDQC $(\gamma, \caA \otimes \caD)$ whose range is bounded by $\mathcal{O}(r)$ so that $\gamma \circ (\beta \otimes \delta) \circ \gamma^{-1} 
    $
    is a range 0 LPS acting projectively on each site.
    
    Let $\overline \delta$ denote the symmetry with conjugate on-site projective actions. Then $(\delta \otimes \overline \delta, \caD \otimes \caD)$ is an on-site symmetry action, and  
    $$
    \left(\gamma \otimes \id \right)\circ (\beta \otimes \delta \otimes \overline \delta) \circ\left( \gamma^{-1}  \otimes \id\right) = \left(\gamma \circ (\beta \otimes \delta) \circ \gamma^{-1}\right) \otimes \overline \delta
    $$
    is still on-site. Moreover, by Lemma \ref{lem.equivalence-decoupled-1d} (any two 1d range 0 LPS are stably equivalent), there are on-site symmetry $\delta'$ and an equivariant FDQC $(\Delta,\mathcal D')$ such that 
    $$
    \Delta \circ (\gamma \circ (\beta \otimes \delta \otimes \overline \delta) \circ \gamma^{-1} \otimes \delta') \circ \Delta^{-1} = \Delta \circ \gamma \circ (\beta \otimes \delta \otimes \overline \delta \otimes \delta') \circ \gamma^{-1} \circ \Delta^{-1}
    $$
    is an on-site symmetry. 
\end{proof}

By Lemma \ref{thm.trivializing-LPS}, stable equivalence with respect to stacking with on-site symmetries is equivalent to stable equivalence with respect to stacking with range 0 symmetries. This result together with \cite[Theorem 2.1]{bols2025classificationlocalitypreservingsymmetries} yields

\begin{proposition} [Classification of 1d LPSs] \label{thm.classification-1d-LPS}
    There is a group isomorphism
    \begin{align*}
       \Omega:  (\Sym_{G,1} / \sim) \to H^3(G,U(1)). 
    \end{align*}
\end{proposition}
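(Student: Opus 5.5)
The plan is to build the isomorphism $\Omega : (\Sym_{G,1}/\sim) \to H^3(G,U(1))$ from the index of \cite{bols2025classificationlocalitypreservingsymmetries}. There, the index is defined on individual LPSs on spin chains and is shown to be a complete invariant up to stable equivalence in which one may stack with \emph{range 0} symmetries, i.e.\ on-site actions that may be projective on each site \cite[Theorem 2.1]{bols2025classificationlocalitypreservingsymmetries}. Our notion of stable equivalence (Section \ref{subsec:classification-lps}) only allows stacking with on-site symmetries whose on-site actions are \emph{linear}. The work is therefore to show that the two notions of stable equivalence coincide on $\Sym_{G,1}$. Once this is done, $\Omega$ descends to our quotient and inherits bijectivity and the homomorphism property.

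The first step is that $\Omega$ is well defined on $(\Sym_{G,1}/\sim)$. Our relation is coarser than or equal to theirs only if stacking with linear on-site symmetries and conjugating by FDQCs preserve $\Omega$. Both are special cases of the moves allowed in \cite{bols2025classificationlocalitypreservingsymmetries}, since a linear on-site symmetry is a range 0 LPS. Hence our equivalence implies theirs, and $\Omega$ is constant on our classes. Additivity under stacking and the fact that the identity class maps to $[1]$ are also taken from there, so $\Omega$ is a monoid homomorphism. Surjectivity holds because the explicit LPSs of \cite[Section~4]{bols2025classificationlocalitypreservingsymmetries} realize every class.

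The main step is injectivity, and here I would use Lemma \ref{thm.trivializing-LPS}. Suppose $\Omega(\beta)=\Omega(\beta')$. Let $\bar\beta'$ be an LPS with index $-\Omega(\beta')$; such an LPS exists by surjectivity. Then $\beta\otimes\bar\beta'$ and $\beta'\otimes\bar\beta'$ both have trivial index. By Lemma \ref{thm.trivializing-LPS}, each becomes on-site (linear) after stacking with a linear on-site symmetry and conjugating by an FDQC, so each is stably equivalent in our sense to an on-site symmetry.

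Any two on-site symmetries are stably equivalent to each other. They are both equivalent to the trivial class: stack each with the other's on-site data and, site by site, use Lemma \ref{lem.isomorphism-projective-reps} with regular representations to intertwine them by a depth-one circuit. This gives $\beta\otimes\bar\beta'\sim\beta'\otimes\bar\beta'$.

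It remains to cancel $\bar\beta'$, and this needs an inverse. I would take $\bar\beta$ with $\Omega(\bar\beta)=-\Omega(\beta)$. Since $\beta\otimes\bar\beta$ has trivial index, it is $\sim$ an on-site symmetry, hence $\sim$ trivial. The same holds for $\bar\beta'\otimes\beta'$. The chain is then
$$\beta \sim \beta\otimes(\bar\beta'\otimes\beta') = (\beta\otimes\bar\beta')\otimes\beta' \sim (\beta'\otimes\bar\beta')\otimes\beta' \sim \beta'.$$
Here one must check that stable equivalence is compatible with stacking and with reordering tensor factors. Reordering is implemented by on-site swap gates, as in Lemma \ref{lem:composition is equivalent to stacking}.

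I expect the main obstacle to be the bookkeeping in Lemma \ref{thm.trivializing-LPS}: the result imported from \cite{bols2025classificationlocalitypreservingsymmetries} only yields projective on-site actions, which are then made linear by stacking with $\bar\delta$. Since that lemma is already available, the remaining care lies in making sure that the on-site dimensions match when Lemma \ref{lem.isomorphism-projective-reps} is applied. This is handled by further stacking with trivial on-site factors.
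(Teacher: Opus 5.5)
Your proposal is correct and follows the same route as the paper. The paper obtains the proposition by combining \cite[Theorem 2.1]{bols2025classificationlocalitypreservingsymmetries} with Lemma \ref{thm.trivializing-LPS}, which shows that stacking with linear on-site symmetries and stacking with range 0 symmetries give the same stable equivalence; your inverse-cancellation argument just spells out the injectivity direction of this, with two cosmetic slips: your relation is finer (not coarser) than the one in the cited work, which is what your conclusion actually uses, and the auxiliary $\bar\beta$ never enters your final chain of equivalences.
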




\section{Eilenberg–Mazur swindle}\label{appendix.swindle}

\subsection{$H^2(G,U(1))$ charges in 2d}

Let $\caA = \otimes_{j \in \Z^2} \caA_j$ be a two-dimensional spin system. 
Let $\{ I_i \}_{i \in \Z}$ be a disjoint family of discrete intervals of uniformly bounded size. Let $\alpha = \otimes_{i} \alpha_{I_i}$ where the $\al_{I_j}$ are FDQCs on $\caA$ of uniformly bounded range, whose gates are supported on the vertical strips $v_j = I_j \times \Z$. Any FDQC of this form is called a \emph{striped FDQC} of $\caA$. If the striped FDQC is moreover equivariant w.r.t. a global on-site symmetry, we call it a \emph{striped symmetric entangler}.




\begin{lemma} \label{lem.triviality-striped-symmetric-entanglers}
    Any two striped symmetric entanglers in 2d are stably equivalent.
\end{lemma}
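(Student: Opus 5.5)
It suffices to show that every striped symmetric entangler $(\alpha,\beta,\caA)$, with $\alpha=\otimes_i \alpha_{I_i}$, is stably equivalent to $(\id,\beta,\caA)$; the statement then follows by transitivity. Each $\alpha_{I_i}$ is supported on the vertical strip $v_i=I_i\times\Z$. Since $\beta$ is on-site and the strips are disjoint, each $\alpha_{I_i}$ is itself $G$-equivariant. Hence $(\alpha_{I_i},\beta|_{v_i},\caA_{v_i})$ is a 1d symmetric entangler on the spin chain obtained by blocking each row segment $I_i\times\{y\}$ into a single site. Its index $[\mu_i]\in H^2(G,U(1))$ is well defined, by Lemma \ref{lem.index-independent-of-r-1d}. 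Because the $\alpha_{I_i}$ have uniformly bounded range and the $I_i$ have uniformly bounded size, all the relevant data (range, depth, block size) are uniform in $i$. Moreover $H^2(G,U(1))$ is finite. So after grouping, we may regard $\alpha$ as a 2d object whose columns carry 1d entanglers with $H^2$-charges.

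\textbf{Step 1 (trivial charges).} If $[\mu_i]=[1]$ for all $i$, apply the 1d argument of Lemma \ref{lem:injectiveness-1d} to each strip simultaneously. Lemma \ref{lem.symmetric-blending-1d} yields symmetric blendings whose construction is uniform in $i$ (ranges bounded by $O(r)$ with a uniform $r$). The decomposition \eqref{eq:stripes 1d} then writes each $\alpha_{I_i}\otimes\id$ as a composition of two block-partitioned symmetric FDQCs with symmetric gates, after stacking with auxiliary chains (Lemma \ref{lem.equivalence-block-partitioned-1d}). Tensoring these over $i$ gives FDQCs with symmetric gates on a 2d stacked system, with uniformly bounded depth. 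Hence $\alpha\sim\id$.

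\textbf{Step 2 (nontrivial charges: the swindle).} This is the main obstacle. We cannot simply stack with an entangler of index $-[\mu_i]$ on each strip, because stable equivalence only allows stacking with on-site symmetries, not with entanglers. I would use an Eilenberg--Mazur swindle in the horizontal direction. On each strip $v_i$, Lemma \ref{lem.surjectivity-1d} provides an explicit entangler: a shift $S\otimes\id\otimes S^*$ on $\caA^{\mu}\otimes\caA^{\bar\mu}\otimes\caA^{\mu}$, with index $[\mu_i]$.

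The key idea is to stack $\caA$ with auxiliary on-site degrees of freedom carrying projective representations $\mu$ and $\bar\mu$ arranged in pairs $V_\mu\otimes V_{\bar\mu}$ (a linear representation, hence on-site). A symmetric-gated FDQC acting between neighbouring columns can then transport a charge $[\mu_i]$ from strip $v_i$ to strip $v_{i+1}$. Concretely, I would construct, on a stacked system, a charge-moving symmetric entangler $\tau_i$ that acts as index $[\mu]$ on column $i$ and $-[\mu]$ on column $i+1$. Because the net charge of $\tau_i$ across the pair of columns is trivial, Step 1 shows that $\tau_i$ is stably equivalent to a symmetric-gated FDQC.

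The infinite product $\prod_{j\ge i}\tau_j$ (together with its left counterpart) pushes each column's charge off to infinity. Since $H^2$ is finite, a uniform bound on the auxiliary dimensions lets the telescoping compositions $\tau_i\tau_{i+1}\cdots$ be organized into two layers of commuting, symmetric-gated blocks: one layer of even pairs and one of odd pairs. Composing $\alpha$ with these layers yields a striped entangler all of whose column indices are trivial, which reduces to Step 1.

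The delicate points are twofold. First, the auxiliary system must be expressible as stacking with an on-site (linear) symmetry; this is where the cancelling pairs $\mu\otimes\bar\mu$ and Lemma \ref{lem.isomorphism-projective-reps}, with the regular representation, are used. Second, all constructions must be kept uniform in $i$, so that the resulting circuits have finite depth.
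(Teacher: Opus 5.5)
Your proposal is correct and is essentially the paper's own argument: an Eilenberg--Mazur swindle across the strips, built from the finitely many explicit 1d entanglers of Lemma \ref{lem.surjectivity-1d}, with the 1d classification applied uniformly to trivialize pairs of neighbouring strips whose $H^2(G,U(1))$ indices cancel. The difference is bookkeeping: you compose with a charge-moving entangler that is trivial on even and on odd pairs and then trivialize column by column, whereas the paper stacks with $\tilde\alpha\otimes\tilde\alpha'$ (trivial column by column) and then regroups even and odd pairs; you should also state explicitly that $\tau_j$ carries the single partial-sum charge $-\sum_{1\le k\le j}\ind(\alpha_k)$ (the analogue of $[\omega_j]$ in \eqref{eq.def-charges-odd}), so that only finitely many entangler types occur rather than an unbounded composition of the products $\prod_{j\ge i}\tau_j$.
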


\begin{proof}
    
We prove that a striped symmetric entangler $\alpha = \otimes_i \alpha_{I_i}$ is stably equivalent to the identity. Since the stripes $I_i$ are disjoint, each $\alpha_{I_i}$ is $G$-equivariant. By regrouping sites, we may assume that each $I_i$ is a vertical line $I_i = \{i\} \times \ZZ$ and write $\alpha_{I_i} = \alpha_i$.

By Lemma \ref{lem.classification-1d-symmetric-entanglers} there is a well-defined complete invariant $\ind (\alpha_{i}) \in H^2(G,U(1))$ for each $\al_i$, seen as a 1-dimensional symmetric entangler on its support. Lemma \ref{lem.surjectivity-1d} provides for each class $[\omega] \in H^2(G,U(1))$ a 1-dimensional symmetric entangler $(\alpha^{[\omega]}, \beta^{[\omega]}, \AA^{[\omega]})$ with $\ind(\alpha^{[\omega]}) = [\omega]$. Since $H^2(G, U(1))$ is finite, this set of entanglers is finite, and therefore has uniformly bounded local Hilbert space dimensions and uniformly bounded ranges.

We define an auxiliary striped symmetric entangler by
\begin{align} \label{eq.enriched-symmetric-entangler-swindle}
    (\tilde \alpha, \tilde \beta, \tilde \AA) := (\otimes_{i \geq 1} \alpha^{[\omega_i]}, \otimes_{i \geq 1} \beta^{[\omega_i]}, \otimes_{i \geq 1} \AA^{[\omega_i]}),
\end{align}
where
\begin{enumerate}
    \item for odd $i \in \NN$,
    \begin{align} \label{eq.def-charges-odd}
        [\omega_i] = \ind(\alpha_{i}) + \ind(\alpha_{i-1}) + \dots + \ind(\alpha_{1})
    \end{align}
    \item for even $i \in \NN$,
    \begin{align} \label{eq.def-charges-even}
        [\omega_i] = -[\omega_{i-1}] = -\ind(\alpha_{i-1}) - \ind(\alpha_{i-2}) - \dots - \ind(\alpha_1).
    \end{align}
\end{enumerate}
We have for any 2-cocycle $\omega$ that $\alpha^{[\omega]} \otimes (\alpha^{-[\omega]})$ is stably equivalent to $\id$ as a 1-dimensional symmetric entangler. Define another auxiliary symmetric entangler $\tilde \al'$ by
$$
(\tilde \alpha', \tilde \beta, \tilde \AA) := (\otimes_{i \geq 1} \alpha^{-[\omega_i]}, \otimes_{i \geq 1} \beta^{[\omega_i]}, \otimes_{i \geq 1} \AA^{[\omega_i]}),
$$
then the symmetric entangler $(\tilde \alpha \otimes \tilde \alpha', \tilde \beta \otimes \tilde \beta, \tilde \AA \otimes \tilde \AA  )$ is stably equivalent to $(\id \otimes \id, \tilde \beta \otimes \tilde \beta, \tilde \AA \otimes \tilde \AA)$. We therefore find
\begin{align*}
    ( \alpha , \beta, \AA ) \sim (\alpha \otimes \tilde \alpha \otimes \tilde \alpha', \beta \otimes \tilde \beta \otimes \tilde \beta, \AA \otimes\tilde \AA \otimes \tilde \AA).
\end{align*}
The right-hand side is still a striped symmetric entangler, and each strip $i \geq 1$ now comes with index $\ind(\alpha_i) + [\omega_i] - [\omega_i] \in H^2(G,U(1))$. We represent the obtained symmetric entangler on the right half-plane by
\begin{align*}
    \begin{bmatrix}
        -[\omega_{1}] \\ 
        [\omega_{1}] \\
        \ind(\alpha_{1})
    \end{bmatrix}
    \qquad \otimes \qquad 
    \begin{bmatrix}
        -[\omega_2] \\ 
        [\omega_2] \\
        \ind(\alpha_2)
    \end{bmatrix}
    \qquad \otimes \qquad 
    \begin{bmatrix}
        -[\omega_{3}] \\ 
        [\omega_{3}] \\
        \ind(\alpha_{3})
    \end{bmatrix}
    \qquad 
    \dots 
\end{align*}
where each tensor factor represents a strip, with the corresponding stacked structure as described above. 


Let $i \in 2\NN$ be even. The strips $i,i+1$ carry indices

\begin{align*}
    \begin{bmatrix}
        -[\omega_i] \\ 
        [\omega_i] \\
        \ind(\alpha_i)
    \end{bmatrix}
    \qquad \otimes \qquad 
    \begin{bmatrix}
        -[\omega_{i+1}] \\ 
        [\omega_{i+1}] \\
        \ind(\alpha_{i+1})
    \end{bmatrix}.
\end{align*}
From definitions \eqref{eq.def-charges-even} and \eqref{eq.def-charges-odd} we get
$$\ind(\alpha_i) + \ind(\alpha_{i+1}) = [\omega_i] + [\omega_{i+1}],$$
so by the complete classification of 1-dimensional symmetric entanglers (Lemma \ref{lem.classification-1d-symmetric-entanglers}), the joint strip $\{i,i+1\}$ is stably equivalent to a pair of neighbouring strips $\al^{[\omega_i]} \otimes \al^{[\omega_{i+1}]}$ with indices
\begin{equation*}
    [\omega_i]  \qquad \otimes \qquad  [\omega_{i+1}].
\end{equation*}
This procedure can be carried out for each even-odd pair of strips. Moreover, the stacked spin chains necessary to realize the stable equivalences can be chosen to have uniformly bounded on-site dimensions, and the FDQCs with symmetric gates realizing the equivalences with $\id$ have uniformly bounded ranges. This can be seen by inspecting the completeness proof of Lemma \ref{lem.classification-1d-symmetric-entanglers} (see in particular the constructions in Lemmas \ref{lem.symmetric-blending-1d} and \ref{lem:injectiveness-1d}). We therefore find that $\alpha \otimes \tilde \alpha \otimes \tilde \alpha'$ is stably equivalent to a striped 2-dimensional symmetric entangler with a strip charge configuration on the right half-plane as follows: 
\begin{equation*}
        [\omega_{1}]    \qquad \otimes \qquad [\omega_2] \qquad \otimes \qquad [\omega_{3}] \qquad \dots 
\end{equation*}


For each odd $i \in \NN$, we now have neighbouring stripes with indices:
\begin{equation*}
    [\omega_i] \qquad \otimes \qquad [\omega_{i+1}].
\end{equation*}
By definition (Eqs. \eqref{eq.def-charges-odd}, \eqref{eq.def-charges-even}) we have $[\omega_i] + [\omega_{i+1}] = 0$, so the entangler restricted to each joint strip $\{i,i+1\}$ is stably equivalent to an identity entangler by Lemma \ref{lem.classification-1d-symmetric-entanglers}.


Again, the required stacking has uniformly bounded on-site Hilbert spaces dimensions, and the FDQCs with symmetric gates realizing the equivalence with $\id$ have uniformly bounded ranges. From this, we conclude that $(\alpha, \beta, \AA)$ is stably equivalent to its own restriction to the left half-plane, all the striped symmetric entanglers on the right half-plane having been removed by the above construction.

The same construction can now be repeated to remove all the remaining striped symmetric entanglers on the left half-plane. We conclude that $(\alpha, \beta, \AA)$ is stably equivalent to the identity.
\end{proof}

\subsection{Triviality of block partitioned symmetric entanglers}

Let $(\alpha, \beta,\AA)$ be a block partitioned 1d symmetric entangler. Without loss of generality, assume that $\al$ is factorized. That is, $\alpha$ leaves each on-site algebra $\caA_i$ invariant and restricts there to an equivariant automorphism $\Ad{v_i}: \AA_i \to \AA_i$, for some unitary $v_i \in \caA_i$. Equivariance of $\Ad{v_i}$ means that
$$ \beta_g( v_i ) \beta_g(x) \beta_g(v_i^*) = v_i \beta_g(x) v_i^* $$
for all $x \in \caA_i$. This implies that there are phases $\lambda_i(g) \in U(1)$ such that $\beta_g( v_i ) = \lambda_i(g) v_i$. The fact that $g \mapsto \beta_g$ is a group homomorphism implies that $g \mapsto \lambda_i(g)$ is a 1-cocycle. That is, $v_i$ transforms as a one-dimensional representation $\lambda_i$ of $G$. One might think that these on-site ``charges'' are obstructions to deforming $\alpha$ to $\id$ via an FDQC with symmetric gates. We argue in Lemma \ref{lem.equivalence-block-partitioned-1d} below that this is not the case by showing that these charges can be pumped to infinity using a swindle argument.

The proof is completely analogous to the one of Lemma \ref{lem.triviality-striped-symmetric-entanglers}, except that the role of the 1-dimensional symmetric entanglers with $H^2(G, U(1))$-valued index will now be played by 0-dimensional symmetric entanglers with $H^1(G, U(1))$-valued index. Let us prove the required complete classification of such symmetric entanglers (the case $d=0$ of Proposition \ref{prop.classification-of-symmetric-entanglers}):
\begin{lemma} \label{lem.classification of 0d symmetric entanglers}
    The monoid $(\sEnt_{G, 0} / \sim)$ is a group and there is a group isomorphism
    $$ \ind : (\sEnt_{G, 0}) \simeq H^1(G, U(1)). $$
\end{lemma}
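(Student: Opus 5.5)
In dimension $0$, a spin system is a single full matrix algebra $\caA = \End(\HH)$, with an on-site symmetry $\beta_g = \Ad{U(g)}$ for a linear unitary representation $U$ of $G$. An FDQC on $\caA$ is just an automorphism $\Ad{v}$ with $v \in \caU(\HH)$; in $d = 0$ every automorphism is a depth-one circuit with a single gate. The plan is to define the index from the charge of this gate, show it is well defined and additive, and then prove surjectivity and injectivity.

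\textbf{Index.} Equivariance of $\Ad{v}$ with respect to $\beta$ says that $\beta_g(v)\, v^*$ lies in the center of $\caA$, which is $\C$. Hence $\beta_g(v) = \lambda(g) v$ for a phase $\lambda(g) \in U(1)$, exactly as in the discussion preceding the statement. Since $g \mapsto \beta_g$ is a homomorphism, $\lambda$ is a character, i.e.\ an element of $\hom(G,U(1)) = Z^1(G,U(1))$. The coboundaries $B^1$ vanish because $G$ acts trivially on $U(1)$, so $H^1(G,U(1)) = \hom(G,U(1))$. We set $\ind(\alpha) := \lambda$. This is independent of the phase ambiguity $v \mapsto cv$.

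\textbf{Descent to equivalence classes.} An FDQC with symmetric gates is $\Ad{w}$ with $\beta_g(w) = w$, possibly composed of several layers of such gates. Then $vw$ transforms with the same character $\lambda$, so the index is invariant under equivalence. Stacking with $\id$ on $\tilde\caA$ replaces $v$ by $v \otimes \I$, which has the same charge, so the index is invariant under stable equivalence. Stacking two entanglers gives $v \otimes v'$ with character $\lambda\lambda'$, so $\ind$ is a monoid homomorphism.

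\textbf{Surjectivity.} Take $\HH = \C[G]$ with the regular representation $U$. For a given character $\lambda$, let $v$ be the diagonal unitary $v\ket{h} = \overline{\lambda(h)}\ket{h}$. Then
$$U(g)\, v\, U(g)^*\ket{h} = \overline{\lambda(g^{-1}h)}\ket{h} = \lambda(g)\, v\ket{h},$$
so $\beta_g(v) = \lambda(g) v$, giving an entangler with index $\lambda$.

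\textbf{Injectivity.} This is the main step. It suffices to show that two entanglers $\Ad{v}$ and $\Ad{v'}$ with the same charge $\lambda$ are stably equivalent. First stack each with the identity on the other's system, so both live on the same space $\HH \otimes \HH'$ with symmetry $U \otimes U'$. Replacing $v$ by $v \otimes \I$ and $v'$ by $\I \otimes v'$ preserves their charges. Now $w := v^* v'$ satisfies $\beta_g(w) = \overline{\lambda(g)}\lambda(g)\, w = w$, so $\Ad{w}$ is a one-gate FDQC with a symmetric gate, and $\Ad{v} \circ \Ad{w} = \Ad{v'}$. This gives the required equivalence.

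The only subtlety is that stable equivalence requires matching the local algebras. This is handled by the cross-stacking above: both stacked systems are $\HH \otimes \HH'$, with the tensor factors ordered by a fixed symmetric swap if needed, in the same way as in Lemma \ref{lem:composition is equivalent to stacking}.

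Finally, since the index is a bijective monoid homomorphism onto the group $H^1(G,U(1))$, the monoid $(\sEnt_{G,0}/\sim)$ is a group and $\ind$ is a group isomorphism.
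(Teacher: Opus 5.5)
Your proof is correct. The index, its invariance under equivalence and stacking, and the surjectivity construction on $\C[G]$ coincide with the paper's. Your injectivity step, however, takes a different route.

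The paper brings the two entanglers onto a common space in two stages. It first pads with trivial representations to equalize dimensions. It then stacks both with the regular representation and invokes Lemma \ref{lem.isomorphism-projective-reps}, a character argument using $\chi_{V\otimes\rho}(g)=\dim(V)\,\delta_{e,g}$. This produces an equivariant unitary $\Phi:\HH^{(1)}\otimes\C[G]\to\HH^{(2)}\otimes\C[G]$.

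You instead cross-stack, using each entangler's own symmetry as the ancilla for the other. The only identification then needed is the canonical swap $\HH\otimes\HH'\cong\HH'\otimes\HH$, which intertwines $U\otimes U'$ with $U'\otimes U$. This is legitimate precisely because in $d=0$ both $U$ and $U'$ are linear representations, hence admissible on-site ancillas. Strictly, this swap is an equivariant isomorphism between two spaces, not a symmetric gate as in Lemma \ref{lem:composition is equivalent to stacking}. That is harmless: an equivariant identification is exactly what the paper's $\Phi$ provides as well.

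Your argument is more elementary and canonical, and needs no character theory. The paper's regular-representation argument is the uniform tool it reuses elsewhere. One example is Lemma \ref{lem.classification-0d-LPS}, where the actions are only projective and so cannot serve as on-site ancillas for one another. Another is the construction of the equivariant maps $\iota_x$ in the blendings.
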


\begin{proof}
    A 0-dimensional symmetric entangler is a triple $(\al, \beta, \caA)$ where the algebra $\caA = \End{(\HH)}$ is the full matrix algebra on a finite-dimensional Hilbert space $\HH$, the on-site symmetry $\beta$ is of the form $\beta_g = \Ad{ U(g) }$ for a unitary representation $g \mapsto U(g) \in \caA$ of $G$, and $\al = \Ad{v}$ for a unitary $v \in \caA$, unique up to phase, such that $\beta_g \circ \al = \al \circ \beta_g$ for all $g \in G$.

    The latter condition implies that $\beta_g(v) = \lambda(g) v$ for a 1-cocycle $\lambda : G \rightarrow U(1)$. The index is defined by
    $$ \ind(\al) := [\lambda] \in H^1(G, U(1)). $$
    Note that in $H^1(G, U(1))$, the equivalence relation defining classes is trivial, i.e. if $[\lambda] = [\mu] \in H^1(G, U(1))$ then $\lambda = \mu$ are equal 1-cocycles.

    Let us first argue that this index map is surjective. Given a 1-cocycle $\lambda$, let $\HH = \C[G] = \rm{span}\{ \ket{g} \, : g \in G \}$ and let $\beta$ be the regular representation. Let $v \in \caU(\HH)$ be defined by $v \ket{h} = \lambda(h)^{-1} \ket{h}$ and $\al = \Ad{v}$. Then one easily checks that $\beta_g(v) = \lambda(g) v$. This proves surjectivity.

    Suppose now $(\al^{(i)} = \Ad{v^{(i)}}, \beta^{(i)} = \{ \Ad{U^{(i)}(g)} \}_{g \in G}, \caA^{(i)} = \End(\HH^{(i)}))_{i = 1, 2}$ are two  0-dimensional symmetric entanglers with the same index $\ind(\al^{(1)}) = \ind(\al^{(2)}) = [\lambda]$. That is, $\beta^{(i)}_g(v^{(i)}) = \lambda(g) v^{(i)}$ for all $g \in G$ and $i = 1, 2$.

    By stacking with trivial representations of $G$ of appropriate dimensions and extending $v \rightarrow v \otimes \I$ we can assume without loss of generality that $\HH^{(1)}$ and $\HH^{(2)}$ have the same dimension. Further stacking with the regular representation of $G$ and extending $v \rightarrow v \otimes \I$ we get from Lemma \ref{lem.isomorphism-projective-reps} that there is a unitary isomorphism $\Phi : \HH^{(1)} \otimes \C[G] \rightarrow \HH^{(2)} \otimes \C[G]$ which intertwines the symmetries $\beta^{(1)} \otimes \beta^{\rm{reg}}$ and $\beta^{(2)} \otimes \beta^{\rm{reg}}$. This means that, after stacking, we may assume without loss of generality that our 0-dimensional symmetric entanglers act on the same Hilbert spaces $\HH^{(1)} = \HH^{(2)} = \HH$ with the same $G$-actions $\beta^{(1)} = \beta^{(2)} = \beta$.

    In that case, note that the unitary $V := v^{(2)} (v^{(1)})^*$ is symmetric : $\beta_g(V) = V$ for all $g \in G$. Moreover, $V v^{(1)} = v^{(2)}$, in other words, $\al^{(2)} = \gamma \circ \al^{(1)}$ where $\gamma = \Ad{V}$ is a FDQC with symmetric gate $V$. That is, $\al^{(1)}$ and $\al^{(2)}$ are equivalent as 0-dimensional symmetric entanglers. Since we used stacking to reduce to the case where both entanglers act on the same Hilbert space with the same $G$-action, we conclude that any $\al^{(1)}$ and $\al^{(2)}$ with $\ind(\al^{(1)}) = \ind( \al^{(2)})$ are stably equivalent. This proves the Lemma.
\end{proof}





\begin{lemma} \label{lem.equivalence-block-partitioned-1d}
    Any two block partitioned symmetric entanglers in 1d are stably equivalent.  
\end{lemma}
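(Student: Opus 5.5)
We show that any block partitioned symmetric entangler $(\alpha,\beta,\AA)$ in 1d is stably equivalent to $\id$. The argument mirrors the Eilenberg–Mazur swindle of Lemma \ref{lem.triviality-striped-symmetric-entanglers}, one dimension lower: the $H^2$-valued indices of strips are replaced by $H^1$-valued charges of single sites. The $0$-dimensional entanglers of Lemma \ref{lem.classification of 0d symmetric entanglers} play the role of the 1d entanglers $\alpha^{[\omega]}$.

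First we reduce to the factorized form described above. Regrouping each block $I_a$ into a single site (allowed, since the blocks have bounded size), we have $\alpha=\otimes_i \Ad{v_i}$ with $\beta_g(v_i)=\lambda_i(g)v_i$. Each $\Ad{v_i}$ is then a 0d symmetric entangler with $\ind(\Ad{v_i})=\lambda_i\in H^1(G,U(1))$, which is finite. For each $\lambda\in H^1(G,U(1))$, Lemma \ref{lem.classification of 0d symmetric entanglers} gives a fixed 0d entangler $(\alpha^{[\lambda]},\beta^{\rm reg},\End(\C[G]))$ of index $\lambda$. These form a finite family, so all dimensions are uniformly bounded.

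Next we run the swindle on the sites $i\geq 1$. Set $\omega_i=\lambda_i\lambda_{i-1}\cdots\lambda_1$ for odd $i$ and $\omega_i=\omega_{i-1}^{-1}$ for even $i$. At each site $i\ge1$, stack two copies of $(\End(\C[G]),\beta^{\rm reg})$ carrying $\alpha^{[\omega_i]}\otimes\alpha^{[\omega_i^{-1}]}$. This stacked entangler has trivial index, so by Lemma \ref{lem.classification of 0d symmetric entanglers} it is equivalent to $\id$ via a single symmetric gate (after a further bounded stacking). Doing this site by site gives a depth-one circuit with symmetric gates, so the stacked $\alpha$ is stably equivalent to $\alpha$ itself.

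We then regroup the sites in two ways. For even $i$, the pair $\{i,i+1\}$ carries $\Ad{v_i}\otimes\Ad{v_{i+1}}\otimes\alpha^{[\omega_i^{-1}]}\otimes\alpha^{[\omega_{i+1}^{-1}]}$. Its total charge is $\lambda_i\lambda_{i+1}(\omega_i\omega_{i+1})^{-1}=1$, as in the striped case, so it is equivalent (0d, on the merged site) to $\id$. For $i=1$, the factor $\Ad{v_1}\otimes\alpha^{[\omega_1^{-1}]}$ is trivial by itself, since $\omega_1=\lambda_1$. What remains is $\otimes_i\alpha^{[\omega_i]}$. Pairing $\{i,i+1\}$ with $i$ odd gives charge $\omega_i\omega_{i+1}=1$, so these pairs also trivialize.

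Each such move is a single layer of symmetric gates on blocks of size at most $2$, taken across all pairs simultaneously, with uniformly bounded stacking. The ranges and dimensions therefore stay bounded, and the composite is a genuine FDQC with symmetric gates. The same procedure mirrored to the left half-line removes the sites $i\le 0$. Hence $\alpha\sim\id$, and any two block partitioned symmetric entanglers are stably equivalent.

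The only delicate point, though minor, is the bookkeeping. The auxiliary stacked systems must be chosen uniformly on every site, taking $\End(\C[G])^{\otimes 2}$ plus the regular-representation padding from Lemma \ref{lem.isomorphism-projective-reps} everywhere, even on sites where some factors act trivially. This is needed so that the result is a legitimate stacking by an on-site symmetry and the equivalences compose into bounded-depth circuits.
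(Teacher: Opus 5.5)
Your proof is correct and follows the paper's own argument. The paper likewise reduces to the factorized form $\alpha=\otimes_i \Ad{v_i}$ with on-site charges $\lambda_i\in H^1(G,U(1))$, then runs the Eilenberg--Mazur swindle of Lemma \ref{lem.triviality-striped-symmetric-entanglers} with sites in place of strips and Lemma \ref{lem.classification of 0d symmetric entanglers} in place of the 1d classification; you write out the bookkeeping it calls ``completely analogous'' (the $\omega_i$, both pairings, site $1$, uniform stacking), and the extra stacking you allow for is unnecessary, since a trivial-charge 0d entangler $\Ad{v}$ already has symmetric gate $v$.
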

    
\begin{proof}
    We prove that a block partitioned symmetric entangler $(\alpha, \beta, \AA)$ is stably equivalent to $(\id, \beta, \AA)$. As at the beginning of this section, assume without loss of generality that $\al$ is on-site s that its on-site restrictions are given by $\Ad{v_i}$ for unitaries $v_i \in \caA_i$ with $\beta_g(v_i) = \lambda_i(g) v_i$ for some 1-cocycles $\lambda_i$.
    
    We argue that the on-site charges $[\lambda_i] \in H^1(G, U(1))$ can be trivialized by invoking the Eilenberg-Mazur swindle argument, similarly as in the proof of Lemma \ref{lem.triviality-striped-symmetric-entanglers}. Comparing the proof of Lemma \ref{lem.triviality-striped-symmetric-entanglers} to the present context, we have that:
    \begin{enumerate}
        \item A strip $s_{I_j}$ is now a site $j \in \ZZ$. 
        
        \item The 1-cocycles $\lambda_j$ are analogous to the $H^2(G,U(1))$ indices attached to strips in the proof of Lemma \ref{lem.triviality-striped-symmetric-entanglers}.  

        \item In the proof of Lemma \ref{lem.triviality-striped-symmetric-entanglers} a crucial input was the complete classification of 1-dimensional symmetric entanglers by $H^2(G, U(1))$ (Lemma \ref{lem.classification-1d-symmetric-entanglers}). 
        The analogous facts in the present context are provided by the complete classification of 0-dimensional symmetric entanglers (Lemma \ref{lem.classification of 0d symmetric entanglers}).
        
    \end{enumerate}
    By using this dictionary, the proof of this Lemma is now completely analogous to the one of Lemma \ref{lem.triviality-striped-symmetric-entanglers} above.
\end{proof}

\subsection{$H^2(G,U(1))$ charges in 1d}

Let $(\beta_g = \otimes_i \Ad{U_i(g)}, \AA)$ be a range 0 LPS on a spin chain. The on-site actions $g \to U_i(g)$ are, in general, projective representations of $G$. The next Lemma states that the on-site $H^2(G,U(1))$ indices are not invariant with respect to stable classification of LPSs. 

\begin{lemma} \label{lem.equivalence-decoupled-1d}
Any two range $0$ LPSs in 1d are stably equivalent. 
\end{lemma}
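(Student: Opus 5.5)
We prove that a range $0$ LPS $(\beta_g=\otimes_i \Ad{U_i(g)},\AA)$, with $U_i$ projective representations carrying 2-cocycles $\mu_i$, is stably equivalent to an on-site symmetry. The proof copies the Eilenberg–Mazur swindle of Lemma \ref{lem.triviality-striped-symmetric-entanglers}, one dimension lower. Sites $i\in\ZZ$ now play the role of strips. The on-site classes $[\mu_i]\in H^2(G,U(1))$ play the role of the strip indices. The complete classification of 0-dimensional LPS (Lemma \ref{lem.classification-0d-LPS}) replaces the classification of 1d symmetric entanglers. Equivalence means conjugation by an FDQC, and stabilization means stacking on-site (linear) symmetries.

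\textbf{Step 1: a finite set of representatives.} For each class $[\omega]\in H^2(G,U(1))$, fix a representative projective representation $V_{[\omega]}$ on $\C[G]$, using the twisted regular representation from the proof of Lemma \ref{lem.classification-0d-LPS}. Since $H^2(G,U(1))$ is finite, these have uniformly bounded dimension.

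\textbf{Step 2: stack with cancelling ancillas.} Restrict attention to sites $i\ge 1$. Define $[\omega_i]$ by the same partial-sum recipe as in \eqref{eq.def-charges-odd}, \eqref{eq.def-charges-even}, with $\ind(\alpha_i)$ replaced by $[\mu_i]$. At each site $i$, stack the two range-0 factors $V_{[\omega_i]}$ and $V_{-[\omega_i]}$. Their tensor product has trivial class, so by Lemma \ref{lem.classification-0d-LPS} it is (after stacking with a regular representation and using Lemma \ref{lem.isomorphism-projective-reps}) intertwined with a linear representation by an on-site unitary. Hence this stacking is a legitimate stabilization: on-site conjugation is a depth-one FDQC, so the result is stably equivalent to stacking an on-site symmetry. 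Each site $i\ge1$ now carries the three classes $[\mu_i],[\omega_i],-[\omega_i]$.

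\textbf{Step 3: regroup and recombine.} First pair sites $(i,i+1)$ with $i$ even. On the pair, the factor $[\mu_i]\otimes[\mu_{i+1}]\otimes(-[\omega_i])\otimes(-[\omega_{i+1}])$ has total class $0$, by the relation $[\mu_i]+[\mu_{i+1}]=[\omega_i]+[\omega_{i+1}]$. Applying Lemma \ref{lem.classification-0d-LPS} to this finite block, a gate supported on the pair conjugates it to a linear action, after stacking a bounded regular ancilla. What remains is the configuration $[\omega_1],[\omega_2],\dots$. Next pair $(i,i+1)$ with $i$ odd. There $[\omega_i]+[\omega_{i+1}]=0$, so the same argument trivializes each such pair. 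Each round is a block-partitioned layer, so the two rounds together form a depth-two FDQC with uniformly bounded blocks and ancilla dimensions. Repeat for the left half-line, and conclude that $\beta$ is stably equivalent to an on-site symmetry. Since any two on-site symmetries are trivially stably equivalent, the lemma follows.

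\textbf{Main obstacle: uniformity.} The delicate point is making sure the gates and ancillas come from a finite list of choices, because the sites otherwise have arbitrary local dimensions $d_i\le d_{\max}$. This holds because the block data (local dimensions bounded by $d_{\max}\cdot|G|^{O(1)}$ and a class in the finite group $H^2$) ranges over finitely many types up to isomorphism. One also needs the exact cocycle, not just its class, to be matched, but this is handled by rephasing as in Lemma \ref{lem.classification-0d-LPS}.
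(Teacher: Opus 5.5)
Your proposal is correct and follows the paper's route. The paper's proof likewise runs the Eilenberg--Mazur swindle of Lemma \ref{lem.triviality-striped-symmetric-entanglers}, with sites in place of strips, the on-site classes $[\mu_i]$ in place of strip indices, and Lemma \ref{lem.classification-0d-LPS} in place of the 1d entangler classification; you supply more detail than the paper does. Two refinements: in Step 3 the linear action you conjugate to on a pair should be chosen to factor across sites $i$ and $i+1$ (e.g.\ the trivial representation on the class-zero factor tensored with a regular ancilla on one site), which Lemma \ref{lem.isomorphism-projective-reps} permits; and your uniformity worry is moot, since an FDQC only needs bounded block sizes and ancilla dimensions, both automatic here.
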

\begin{proof}
    We prove that any range 0 LPS $(\beta,\AA)$ is equivalent to an on-site symmetry. Assume that $\beta_g = \otimes_i \Ad{U_i(g)}$ and denote by $[\mu_i] \in H^2(G,U(1))$ the projective class of the projective representation $U_i$. The proof now proceed by applying the the Eilenberg swindle argument in the same way as in the proof of Lemmas \ref{lem.triviality-striped-symmetric-entanglers} and \ref{lem.equivalence-block-partitioned-1d}. In this case, 
    \begin{enumerate}
        \item A strip $s_{I_j}$ is a site $j \in \ZZ$;
        \item The $H^2(G,U(1))$ indices $[\mu_i]$ of the corresponding actions $U_i$ are analogous to the $H^2(G,U(1)$ indices attached to strips in the proof of Lemma \ref{lem.triviality-striped-symmetric-entanglers}. 
        \item In the proof of Lemma \ref{lem.triviality-striped-symmetric-entanglers} a crucial input was the complete classification of 1-dimensional LPS by $H^2(G, U(1))$ (Proposition \ref{thm.classification-1d-LPS}). 
        The analogous facts in the present context are provided by the complete classification of 0-dimensional LPSs (Lemma \ref{lem.classification-0d-LPS}).
    \end{enumerate}
Now the proof of Lemma \ref{lem.triviality-striped-symmetric-entanglers} can be repeated with the given adjustments to conclude that $(\beta,\AA)$ is stably equivalent to an on-site symmetry.
\end{proof}

\subsection{Equivalence of product $G$-states}

The following Lemma follows as a corollary of \cite[Theorem 4.2]{de_o_carvalho_classification_2025}, which again is an application of an Eilenberg swindle argument. The referenced proof is formulated in 1d, and such a result can be used to similarly trivialize 2d product $G$-states, which can be seen as an infinite stacking of 1d product $G$-states. 

\begin{lemma} \label{lem.equivalence-product-g-states}
    Any two product $G$-states are stably equivalent. 
\end{lemma}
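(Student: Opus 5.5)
We need to prove Lemma \ref{lem.equivalence-product-g-states}: any two product $G$-states are stably equivalent. Here a product $G$-state is $(\phi,\beta,\AA)$ with $\phi=\otimes_j\langle\phi_j|\cdot|\phi_j\rangle$ a $G$-invariant product state, so that $U_j(g)\ket{\phi_j}=q_j(g)\ket{\phi_j}$ for characters $q_j\in\hom(G,U(1))$. The plan is to show that every such state is stably equivalent to a special product $G$-state on a trivial system. Transitivity then gives the claim.

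\textbf{Step 1: reduce to on-site charges.} The only on-site invariant of $(\phi,\beta,\AA)$ is the charge $q_j$. We first stack with a special product $G$-state of the form $\ket{e}$ in the regular representation $\C[G]$; more precisely, the invariant vector $\sum_h\ket h$. At each site $j$ the symmetric gates can then rotate $\ket{\phi_j}\otimes(\text{invariant vector})$ to a fixed reference vector of charge $q_j$ inside $\HH_j\otimes\C[G]$. The point is that two vectors carrying the same one-dimensional representation are related by a unitary that commutes with $U_j\otimes\rho$: extend the partial isometry between the two charge-$q_j$ lines symmetrically, using Lemma \ref{lem.isomorphism-projective-reps}, or simply work within the isotypic component. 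These single-site symmetric gates form a depth-one symmetric FDQC.

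So we may assume the state is a product of reference vectors $\ket{q_j}$, where each $\ket{q_j}$ lies in a copy of $\C[G]$ and spans the $q_j$-line, namely $\sum_h \overline{q_j(h)}\ket h$.

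\textbf{Step 2: pump charges away by a swindle.} This is the main obstacle, because a single site's charge cannot be removed by symmetric gates acting on that site alone. We therefore treat the 2d lattice as a stack of 1d rows (or directly as 1d via a snake ordering) and mimic Lemma \ref{lem.triviality-striped-symmetric-entanglers}, with $H^1(G,U(1))$ charges in place of $H^2$ indices. We stack with auxiliary rows carrying product states whose site charges are $\omega_i$ and $-\omega_i$, chosen as partial sums $\omega_i=q_i+q_{i-1}+\dots$ along the chain in the same pattern as in that proof. This auxiliary state is itself stably equivalent to a special product state: pairs $(\omega,-\omega)$ on the same site combine to total charge zero and are rotated to an invariant vector by an on-site symmetric gate.

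We then regroup adjacent pairs of sites, first even–odd and then odd–even. Within each pair the total charge matches the target, so a two-site symmetric gate maps one charge configuration to the other. This is the $0$d classification step: two vectors of equal total charge in the same representation are related by a symmetric unitary, again after stacking with $\C[G]$ if needed. Since $\hom(G,U(1))$ is finite, all on-site dimensions and gate ranges are uniformly bounded, so the resulting circuits are FDQCs with symmetric gates. This is exactly the reference \cite[Theorem 4.2]{de_o_carvalho_classification_2025} in 1d. For 2d we either apply it row by row (each row is a 1d product $G$-state, and disjoint rows compose to a depth-bounded circuit) or apply it to the half-plane construction.

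\textbf{Step 3: conclude.} After these steps every site carries charge zero, so the state is a special product $G$-state. Any two special product $G$-states are equivalent after stacking: each is stably equivalent to the invariant vector of a common stacked on-site representation, for instance by using the conjugate representation as in the proof of Lemma \ref{lem.injective-spt}, followed by on-site symmetric rotations. Hence all product $G$-states are stably equivalent.
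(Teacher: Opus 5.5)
Your proposal is correct and follows essentially the same route as the paper, which cites the 1d Eilenberg-swindle result \cite[Theorem 4.2]{de_o_carvalho_classification_2025} that pumps the on-site $\hom(G,U(1))$ charges to infinity, and then applies it row by row to the 2d state viewed as an infinite stack of 1d product $G$-states. Your Steps 1--3 spell this out, including the uniform bounds needed to assemble the rows into a single FDQC with symmetric gates; the $\pm\omega_i$ auxiliary layers are harmless but redundant, since after Step 1 two layers of two-site symmetric gates on the original chain already move the charges to partial sums and then cancel them.
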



\bibliographystyle{unsrturl}

\bibliography{refs}

\end{document}